\documentclass[11pt,a4paper]{article}
\usepackage[margin=1.25in]{geometry}
\usepackage{amsmath,amsfonts,amssymb,amsthm}
\usepackage{graphicx,color,enumerate}
\usepackage[font=small, format=hang, labelfont=bf]{caption}
\usepackage{subfig}
\usepackage[pdfpagelabels,colorlinks,allcolors=blue]{hyperref}

\newcommand{\arc}[1]{\smash{\stackrel{\frown}{#1}}}

\theoremstyle{plain}
\newtheorem{theorem}{Theorem}
\newtheorem{definition}{Definition}
\newtheorem{remark}{Remark}
\newtheorem{corollary}{Corollary}
\newtheorem{lemma}{Lemma}

\title{On a Conjecture of Lov\'asz on Circle-Representations \newline of Simple 4-Regular Planar Graphs\footnote{This draft is an update of~\cite{DBLP:journals/jocg/BekosR15}, where we fix a typo in the statement of Lemma~\ref{lem:gadget-realization-2}.}}

\author{Michael A.\ Bekos$^{1}$, Chrysanthi N.\ Raftopoulou$^{2}$
\\[0.1in]
$^1$\small Institute for Informatics, University of T{\"u}bingen, T{\"u}bingen, Germany\\
\texttt{\small bekos@informatik.uni-tuebingen.de}
\and
$^2$\small School of Applied Mathematical \& Physical Sciences,\\\small  National Technical University of Athens, Greece\\
\texttt{\small crisraft@mail.ntua.gr}
}
\date{}
          
\begin{document}

\maketitle

\begin{abstract}
Lov\'asz conjectured that every connected 4-regular planar graph $G$
admits a \emph{realization as a system of circles}, i.e., it can be
drawn on the plane utilizing a set of circles, such that the
vertices of $G$ correspond to the intersection and touching points
of the circles and the edges of $G$ are the arc segments among pairs
of intersection and touching points of the circles. In this paper,
we settle this conjecture. In particular, (a)~we first provide tight
upper and lower bounds on the number of circles needed in a
realization of any simple 4-regular planar graph, (b)~we
affirmatively answer Lov\'asz's conjecture, if $G$ is 3-connected,
and (c)~we demonstrate an infinite class of simple connected
4-regular planar graphs which are not 3-connected (i.e., either
simply connected or biconnected) and do not admit realizations as a
system of circles.
\end{abstract}

\section{Introduction}

All graphs considered in this paper are simple, finite and
undirected. Given a graph $G$, we denote by $V[G]$ and $E[G]$ the
set of vertices and edges of $G$, respectively. If $G$ is regular,
we denote by $d(G)$ its degree.

\begin{definition}
Let $G$ be a connected 4-regular planar graph. We say that $G$
admits a realization as a \emph{system of circles}, if it can be
drawn on the plane using a set of circles such that (see
Figures~\ref{fig:octahedron_1}-\ref{fig:octahedron_3}):
\begin{enumerate}
  \item The vertex set $V[G]$ is given by the intersection and touching points of the
  circles.
  \item The edge set $E[G]$ is defined by all circular arcs between the intersection
  and touching points of the circles.
\end{enumerate}
In the special case where intersection points are not allowed (i.e.,
there are only touching circles), we say that $G$ admits a
realization as a \emph{system of touching circles} (see
Figures~\ref{fig:octahedron_1} and \ref{fig:octahedron_2}).
\end{definition}

\begin{figure}[thb]
  \centering
  \begin{minipage}[b]{.22\textwidth}
    \centering
    \subfloat[\label{fig:fig_octahedron0}{}]
    {\includegraphics[width=\textwidth]{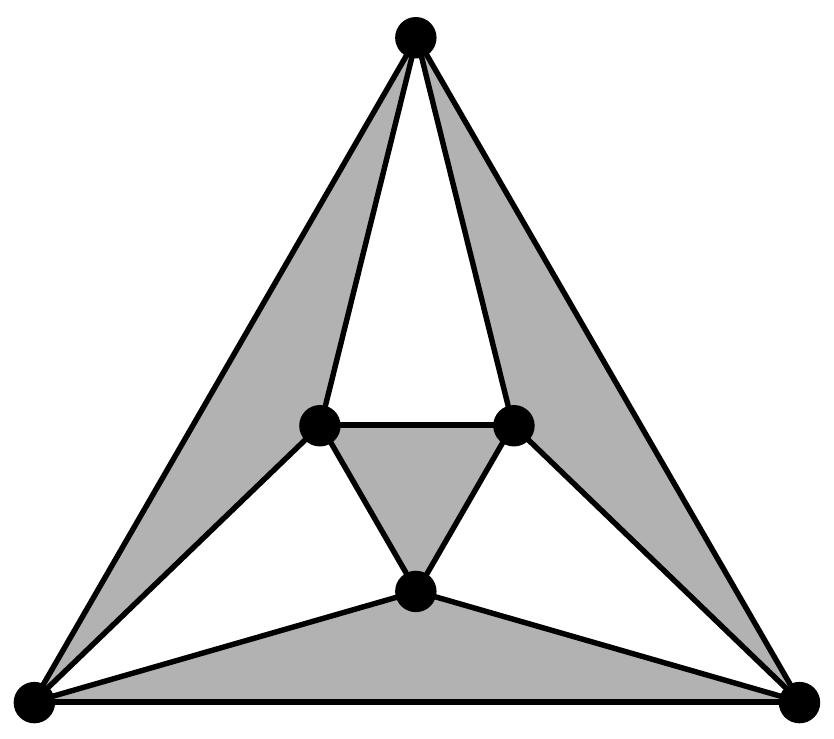}}
  \end{minipage}
  \hfill
  \begin{minipage}[b]{.22\textwidth}
    \centering
    \subfloat[\label{fig:octahedron_1}{}]
    {\includegraphics[width=\textwidth]{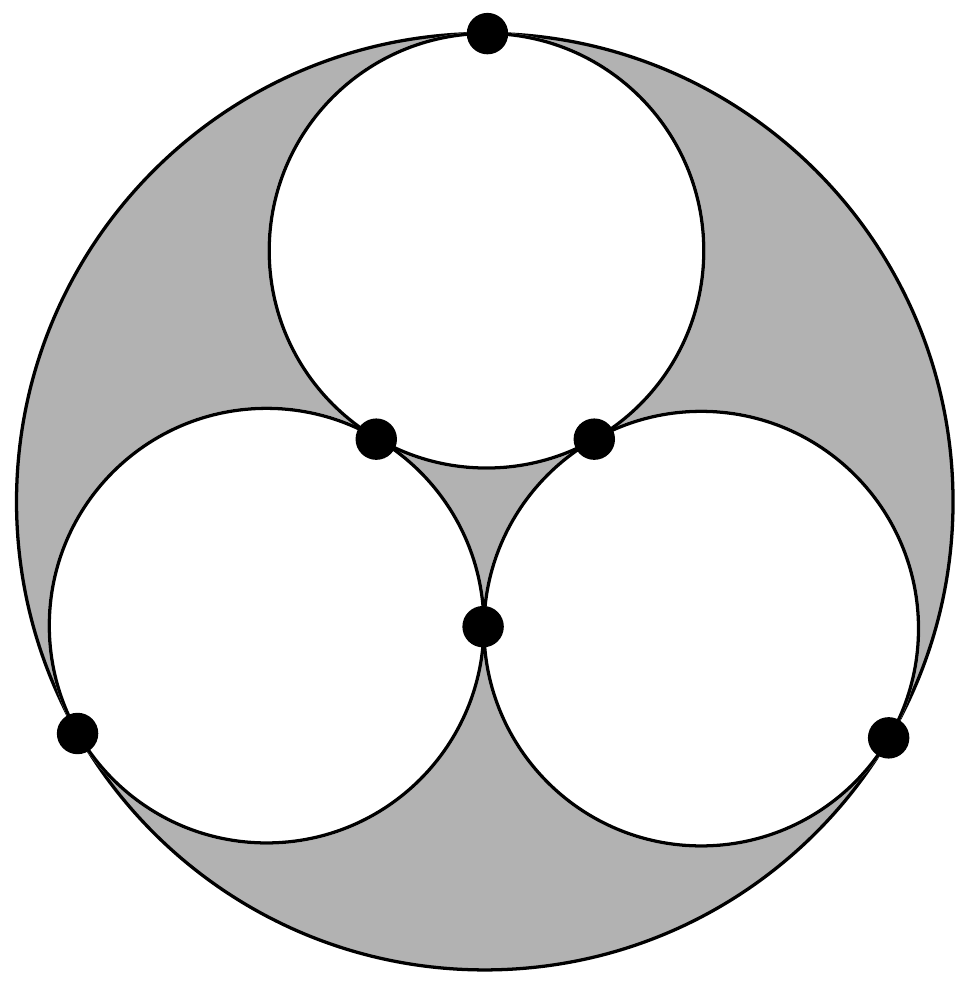}}
  \end{minipage}
  \hfill
  \begin{minipage}[b]{.22\textwidth}
    \centering
    \subfloat[\label{fig:octahedron_2}{}]
    {\includegraphics[width=\textwidth]{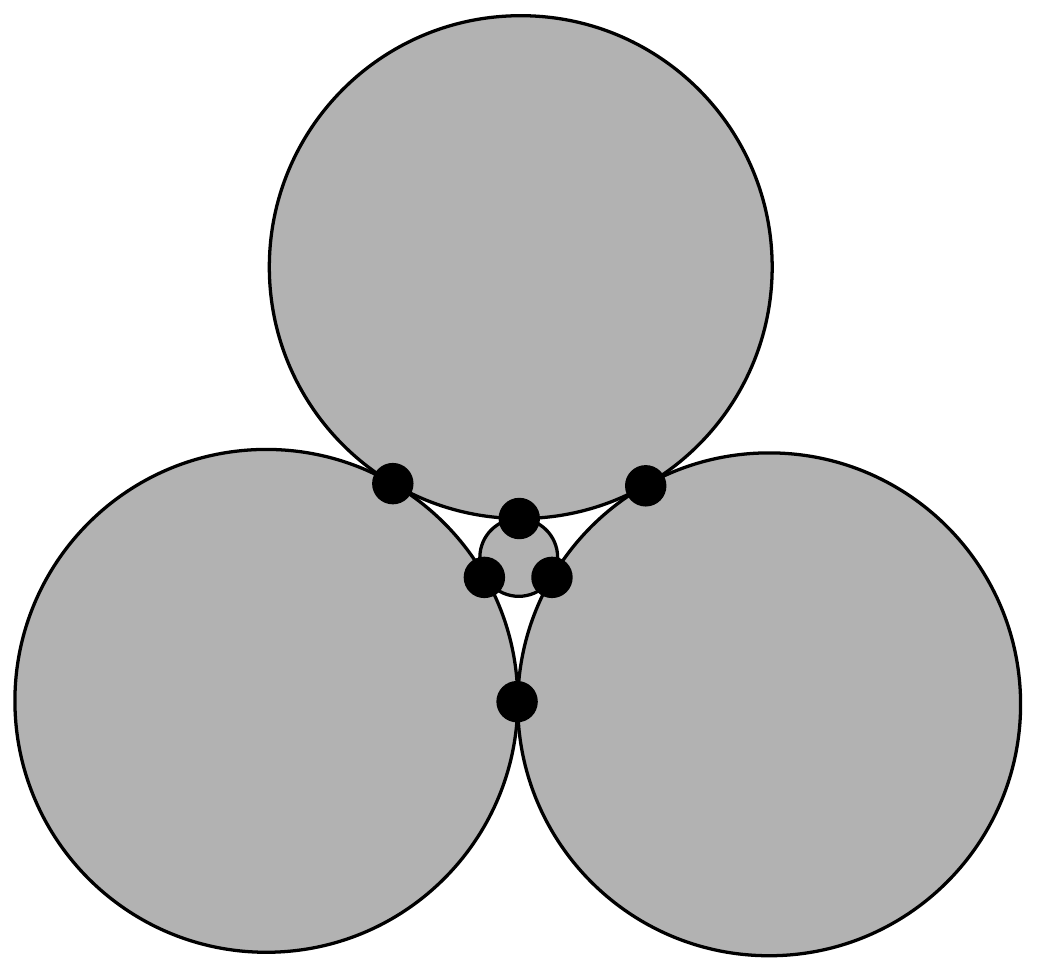}}
  \end{minipage}
  \hfill
  \begin{minipage}[b]{.22\textwidth}
    \centering
    \subfloat[\label{fig:octahedron_3}{}]
    {\includegraphics[width=\textwidth]{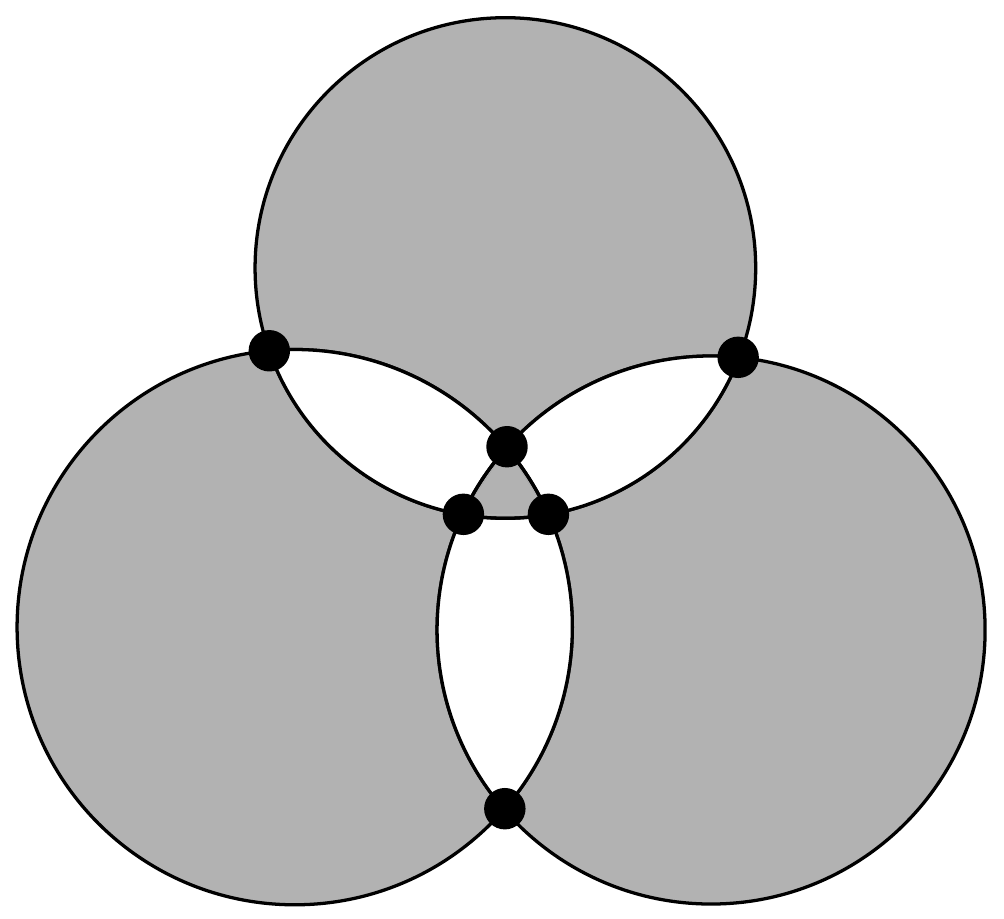}}
  \end{minipage}
  \caption{(a)~A straight-line drawing of the octahedron graph $G_{oct}$.
  (b)-(d)~Non-equivalent realizations of the octahedron graph $G_{oct}$ as system of circles.}
  \label{fig:octahedron}
\end{figure}

Lov\'asz \cite[pp.1175]{EAS},\cite[pp.426]{Le81} conjectured that
every simple connected 4-regular planar graph admits a realization
as a system of circles. To the best of our knowledge this conjecture
remained unanswered. Touching points are necessary, since if we use
only crossings, we have an even number of vertices, but there are
4-regular planar graphs with an odd number of vertices \cite{Ma79}.

In this paper, we prove that every 3-connected 4-regular planar
graph admits a realization as a system of touching circles. If the
input graph is not 3-connected, we demonstrate by an example that a
realization as a system of circles is not always possible.

This paper is structured as follows: Section~\ref{sec:related}
overviews previous work related to this paper. In
Section~\ref{sec:Bounds}, we present bounds on the number of circles
needed in a realization of a simple connected 4-regular planar graph
as a system of circles. In Section~\ref{sec:3-connected}, we prove
that every 3-connected 4-regular planar graph admits a realization
as a system of touching circles. In Sections~\ref{sec:connected} and
\ref{sec:biconnected}, we prove that there exist infinitely many
(either simply connected or biconnected) graphs that do not admit
realizations as system of circles. We conclude in
Section~\ref{sec:conclusions} with open problems and future work.

\section{Related Work}
\label{sec:related}

Closely related to the problem we study is the contact graph
representation problem. A \emph{contact graph} is a graph whose
vertices are represented by geometric objects and whose edges
correspond to two objects touching in some specific predefined way.
There is a rich literature related to various types of contact
graphs, dating back to 1936 in Koebe's theorem \cite{Ko36} which
states that any planar graph can be represented as a contact graph
of disks in the plane. Typical classes of such objects are curves,
line segments, disks, triangles and rectangles
(cf.~\cite{Hl01,Hl98}). Note that Koebe's theorem, which is also
known as \emph{circle-packing theorem}, is the starting point for
the proof of Section~\ref{sec:3-connected}. The main difference
between the problem of representing a graph as a contact graph of
disks and the problem we study is that in the former problem the
vertices correspond to disks and the edges are implied by the
contacts; in our problem, however, the vertices are the crossing
and/or touching points of the disks and the edges are the
arc-segments defined between them.

\emph{Lombardi drawings}, which attempt to capture some of the
visual aesthetics used by the American artist Mark Lombardi, are
also closely related to our problem. Two features that stand out in
Lombardi's work are the use of circular-arc edges and their even
distribution around vertices. Such even spacing of edges around each
vertex (also known as \emph{perfect angular resolution}) together
with the usage of circular arc edges, formally define Lombardi
drawings of graphs~\cite{DEGKN10,DEGKN13}. Chernobelskiy et
al.~\cite{CCK11} relax the perfect angular resolution constraint in
Lombardi drawings and describe functional force-directed algorithms,
which produce aesthetically appealing near-Lombardi drawings.

Connected 4-regular planar graphs form a well studied class of
graphs. Manca~\cite{Ma79} proposed four operations to generate all
connected 4-regular planar graphs from the octahedron graph.  As
noted by Lehel~\cite{Le81}, Manca's construction could not generate
all connected 4-regular planar graphs, however, an additional
operation could fix this problem. Broersma et.\ al~\cite{BD93}
showed that all 3-connected 4-regular planar graphs can also be
generated from the octahedron, using only three operations.

In the context of graph drawing, 4-regular planar graphs (which can
be viewed as a special case of max-degree 4 planar graphs) have a
long tradition, dating back to VLSI layouts and floor-planning
applications. The main goal in this context is to produce drawings
(referred to as \emph{orthogonal drawings}) in which each vertex
corresponds to a point on the integer grid and each edge is
represented by a sequence of horizontal and vertical line segments.
Pioneering work on orthogonal drawings was done in relation to
VLSI-design by Valiant \cite{Va81}, Leiserson~\cite{Lei80} and
Leighton~\cite{Leg81}. Later on the problem of constructing
orthogonal drawings of maximum degree 4 planar graphs was considered
by Tamassia \cite{Ta87}, Tamassia and Tollis \cite{TT89}, and Biedl
and Kant \cite{BK94}. The objectives here have been the minimization
of the used area, the total edge length, the total number of bends,
the maximum number of bends per edge, and others.

\section{Bounds on the Number of Circles Needed in a Circle Representation of a Simple 4-regular Planar Graph}
\label{sec:Bounds}

In this section, we present upper and lower bounds on the number of
circles needed in a realization of a simple connected 4-regular
planar graph as a system of circles. We also prove that these bounds
are tight, i.e., there exist infinitely many connected 4-regular
planar graphs that admit realizations as system of circles and use
the number of circles given by the two bounds.

\begin{lemma}
Let $G$ be a simple connected 4-regular planar graph on $n$
vertices. Then, the number of circles, say $c[G]$, that participate
in a realization of $G$ as system of circles, if one exists,
satisfies the following inequality: $$(1+\sqrt{1+4n})/2\leq c[G]\leq
2n/3$$ \label{lem:bounds}
\end{lemma}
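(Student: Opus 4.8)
The plan is to prove the two inequalities separately, each by an elementary double-counting argument that exploits the $4$-regularity and the simplicity of $G$. Fix a realization of $G$ as a system of circles, let $\mathcal{C}$ denote its set of circles, and write $c=c[G]=|\mathcal{C}|$. We may assume that every circle of $\mathcal{C}$ passes through at least one vertex of $G$: a circle that meets no other circle contributes neither a vertex nor an edge to the realization and can simply be deleted.

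Two structural observations drive both bounds. First, \emph{every vertex of $G$ lies on exactly two circles of $\mathcal{C}$}: at a vertex $v$ there are four edge-ends, each edge is a circular arc, and a circle passing smoothly through $v$ accounts for exactly two of these edge-ends, so the number of circles through $v$ equals $4/2=2$ (three or more circles through $v$ would force $\deg(v)\ge 6$). Second, \emph{every circle of $\mathcal{C}$ passes through at least three vertices of $G$}: if a circle $C$ carries $k_C\ge 1$ vertices, then these vertices partition $C$ into $k_C$ arcs, each of which is, by the definition of a realization, an edge of $G$; for $k_C=1$ this arc would be a self-loop, and for $k_C=2$ the two arcs would be parallel edges, both contradicting simplicity, so $k_C\ge 3$.

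For the upper bound I would double-count the incident pairs $(C,v)$ with $v\in V[G]$ lying on $C\in\mathcal{C}$. Counting over vertices and using the first observation gives exactly $2n$ such pairs; counting over circles and using the second observation gives $\sum_{C\in\mathcal{C}}k_C\ge 3c$. Hence $3c\le 2n$, that is, $c\le 2n/3$. For the lower bound I would instead count the vertices of $G$ by the pair of circles realizing them: by the first observation each vertex is the common point of a uniquely determined pair of circles, and any two distinct circles share at most two points (be it a crossing or a tangency), so $n\le 2\binom{c}{2}=c(c-1)$. Therefore $c^2-c-n\ge 0$, and since $c>0$ this gives $c\ge (1+\sqrt{1+4n})/2$, as claimed.

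The only genuinely delicate point is the second observation, $k_C\ge 3$: one must invoke the definition of a realization to guarantee that every arc between two consecutive vertices along a circle is really an edge of $G$, and one must treat a tangency point exactly as a crossing point in the edge-end count at a vertex. Once these are in place, the rest is pure bookkeeping, and I do not expect any further difficulty.
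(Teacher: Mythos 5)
Your proposal is correct and follows essentially the same route as the paper: both bounds rest on the same three facts (each vertex lies on exactly two circles, each circle carries at least three vertices by simplicity, and two circles share at most two points), with the upper bound from the incidence count $2n=\sum_C k_C\ge 3c$ and the lower bound from $n\le c(c-1)$. Your write-up merely spells out the edge-end counting and the self-loop/parallel-edge justifications in more detail than the paper does.
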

\begin{proof}
In general, there are certain restrictions concerning the number of
circles that participate in a realization of a graph as a system of
circles:

\begin{itemize}
  \item \emph{Two circles may have at most two vertices in common}:
  Two crossing points if they intersect, one touching point if they
  are tangent, or none if they are separated.
  \item \emph{There exist at least three vertices on every circle},
  since we consider only simple graphs.
  \item \emph{Every vertex belongs to exactly two circles}, since
  every vertex has degree 4.
\end{itemize}

From the latter two properties, it follows that in any realization
of $G$ as a system of circles, the number of vertices of $G$ defined
by all circles is at least $3c[G]/2$, which immediately implies the
desired upper bound, i.e, $c[G]\leq 2n/3$. On the other hand, if
every pair of circles defines exactly two vertices (i.e., every pair
of circles intersects), the corresponding realization of $G$ has the
minimum number of circles. However, in this case a total of
$c[G](c[G]-1)/2$ pairs of circles define at most $c[G](c[G]-1)$
vertices. Hence, it follows that $n \leq c[G](c[G]-1)$ or
equivalently that $(1+\sqrt{1+4n})/2\leq c[G]$.
\end{proof}

In the following, we prove that the bounds of Lemma~\ref{lem:bounds}
are tight.

\begin{lemma}
There exist infinitely many connected 4-regular $n$-vertex planar
graphs that admit realizations as system of $(1+\sqrt{1+4n})/2$ and
$2n/3$ circles. \label{lem:bounds_tight}
\end{lemma}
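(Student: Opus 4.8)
The plan is to exhibit two separate infinite families, one attaining the lower bound and one attaining the upper bound of Lemma~\ref{lem:bounds}, each arising from a natural circle configuration. For the lower bound, recall from the proof of Lemma~\ref{lem:bounds} that equality $c[G]=(1+\sqrt{1+4n})/2$ forces every two circles to cross in two points and no point to lie on three circles; I would therefore start from an arrangement of $c\ge 3$ unit circles whose centres are placed in a sufficiently small generic cluster around the origin, so that all $\binom{c}{2}$ pairs of circles cross, the two crossing points of each pair are almost antipodal on both of the two circles, and no point lies on three circles. Reading this arrangement as a plane graph $G_c$, every one of its $2\binom{c}{2}=c(c-1)$ crossing points has degree $4$, so $G_c$ is a $4$-regular planar graph with $n=c(c-1)$ vertices, and it is connected because one fixed circle meets all the others. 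It then remains to check that $G_c$ is simple: a double edge between $u$ and $v$ would need $u,v$ to be the two crossing points of some pair $\{i,j\}$ and to be consecutive along both circle $i$ and circle $j$, which is impossible because they are almost antipodal on each of the two circles while $c-2\ge 1$ further circles cross each of $i$ and $j$. Hence $G_c$ is a simple connected $4$-regular planar graph admitting, by construction, a realization with exactly $c=(1+\sqrt{1+4n})/2$ circles, and letting $c$ range over $\{3,4,\dots\}$ gives the first family (the case $c=3$ recovering $G_{oct}$).

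For the upper bound, equality $c[G]=2n/3$ forces every circle to contain exactly three vertices, each shared with exactly one other circle; equivalently, the circles are pairwise tangent or disjoint and each is tangent to exactly three others. I would therefore fix a simple $3$-connected cubic planar graph $H$ — for concreteness the $k$-prism $Y_k$ with $k\ge 3$, for which $|V(Y_k)|=2k$ and $|E(Y_k)|=3k$ — and apply the circle-packing theorem~\cite{Ko36} to obtain interior-disjoint disks, one per vertex of $H$, with two of them tangent exactly when the corresponding vertices are adjacent (perturbing slightly, if needed, so that no point lies on three circles). Since $H$ is cubic, each circle is tangent to precisely three others and hence carries exactly three vertices, and the plane graph drawn by these circles is exactly the line graph $L(H)$: its vertices are the tangencies, i.e.\ the edges of $H$, and two of them are joined by an arc, lying on the circle of their common endpoint in $H$, iff the corresponding edges of $H$ share an endpoint. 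Hence $L(H)$ is simple and connected (because $H$ is), $4$-regular (an edge of a simple cubic graph meets $2+2$ others), and planar (the circles draw it in the plane), with $n=|E(H)|=3|V(H)|/2$ vertices, and it is realized using $c=|V(H)|=2n/3$ circles. Taking $H=Y_k$ for all $k\ge 3$ yields the second family.

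The routine parts are the degree, vertex and circle counts; the two points that need care are the genericity arguments. On the lower-bound side I must ensure that the cluster of unit circles can be chosen with all pairs crossing, no three concurrent, and no pair of crossing points consecutive on both of its circles (this is what makes $G_c$ simple); on the upper-bound side I must ensure that the circle packing of $H$ realizes only the prescribed tangencies — in particular that no three packing circles pass through a common point and that the unbounded region introduces no spurious contact — so that the drawn graph is exactly $L(H)$. Both are handled by an arbitrarily small perturbation of the respective configuration, and I expect these to be the only real obstacles.
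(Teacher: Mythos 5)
Your proposal is correct. For the lower bound your construction is essentially the paper's: the authors likewise place $c$ equal circles with centres clustered close together (on a small regular $c$-gon) so that all $\binom{c}{2}$ pairs cross and no three circles are concurrent, giving $n=c(c-1)$; your extra verification that the arrangement graph is \emph{simple} (the two crossing points of any pair are near-antipodal and are separated on each of their two circles by the crossings with the remaining circles) is a welcome detail that the paper leaves implicit. For the upper bound you genuinely diverge: the paper exhibits one explicit picture --- an odd-length ``chain'' of equal touching circles inside an enclosing circle, arranged so that every circle carries exactly three touching points --- whereas you apply the circle packing theorem (Theorem~\ref{thm:circle-packing}) to an arbitrary $3$-connected cubic planar graph $H$ (the prisms, say) and identify the graph drawn by the resulting system of touching circles as the line graph $L(H)$, which is simple, connected, $4$-regular and planar with $n=|E(H)|$ vertices realized on $|V(H)|=2n/3$ circles. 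Your route is less elementary in that it invokes Koebe's theorem, which the paper only needs later in Section~\ref{sec:3-connected}, but it is more general and arguably more rigorous, replacing a figure-based claim by a clean combinatorial identification. One small simplification: your genericity worries on the upper-bound side are vacuous --- in a packing with pairwise disjoint interiors no three circles can pass through a common point, and Koebe's theorem produces tangencies exactly for the edges of $H$ --- so no perturbation is needed there.
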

\begin{proof}
In order to prove this lemma, it is enough to show that there exist
two classes of graphs that admit realizations as system of circles,
in which (a)~every pair of circles intersect and (b)~every circle
has exactly three vertices.

\begin{enumerate}[a.]
\item We aim to create a set of non-coincident circles all with the
same radius and all containing the same interior point. More
formally, let $c \geq 3$ be an arbitrary integer. Let also $C$ be an
auxiliary geometric circle of radius $R>0$ (refer to the dashed
circle of Fig.\ref{fig:lower_bound}). We proceed to draw $c$ circles
of the same radius $r>0$ centered at the vertices of a regular
$c$-gon inscribed at circle $C$, such that $r>R$ (see
Fig.\ref{fig:lower_bound}). It is not difficult to see that every
pair of circles intersects, since their centers are at distance less
than $2r$. One can also appropriately choose $r$, so that no three
circles pass through the same point. Since $c$ is arbitrarily
chosen, the class of graphs derived by the circle representations
corresponding to different values of integer $c$, $c \geq 3$, has
obviously the property that its members admit realizations in which
every pair of circles intersect.
\item In order to prove that there exist infinitely many graphs that admit
realizations as system of circles, in which every circle has exactly
three vertices, we follow a similar approach as in the previous case
(refer to Fig.\ref{fig:upper_bound}). Let $m >2$ be an odd integer
number. We proceed to draw a ``chain of circles'' consisting of $m$
circles of equal radius respectively, which touch with each other
and also touch the interior of an ``enclosing circle'', as
illustrated in Fig.\ref{fig:upper_bound}. The construction ensures
that every circle has exactly three vertices. Hence, the class of
graphs derived by the circle representations corresponding to
different values of $m >2$, has obviously the property that its
members admit realizations in which every circle has exactly three
vertices.
\end{enumerate}
\end{proof}

\begin{figure}[thb]
  \centering
  \begin{minipage}[b]{.4\textwidth}
    \centering
    \subfloat[\label{fig:lower_bound}{}]
    {\includegraphics[width=.9\textwidth]{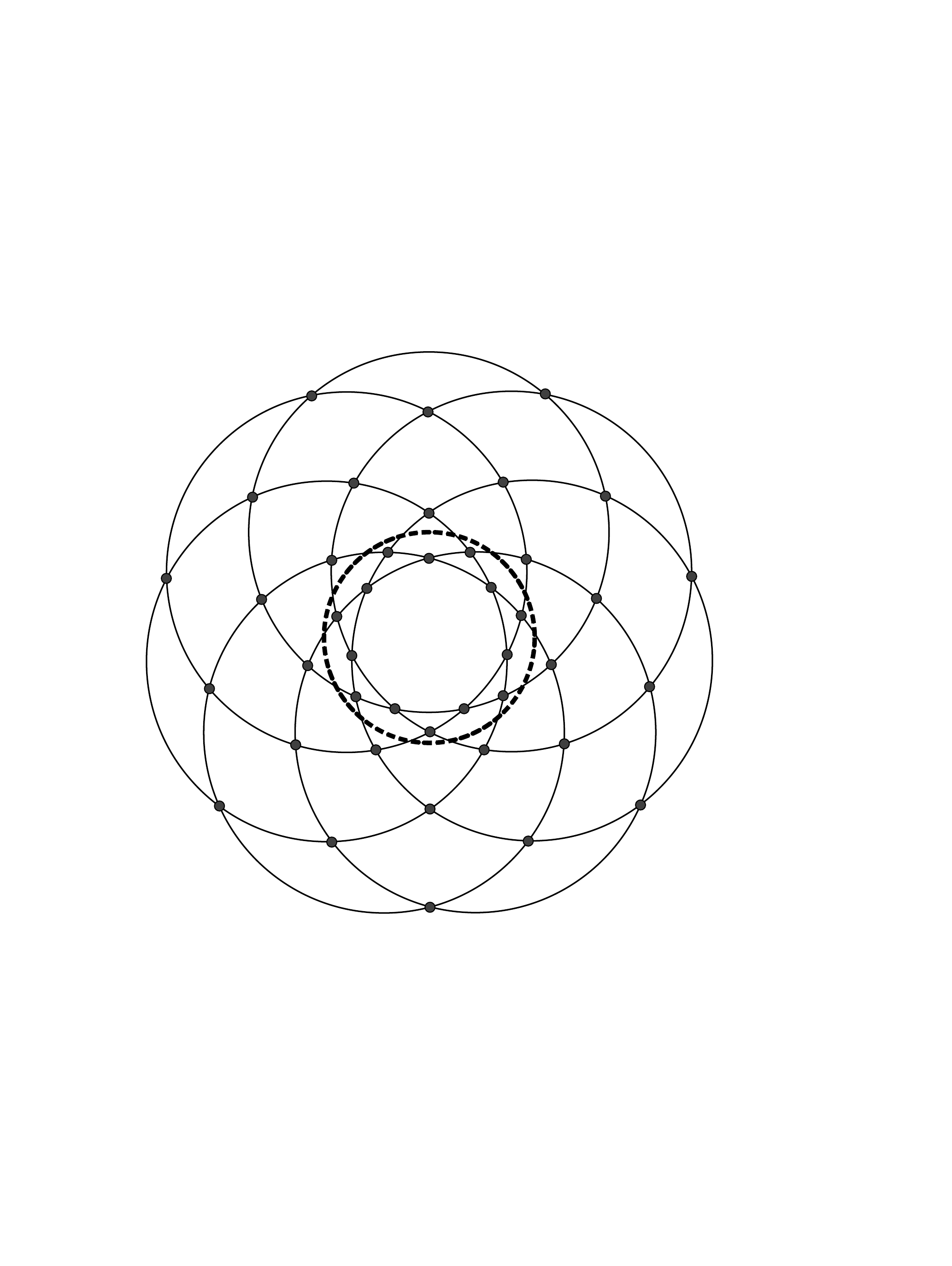}}
  \end{minipage}
  \begin{minipage}[b]{.55\textwidth}
    \centering
    \subfloat[\label{fig:upper_bound}{}]
    {\includegraphics[width=.9\textwidth]{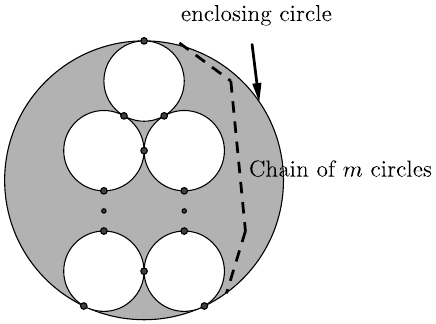}}
  \end{minipage}
  \caption{Circle representations in which (a)~every pair of circles intersects and (b)~every
  circle has exactly three vertices.}
  \label{fig:bounds}
\end{figure}

\section{The case of 3-connected 4-regular planar graphs}
\label{sec:3-connected}

We prove that a 3-connected 4-regular planar graph admits a
realization as a system of touching circles.  Our starting point is
the \emph{circle packing theorem} \cite{Ko36}. A \emph{circle
packing} is a ``connected collection'' of touching circles with
disjoint interiors. The \emph{intersection graph} (also known as
\emph{tangency} or \emph{contact graph}) of a circle packing is the
graph having a vertex for each circle and an edge for every pair of
circles that are tangent. A graph that admits a realization as a
system of touching circles is called \emph{coin graph} (see
Fig.\ref{fig:coin-graph}). Coin graphs are always simple, connected
and planar. The circle packing theorem states the following.

\begin{theorem}[Circle packing theorem~\cite{Ko36}]
For every simple connected planar graph $G$, there is a circle
packing in the plane with $G$ as its intersection graph.
\label{thm:circle-packing}
\end{theorem}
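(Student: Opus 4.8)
I would reduce the statement to planar triangulations, recast the existence of a packing of a triangulation as the solvability of a nonlinear ``angle-sum'' system for the circle radii, solve that system by a convexity argument, and finally develop the circles in the plane. \emph{Reduction.} The cases $n\le 3$ are immediate (one circle; two tangent circles; three mutually tangent circles; or, for the path on three vertices, a middle circle tangent to two circles that are disjoint from each other). For $n\ge 4$ I would use the standard fact that every connected simple planar graph $G$ is an \emph{induced} subgraph of some planar triangulation $T$: first augment $G$, by adding new vertices and edges but never an edge between two vertices of $G$, to a $2$-connected simple planar graph in which $G$ is still induced, and then triangulate each of its faces — now bounded by simple cycles — by inserting into it a hub vertex joined to all vertices on that cycle. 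If $T$ admits a circle packing $\mathcal{P}$, then the sub-collection of $\mathcal{P}$ indexed by $V[G]$ has disjoint interiors, is connected because $G$ is connected, and — because $G$ is induced in $T$ — has intersection graph exactly $G$; hence it suffices to prove the theorem for maximal planar graphs.

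\textbf{The radius system and its solvability.} Fix a plane embedding of a maximal planar graph $T$, with outer triangle $\{a,b,c\}$. For positive reals $(r_v)_{v\in V[T]}$, every triangular face $uvw$ of $T$ determines a Euclidean triangle with side lengths $r_u+r_v$, $r_v+r_w$, $r_w+r_u$, whose interior angle at the corner $v$ is the smooth function $2\arctan\sqrt{r_ur_w/(r_v(r_u+r_v+r_w))}$ of $(r_u,r_v,r_w)$ — strictly decreasing in $r_v$ and strictly increasing in $r_u$ and $r_w$. Writing $\theta_v(r)$ for the sum of these angles over all faces incident to $v$, I seek radii with $\theta_v(r)=2\pi$ for every interior vertex $v$; one may fix the three outer radii, or pass to the sphere and impose the condition at every vertex after a M\"obius normalization. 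To solve this system I would invoke the variational method of Colin de Verdi\`ere: in the coordinates $x_v=\log r_v$ there is a function $F$, defined on the open set of admissible logarithmic radii, whose gradient is the vector of angle deficits $(2\pi-\theta_v(r))_v$, and $F$ is strictly convex and coercive — it tends to $+\infty$ as the radius vector degenerates, i.e.\ as some face triangle collapses or some radius tends to $0$ or $\infty$ — so $F$ attains its minimum at an interior critical point, which is precisely the required solution. (Alternatively, Thurston's relaxation scheme, which repeatedly enlarges the circles whose angle sum exceeds $2\pi$ and shrinks those whose angle sum falls short, can be shown to converge to such a solution.)

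\textbf{Developing the circles.} Given a valid radius vector, I would construct the packing by laying out the circles: draw the three circles of a root face as mutually tangent circles of the prescribed radii, and then, crossing each interior edge of $T$ in turn, place the uniquely determined third circle of the neighbouring face. The equality $\theta_v=2\pi$ guarantees that travelling once around any interior vertex returns the layout to its starting configuration, so the developing map is globally well defined on the simply connected surface obtained from $T$ by adding back its outer face. A standard argument — the developing map is a proper local homeomorphism from a disk onto a simply connected planar region, hence a homeomorphism — then shows that the placed disks have pairwise disjoint interiors and are tangent exactly along the edges of $T$, i.e.\ form a circle packing with intersection graph $T$.

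\textbf{Main obstacle.} The crux is the solvability of the radius system: proving the strict convexity and, above all, the coercivity of the Colin de Verdi\`ere functional (equivalently, the convergence of Thurston's relaxation). The reduction to triangulations and the bookkeeping in the developing step are comparatively routine, although in the latter one must take some care to conclude that the developing map is an \emph{embedding} rather than merely an immersion.
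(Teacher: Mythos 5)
The paper does not prove this statement at all: it is Koebe's circle packing theorem, imported from the literature (\cite{Ko36}) and used as a black box in Section~\ref{sec:3-connected}. So there is no in-paper argument to compare against, and what you have written is a sketch of the classical theorem itself. Your outline follows the standard modern route (reduction to triangulations, the radius/angle-sum system, the Colin de Verdi\`ere functional or Thurston's relaxation, then the developing map), and its skeleton is sound: the half-angle formula $2\arctan\sqrt{r_ur_w/(r_v(r_u+r_v+r_w))}$ and its monotonicity are correct, and your insistence on embedding $G$ as an \emph{induced} subgraph of a triangulation is exactly the point a careless reduction misses --- a spanning triangulation would force spurious tangencies between non-adjacent vertices of $G$. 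That said, what you have is a roadmap rather than a proof: the two steps you yourself flag as the crux (strict convexity and, especially, coercivity of the functional in the log-radius coordinates, and the fact that the developing map is an embedding rather than merely an immersion) are asserted, not established, and the preliminary claim that any connected simple planar graph can be biconnected by adding only new vertices, while keeping $G$ induced and preserving planarity, also deserves a short explicit construction. Within the context of this paper none of this is a defect --- the authors take the theorem as known --- but if the goal were a self-contained proof, the coercivity argument is where essentially all of the remaining work lies.
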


\begin{figure}[htb]
  \centering
  \begin{minipage}[b]{.48\textwidth}
    \centering
    \subfloat[\label{fig:coin_graph_before}{}]
    {\includegraphics[width=.55\textwidth]{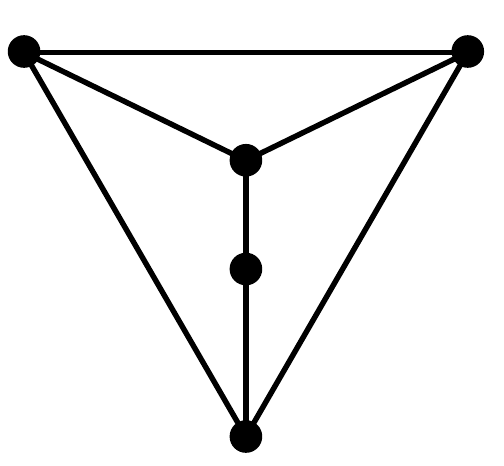}}
  \end{minipage}
  \begin{minipage}[b]{.48\textwidth}
    \centering
    \subfloat[\label{fig:coin_graph_after}{}]
    {\includegraphics[width=.55\textwidth]{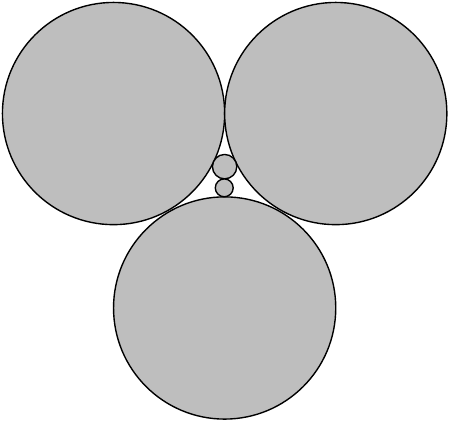}}
  \end{minipage}
  \caption{(a)~A planar graph. (b)~Its representation as a system of touching circles.}
  \label{fig:coin-graph}
\end{figure}

We will use Theorem~\ref{thm:circle-packing} on an auxiliary graph
that can be constructed based on any 3-connected 4-regular planar
graph.

It is well known that a connected planar graph is Eulerian if and
only if its dual is bipartite \cite[pp.172]{Bol98}. Let $G$ be an
embedded 4-regular planar graph. Since $G$ is obviously Eulerian,
its dual $G^*$ is bipartite. Hence, we can color the faces of $G$
using two colors, say gray and white, so that any two adjacent faces
are of different colors. For convenience, we assume that the outer
face of $G$ is always colored white. We proceed to construct a new
graph $IL(G)$ as follows. We associate a vertex of $IL(G)$ with
every gray face of $G$. We join two vertices of $IL(G)$ with an edge
if and only if the corresponding faces of $G$ have at least one
vertex in common (refer to the black colored graph of
Fig.\ref{fig:non3con}).

\begin{figure}[t]
  \centering
  \begin{minipage}[b]{.48\textwidth}
    \centering
    \subfloat[\label{fig:non3con}{}]
    {\includegraphics[width=\textwidth]{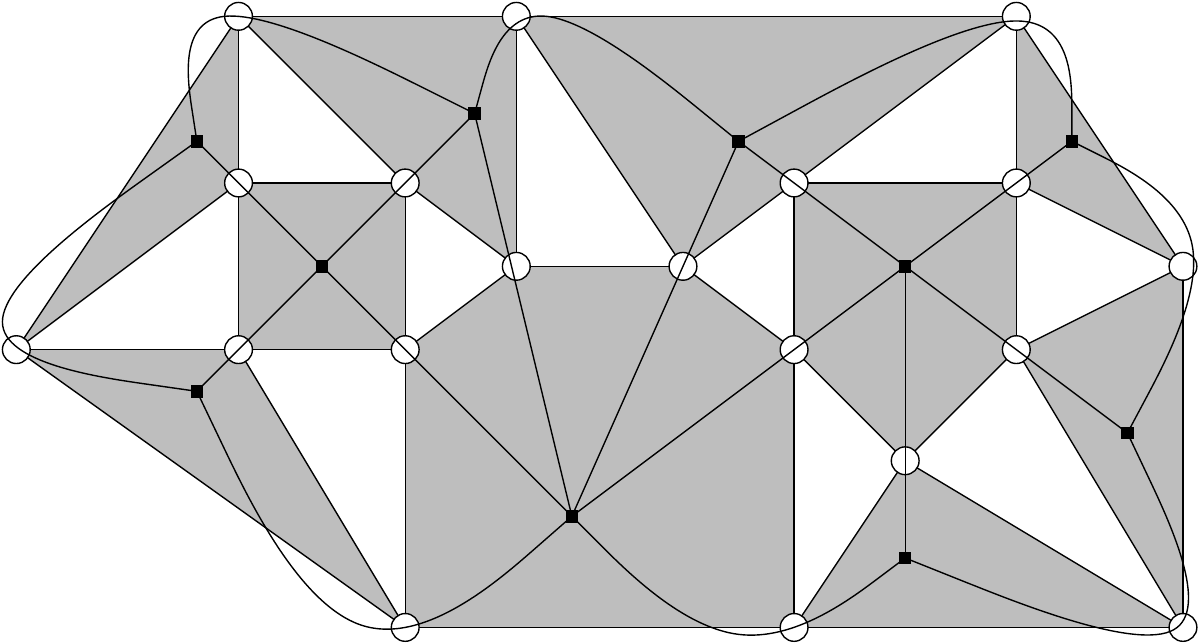}}
  \end{minipage}
  \begin{minipage}[b]{.48\textwidth}
    \centering
    \subfloat[\label{fig:non3connew}{}]
    {\includegraphics[width=\textwidth]{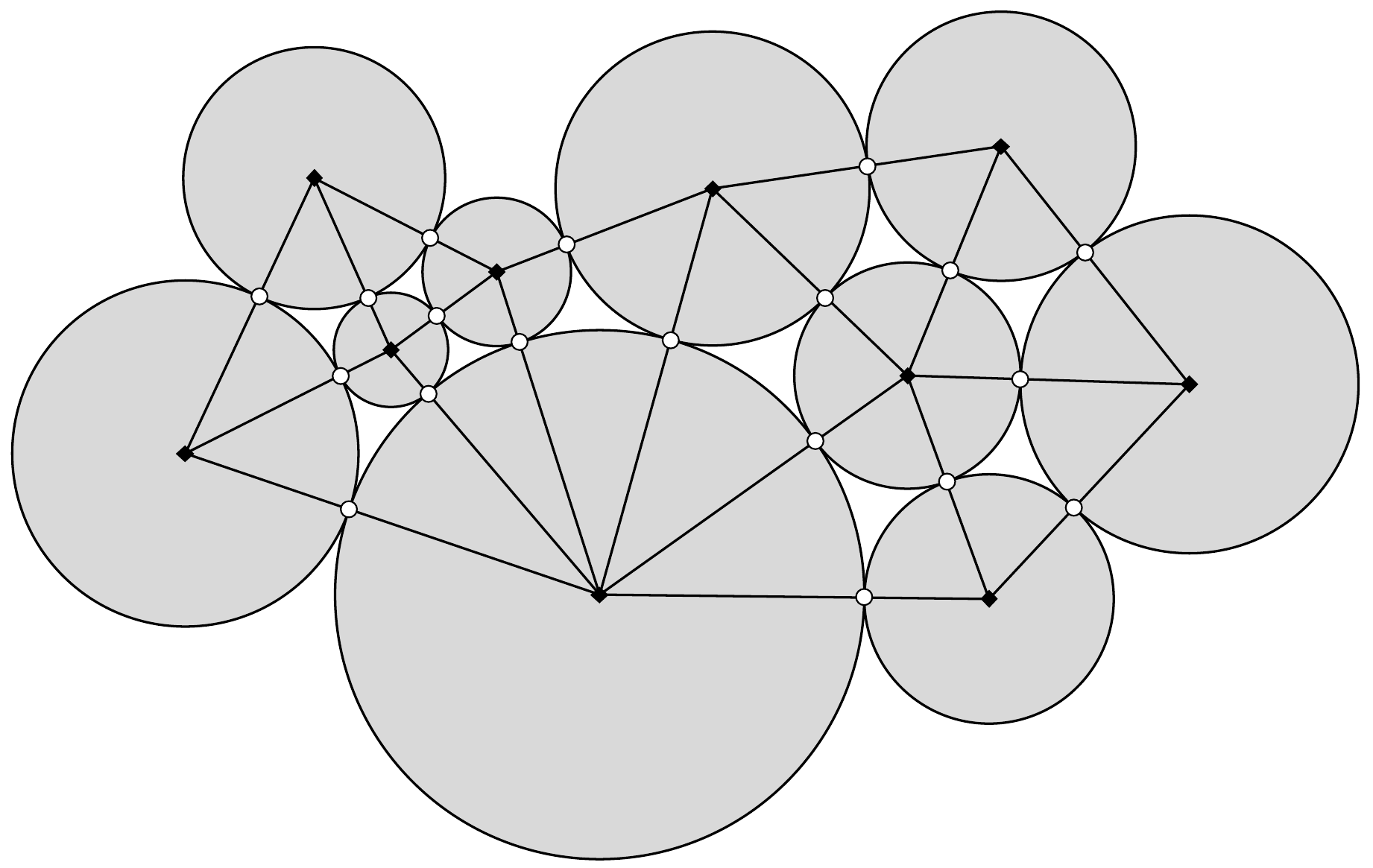}}
  \end{minipage}
  \caption{(a)~Constructing graph $IL(G)$. (b)~A realization of $G$ as system of circles.}
  \label{fig:non3con_case}
\end{figure}

\begin{lemma}
If $G$ is a 3-connected 4-regular planar graph, then $IL(G)$ is
simple. \label{lem:lg_simple}
\end{lemma}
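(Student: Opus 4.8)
The plan is to show that $IL(G)$ has neither loops nor parallel edges; equivalently, that \emph{(i)} the two gray faces incident to any vertex of $G$ are distinct, and \emph{(ii)} no two distinct gray faces of $G$ share two or more vertices of $G$. Throughout I would use that a $3$-connected graph is also $2$-connected, and that the gray/white coloring is a \emph{proper} $2$-coloring of the (bipartite) dual $G^*$, so two faces of the same color are never adjacent and in particular never share an edge of $G$.

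First I would dispose of loops. Since $G$ is $2$-connected, the boundary of every face is a simple cycle, so each face meets a given vertex $v$ in exactly one ``corner''. Hence the four faces incident to the degree-$4$ vertex $v$, listed in rotational order around $v$, are pairwise distinct; as consecutive faces around $v$ are adjacent they alternate in color, so the two gray faces at $v$ are among these four distinct faces and are therefore distinct. This gives \emph{(i)}.

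The heart of the argument is \emph{(ii)}, and this is where $3$-connectivity is essential. Suppose, for contradiction, that two distinct gray faces $f_1\neq f_2$ both have two distinct vertices $v,v'$ of $G$ on their boundaries. Choose an arc $\alpha_1$ from $v$ to $v'$ through the interior of $f_1$ and an arc $\alpha_2$ from $v'$ back to $v$ through the interior of $f_2$; since the interiors of $f_1$ and $f_2$ are disjoint and contain no vertex or edge of $G$, the union $\gamma=\alpha_1\cup\alpha_2$ is a Jordan curve meeting $G$ only in $\{v,v'\}$. I would then argue that each of the two open regions bounded by $\gamma$ contains a vertex of $G$: the boundary cycle $\partial f_i$ (a simple cycle of length at least $3$) is split by $v$ and $v'$ into two arcs whose relative interiors lie on opposite sides of $\gamma$, and since $G$ is simple at most one arc of $\partial f_1$, and at most one arc of $\partial f_2$, can be the single edge $vv'$; using moreover that $vv'$, if it is an edge at all, lies on exactly two oppositely-coloured face boundaries and hence on at most one of $\partial f_1,\partial f_2$, a short case analysis shows that in every case both sides of $\gamma$ pick up an internal vertex of one of these four arcs. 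Consequently every path in $G$ joining the two sides must pass through $v$ or $v'$, so $\{v,v'\}$ is a $2$-cut of $G$ — contradicting $3$-connectivity. Hence \emph{(ii)} holds and $IL(G)$ is simple.

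The main obstacle is the topological bookkeeping in \emph{(ii)}: verifying that the two arcs of each face boundary genuinely fall on opposite sides of $\gamma$, and handling the degenerate case in which $vv'$ itself is an edge lying on one of the two boundary cycles, so that the ``obvious'' internal vertex on one side is absent and one must instead extract a vertex from the other face's boundary. Once this case distinction is in place, the $2$-cut conclusion, and thus the contradiction with $3$-connectivity, is immediate.
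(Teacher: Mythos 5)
Your proof is correct and takes essentially the same route as the paper: a repeated adjacency between two gray faces yields a Jordan curve through their interiors meeting $G$ only in the two shared vertices, which therefore form a $2$-cut, contradicting $3$-connectivity. You are in fact more careful than the paper, which dismisses self-loops with ``treated similarly'' and asserts without your case analysis (on whether $vv'$ is an edge of one of the two face boundaries) that both sides of the curve contain a vertex of $G$.
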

\begin{proof}
Suppose that $G$ is 3-connected and assume for the sake of
contradiction that $IL(G)$ is not simple. W.l.o.g., we further
assume that $IL(G)$ contains a multiple edge, say a double edge
between $f$ and $g$, where $f,g \in V[IL(G)]$ (see
Fig.\ref{fig:l_g_simple}). The case where $IL(G)$ contains selfloops
is treated similarly. By definition, $f$ and $g$ correspond to gray
faces of $G$ that have exactly two common vertices, say $u,v \in
V[G]$. Then, $u$, $f$, $v$ and $g$ define a separating simple closed
curve which intersects $G$ at exactly two vertices (i.e., vertices
$u$ and $v$; see Fig.\ref{fig:l_g_simple}). Note that since $G$ is
simple there is at least one vertex of $G$ that lies in the interior
of this curve and one in its exterior. Hence, $G$ is not
$3$-connected, which implies the claimed contradiction.
\end{proof}

\begin{figure}[t]
  \centering
  \begin{minipage}[b]{.48\textwidth}
    \centering
    \subfloat[\label{fig:l_g_simple}{}]
    {\includegraphics[width=.65\textwidth]{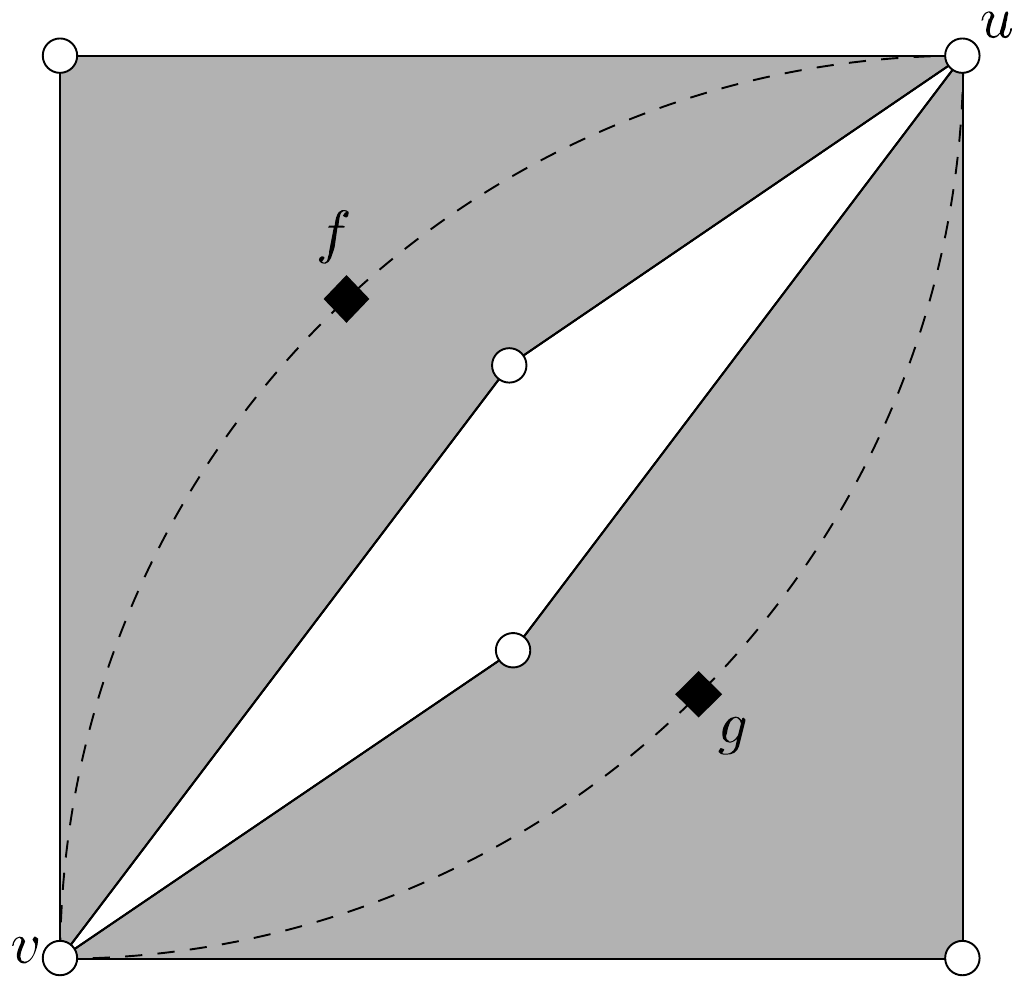}}
  \end{minipage}
  \begin{minipage}[b]{.48\textwidth}
    \centering
    \subfloat[\label{fig:l_g_non_simple}{}]
    {\includegraphics[width=\textwidth]{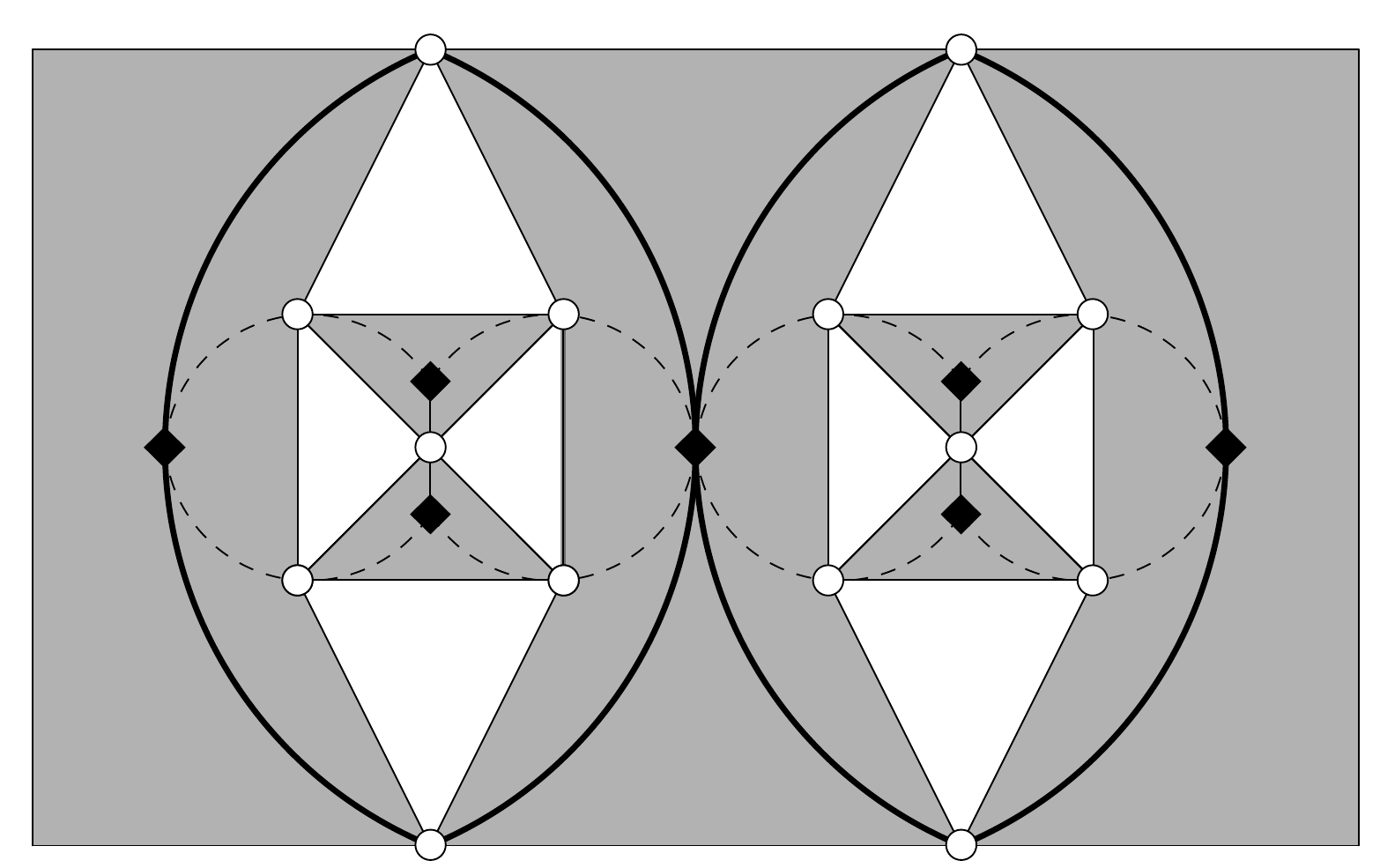}}
  \end{minipage}
  \caption{(a)~Vertices $u$, $f$, $v$ and $g$ define a separating simple closed curve.
  (b)~If $G$ is biconnected, then $IL(G)$ is not necessarily simple (refer to bold edges).}
  \label{fig:l_g}
\end{figure}

We are now ready to state the main theorem of this section.

\begin{theorem}
Every 3-connected 4-regular planar graph admits a realization as a
system of touching circles. \label{thm:3conmain}
\end{theorem}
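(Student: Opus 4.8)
The plan is to build the realization directly out of a circle packing of the auxiliary graph $IL(G)$. Since $IL(G)$ is simple by Lemma~\ref{lem:lg_simple}, connected (any two gray faces sharing a common white neighbour $w$ are linked, along $\partial w$, by the chain of gray faces that meet at consecutive vertices of $w$, and $G^*$ is connected), and planar, Theorem~\ref{thm:circle-packing} yields a circle packing $\mathcal{P}=\{C_f : f \text{ a gray face of } G\}$ whose tangency graph is $IL(G)$. I claim $\mathcal{P}$ itself, read as a system of touching circles, is a realization of $G$: put the vertex of $G$ that two adjacent gray faces $f,g$ share at the unique touching point of $C_f$ and $C_g$, and draw each edge of $G$ as the arc of the appropriate circle between its endpoints.

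For this to make sense I need $\mathcal{P}$ to realize not merely the abstract graph $IL(G)$ but the specific plane embedding $\mathcal{E}_0$ of $IL(G)$ that $G$ induces (place the vertex of $IL(G)$ for a gray face $f$ in the interior of $f$ and route each edge of $IL(G)$ through the vertex of $G$ that the two corresponding faces share; this is crossing-free because each vertex of $G$ lies on exactly two gray faces, hence carries exactly one edge of $IL(G)$). I would secure this by first triangulating $(IL(G),\mathcal{E}_0)$: into every non-triangular face insert a new apex vertex joined to all vertices on that face's boundary cycle — here one uses that $IL(G)$ is $2$-connected, which follows along the lines of Lemma~\ref{lem:lg_simple}. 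The result is a simple plane triangulation $\widetilde{T}$, hence $3$-connected, so by the circle packing theorem together with the uniqueness of plane embeddings of $3$-connected planar graphs, its packing realizes the unique embedding of $\widetilde{T}$; discarding the circles of the apex vertices, which lie in the interstices and whose removal leaves the remaining tangencies intact, produces a packing $\mathcal{P}$ of $IL(G)$ realizing exactly $\mathcal{E}_0$.

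Given such a $\mathcal{P}$, the verification is bookkeeping. Since $G$ is $4$-regular and its faces are $2$-coloured with the outer one white, the two gray faces in the rotation at a vertex $x$ of $G$ are non-consecutive there, and $x$ lies on exactly two gray faces; since $IL(G)$ is simple these two faces share only $x$, so $x$ is mapped to the unique touching point of the two corresponding circles, and this is a bijection from $V[G]$ onto the touching points of $\mathcal{P}$ (no point of a packing carries three circles). Dually, every edge of $G$ lies on the boundary of exactly one gray face $f$, and its arc on $C_f$ is well defined because, as $\mathcal{P}$ realizes $\mathcal{E}_0$, consecutive touching points along $C_f$ are the images of consecutive vertices of $\partial f$; hence these arcs are internally disjoint, exhaust $C_f$, and are in bijection with $E[G]$. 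Finally, at the point assigned to $x$ — a touching point of $C_f$ and $C_g$ — the four incident arcs are the two arcs of $C_f$ and the two arcs of $C_g$ ending there, i.e.\ the two edges of $\partial f$ and the two edges of $\partial g$ at $x$, which are precisely the four edges of $G$ at $x$ in their correct rotational order. Therefore $\mathcal{P}$ is a realization of $G$ as a system of touching circles.

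The main obstacle is exactly the point of the second paragraph: Theorem~\ref{thm:circle-packing} asserts only the existence of \emph{some} circle packing with tangency graph $IL(G)$, whereas we need one that reproduces the embedding $\mathcal{E}_0$ inherited from $G$ — an arbitrary packing of $IL(G)$ would assemble along its arcs into the medial graph of a possibly different embedding of $IL(G)$, hence possibly not into $G$. The triangulation-and-deletion argument above — or, where it applies, $3$-connectivity of $IL(G)$ combined with uniqueness of embeddings — is what removes this obstacle; the rest, including the two counting bijections and the rotation check, is routine.
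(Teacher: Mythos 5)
Your proof is correct and follows essentially the same route as the paper: construct $IL(G)$, invoke its simplicity (Lemma~\ref{lem:lg_simple}) and the circle packing theorem, and read $G$ off as the touching points and arcs of the packing. The one place you go beyond the paper is your second paragraph: the paper applies Theorem~\ref{thm:circle-packing} in its abstract form and leaves implicit that the packing realizes the plane embedding of $IL(G)$ induced by $G$, whereas you make this explicit and secure it via the triangulate-then-delete-apexes argument (modulo your asserted but unproved claim that $IL(G)$ is $2$-connected, which does need the promised argument since the apex insertion requires face boundaries to be simple cycles). This is a genuine subtlety --- the graph read off a packing is the medial graph of the packing's embedding of $IL(G)$ --- so your treatment is a strengthening of, rather than a deviation from, the paper's proof.
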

\begin{proof}
By Lemma~\ref{lem:lg_simple}, $IL(G)$ is simple. So, we can apply
Theorem~\ref{thm:circle-packing} on it. This leads to a drawing in
which each gray-colored face of $G$ corresponds to a circle and two
circles meet at a point if and only if the corresponding faces are
vertex-adjacent (see Fig.\ref{fig:non3connew}). By construction of
$IL(G)$ we have that the vertices of $G$ are the points where
circles touch. Also, every circle contains as arcs all the edges of
the gray face it corresponds to. Since the gray-colored faces
contain all edges of the graph, it follows that the constructed
representation is indeed a system of touching circles for $G$.
\end{proof}

\section{The case of connected 4-regular planar graphs}
\label{sec:connected}

In this section, we will demonstrate an infinite class of connected
4-regular planar graphs that do not admit a realization as a system
of circles. The base of our constructive proof is the octahedron
graph $G_{oct}$ (see Fig.\ref{fig:fig_octahedron0}), which is a
3-connected 4-regular planar graph and hence, by
Theorem~\ref{thm:3conmain}, it admits a realization as a system of
touching circles. Note that in the case where $G$ is not
3-connected, $IL(G)$ is not necessarily simple (for an example refer
to Fig.\ref{fig:l_g_non_simple}). Hence,
Theorem~\ref{thm:circle-packing} cannot be applied directly.

Assume now that a graph $G$ admits a realization, say $R$, as a
system of circles. In general, $R$ is not uniquely defined, as one
can construct infinitely many realizations of $G$ as a system of
circles based on $R$, e.g., by scaling $R$ or by translating $R$
over the plane. With the same spirit, if we slightly change the
radii of the circles or even the centers of the circles of the
realization of the octahedron graph depicted in
Fig.\ref{fig:octahedron_3}, then we can obtain a new realization of
the octahedron graph (again as a system of three mutually crossing
circles), which will be more or less ``\emph{equivalent}'' to the
one depicted in Fig.\ref{fig:octahedron_3}. The same actually holds,
if we simply change the triple of the vertices delimiting its
outerface (as the octahedron graph is symmetric). Intuitively, two
realizations $R$ and $R'$ of $G$ are equivalent if there is a
bijective function from the faces of $R$ to the faces of $R'$, which
maps each face of $R$ to a face of $R'$ of the same \emph{shape},
where the shape of a face is determined by the convexity of the arcs
it consists of, i.e., towards to or away from the interior of the
face.

More formally, given a realization $R$ of $G$ as a system of
circles, first we smooth out $G$ by eliminating vertices of degree
two, and then construct the dual, say $G^*_R$, and orient its edges
as follows: For an edge $e \in E[G]$ incident to two faces $f_e$ and
$f'_e$ of $R$, edge $(f_e,f'_e) \in E[G^*_R]$ is oriented from $f_e$
to $f'_e$ if and only if every straight-line segment with endpoints
on the arc of $R$ corresponding to $e$ does not lie entirely in
$f'_e$. Otherwise, $(f_e,f'_e)$ is directed from $f'_e$ towards
$f_e$. Note that for any circle of $R$, say $c$, a face of $G$ lies
either in the interior of $c$ or in its exterior. Then, orienting
edge $(f_e,f'_e) \in E[G^*_R]$ from $f_e$ to $f'_e$ implies that
$f_e$ lies in the interior of $c$ and $f'_e$ in its exterior, where
$c$ is a circle of $R$ and $e$ is an arc-segment of $c$. We say that
two realizations $R$ and $R'$ of $G$ are \emph{equivalent} if and
only if $G^*_R$ and $G^*_{R'}$ are isomorphic, that is, there is a
bijective function $g:V[G^*_R]\rightarrow V[G^*_{R'}]$ such that
$(\overrightarrow{v, v'}) \in E[G^*_R]$ if and only if
$(\overrightarrow{g(v),g(v')}) \in E[G^*_{R'}]$. Observe that in the
aforementioned definition degree two vertices affect neither the set
of circles that participate in the realization nor their relative
positions. This is the reason why they are omitted. The following
lemma describes all non-equivalent realizations of $G_{oct}$ as a
system of circles.

\begin{lemma}
The octahedron graph has exactly three non-equivalent realizations
as a system of circles, which are shown in
Figures~\ref{fig:octahedron_1}-\ref{fig:octahedron_3}.
\label{lem:octahedron}
\end{lemma}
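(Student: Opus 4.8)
The plan is to combine Lemma~\ref{lem:bounds} with a finite case analysis, using that the circles of a realization partition the edge set of $G_{oct}$. Since $G_{oct}$ has $n=6$ vertices, Lemma~\ref{lem:bounds} gives $c[G_{oct}]\in\{3,4\}$. In both cases I will exploit the facts behind the proof of Lemma~\ref{lem:bounds}: each edge of $G_{oct}$ is an arc of exactly one circle and each vertex lies on exactly two circles, so the underlying cycles of the circles decompose $E[G_{oct}]$ with exactly two cycles through every vertex, and two circles share at most two vertices. Having fixed such a decomposition one must still pick, for each circle, which of the two regions it bounds is its ``interior''; the admissible choices are constrained by three geometric facts: each circle splits the faces of $G_{oct}$ into those inside it and those outside it; the set of faces lying outside \emph{all} circles is nonempty (it carries the outer face); and two tangent circles have nested or interior-disjoint disks, whereas two crossing circles have overlapping but non-nested disks. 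Thus the scheme is: (i)~enumerate the admissible cycle-decompositions; (ii)~for each, determine the realizable interior-assignments; (iii)~count the results up to equivalence, for which the isomorphism invariant (number of sources of $G^*_R$, number of sinks of $G^*_R$) will be convenient.

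For $c=4$ every cycle has length $\ge 3$ and $4\cdot 3=12$, so all four cycles are triangles. Writing $G_{oct}\cong K_{2,2,2}$, a short count of triangle/vertex-class incidences shows the only edge-decompositions into four triangles are the two colour classes of the (unique) proper face $2$-colouring of $G_{oct}$, and these are swapped by an automorphism; so there is essentially one decomposition. Fixing it, each of the four triangular circles separates one face of $G_{oct}$ from the other seven, so its interior is one of these. At most one circle may take the seven-face region as interior (two such circles would be tangent but have overlapping non-nested disks), and both remaining patterns -- ``no circle inverted'' and ``one circle inverted'' -- are realized: they give the two Descartes configurations of four mutually tangent circles (four pairwise externally tangent circles, or one circle enclosing three mutually tangent ones). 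Both realize $G_{oct}$, their invariants are $(4,4)$ and $(3,2)$, so this case yields the two non-equivalent realizations of Figures~\ref{fig:octahedron_1} and~\ref{fig:octahedron_2}.

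For $c=3$ a short length/sharing argument forces the decomposition to consist of three $4$-cycles pairwise sharing exactly two vertices. Enumerating the $4$-cycles of $G_{oct}$, one shows every such decomposition contains at least one of the three ``equatorial'' $4$-cycles (the $K_{2,2}$'s spanned by two vertex-classes), so the decomposition is either the equatorial one (all three equatorial cycles) or, up to automorphism, a single ``mixed'' one with exactly one equatorial cycle. The main obstacle is to eliminate the mixed decomposition. There each non-equatorial $4$-cycle splits the $8$ faces of $G_{oct}$ into a two-face side and a six-face side, and the two ``two-face'' sides of the two non-equatorial cycles turn out to be disjoint; but those two circles can cross each other only if each has its six-face side as interior, hence its two-face side as exterior, and then no face lies outside both of them, a fortiori none lies outside all three circles -- contradicting the existence of the outer face. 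So only the equatorial decomposition survives; any admissible interior-assignment for it produces the symmetric three-circle Venn arrangement, and the octahedral automorphisms identify all these choices, so this case contributes a single realization, that of Figure~\ref{fig:octahedron_3}, with invariant $(1,1)$. Combining the cases, $G_{oct}$ has exactly $2+1=3$ realizations up to equivalence, namely those of Figures~\ref{fig:octahedron_1}--\ref{fig:octahedron_3}, and they are pairwise non-equivalent because the invariants $(4,4)$, $(3,2)$ and $(1,1)$ are distinct.
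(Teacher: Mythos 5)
Your proposal is correct, but it takes a genuinely different route from the paper's proof. Both begin with Lemma~\ref{lem:bounds} to restrict to $c\in\{3,4\}$, but from there the paper argues geometrically: it claims (via ``easy considerations'') that three circles must be mutually crossing and four circles mutually tangent, observes that the resulting arrangements are $6$-vertex $4$-regular planar triangulations and invokes the uniqueness of $G_{oct}$ among such graphs to conclude that these arrangements realize $G_{oct}$, and then reads off the oriented dual of each of the three geometric configurations directly. You instead work combinatorially inside $G_{oct}\cong K_{2,2,2}$: each circle traces a cycle, the cycles partition $E[G_{oct}]$ with exactly two through each vertex, and you enumerate the admissible decompositions (four face-triangles, or three $4$-cycles pairwise sharing two vertices) before constraining the interior assignments by the tangency/crossing geometry. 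Your approach must do extra work the paper avoids --- notably ruling out the ``mixed'' three-$4$-cycle decomposition, which really does exist combinatorially (e.g.\ the cycles $a_1b_1a_2c_1$, $a_1b_2a_2c_2$, $b_1c_1b_2c_2$ in $K_{2,2,2}$), and your outer-face argument for eliminating it is sound --- but in exchange it makes rigorous a step the paper leaves informal: that three pairwise crossing circles realizing a \emph{simple} graph must form the Venn arrangement with nonempty triple intersection (in your framework this is absorbed into the edge-partition condition, since the non-Venn arrangement would force a doubled edge between two crossing points). The two arguments converge on the same final non-equivalence check via degree invariants of the oriented dual. The only caveat is that several of your ``short counts'' (the two Latin-square triangle decompositions, the classification of $4$-cycle decompositions, the $2$/$6$ face split of a non-equatorial $4$-cycle) are asserted rather than carried out; each is routine and true, but a full write-up should include them, since they carry the burden that the paper's uniqueness-of-the-triangulation argument carries in the original.
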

\begin{proof}
Lemma~\ref{lem:bounds} implies that any realization of the
octahedron graph $G_{oct}$ as a system of circles consists of either
three or four circles. Easy considerations show that in the former
case the three circles are mutually crossing, while in the latter
one the four circles are mutually touching. Consider first the case
of three mutually crossing circles (see Fig.\ref{fig:octahedron_3}),
in which there are six vertices of degree four and every face is a
triangle. Since $G_{oct}$ is the only fully-triangulated 4-regular
planar graph on six vertices, this is indeed a realization of
$G_{oct}$ as a system of circles. There is exactly one face of
$G_{oct}$ that belongs to the interior of all three circles. It
follows that in $G^*_{oct}$ the corresponding vertex has out-degree
three (refer to the innermost vertex of
Fig.\ref{fig:dual_octahedron_3}). Also, for every pair of circles,
there is exactly one face that belongs to the interior of both
circles and to the exterior of the third circle. The corresponding
vertices of $G^*_{oct}$ have out-degree two and in-degree one (refer
to vertices at distance one from the innermost vertex of
Fig.\ref{fig:dual_octahedron_3}). For every circle, there is exactly
one face that belongs to its interior and to the exterior of the
other two circles. The corresponding vertices of $G^*_{oct}$ have
out-degree one and in-degree two (refer to vertices at distance two
from the innermost vertex of Fig.\ref{fig:dual_octahedron_3}).
Finally, the vertex corresponding to the outerface has in-degree
three. From the above, it follows that the oriented dual $G^*_{oct}$
that corresponds to a realization of $G_{oct}$ as a system of three
mutually crossing circles is isomorphic to the one given in
Fig.\ref{fig:dual_octahedron_3}, where the vertex corresponding to
the outerface is omitted.

\begin{figure}[t]
  \centering
  \begin{minipage}[b]{.32\textwidth}
    \centering
    \subfloat[\label{fig:dual_octahedron_3}{}]
    {\includegraphics[width=\textwidth]{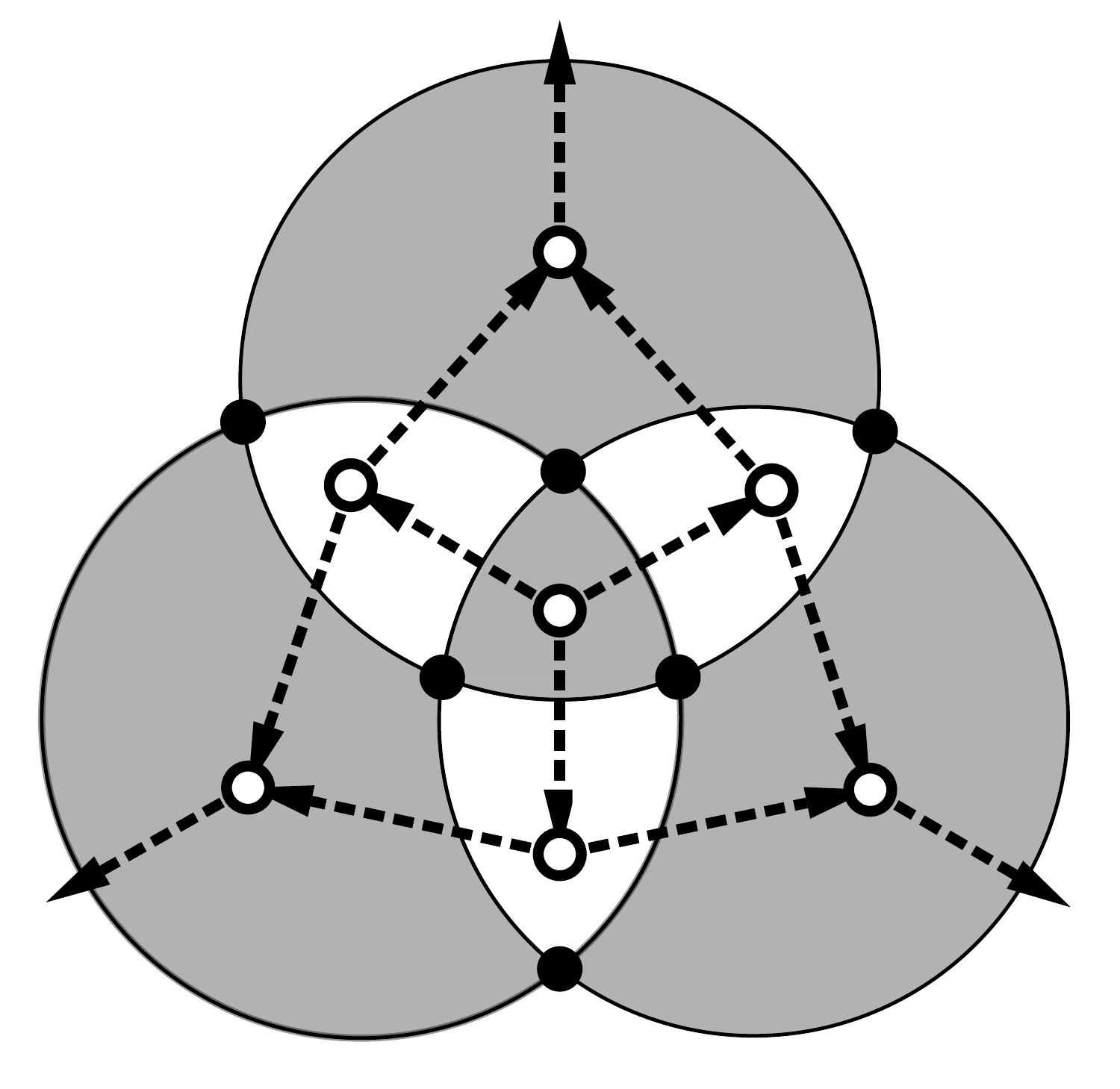}}
  \end{minipage}
  \begin{minipage}[b]{.32\textwidth}
    \centering
    \subfloat[\label{fig:dual_octahedron_2}{}]
    {\includegraphics[width=\textwidth]{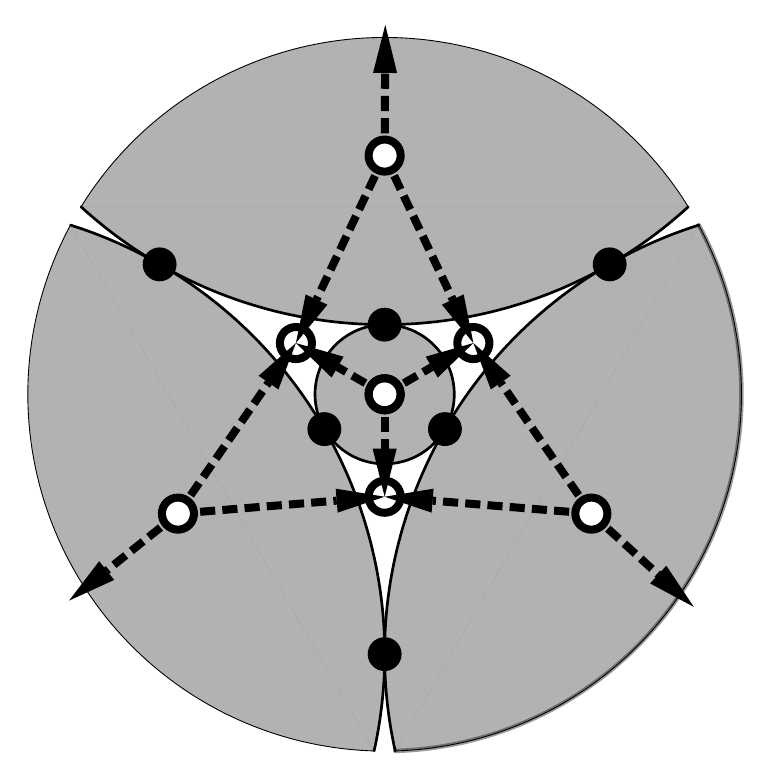}}
  \end{minipage}
    \begin{minipage}[b]{.32\textwidth}
    \centering
    \subfloat[\label{fig:dual_octahedron_1}{}]
    {\includegraphics[width=\textwidth]{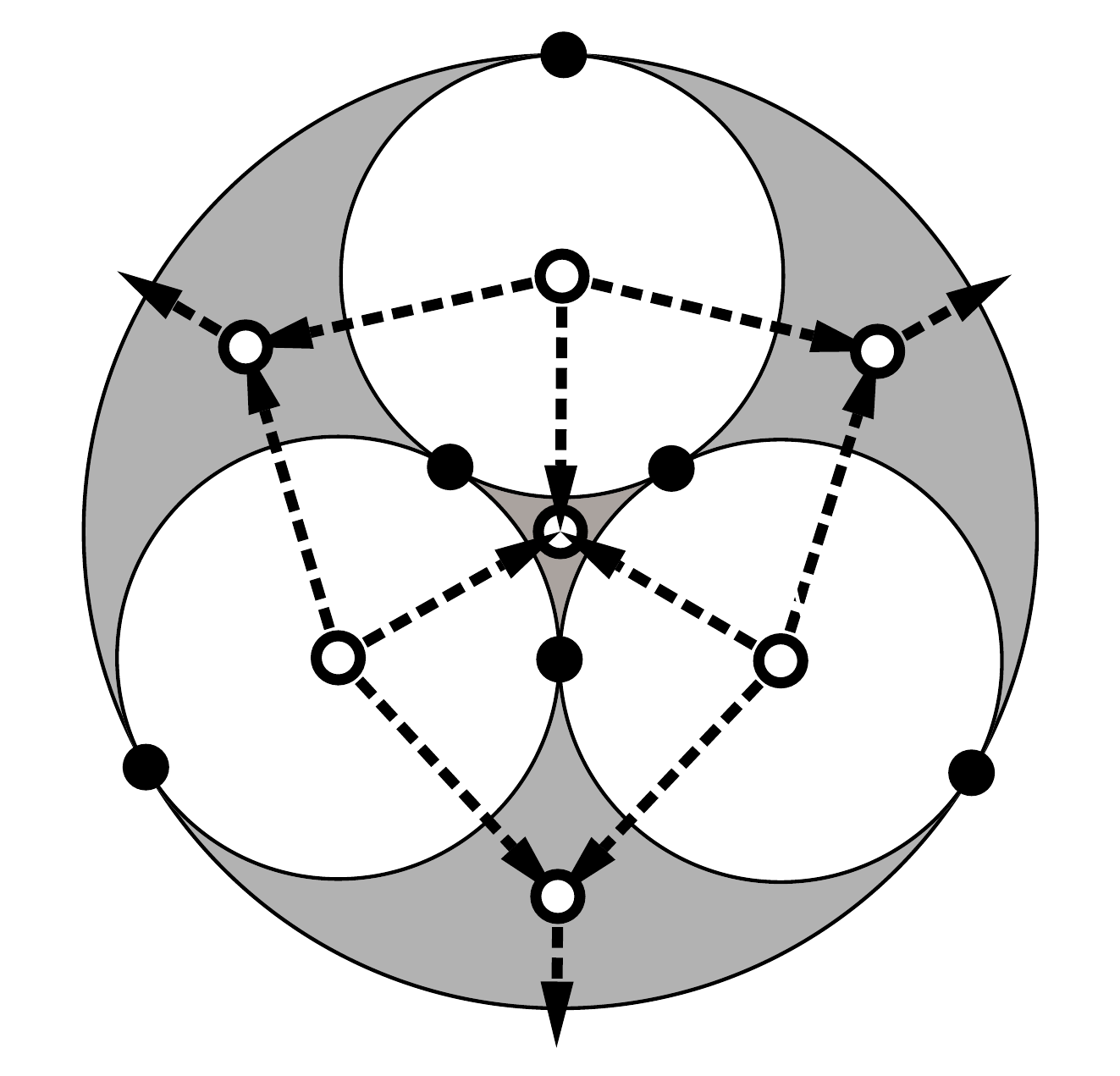}}
  \end{minipage}
  \caption{Illustration of the oriented duals (drawn in dotted) of $G_{oct}$ that correspond to the three non-equivalent realizations of the octahedron graph $G_{oct}$ as system of circles.}
  \label{fig:dual_octahedron}
\end{figure}

In the case of four touching circles, by Lemma~\ref{lem:bounds} it
follows that every circle has exactly three vertices. Since vertices
are defined by touching points and two circles can have at most one
common point, it follows that a circle touches all three other
circles of the representation. Let $c_1,\ldots,c_4$  be the four
circles. If $c_2$ lies in the interior of $c_1$, then $c_3$ and
$c_4$ are also in the interior of $c_1$ (since they touch with
$c_2$). This implies that either all circles have empty interiors or
one circle contains all three other circles in its interior (see
Figures~\ref{fig:octahedron_2} and~\ref{fig:octahedron_1},
respectively), in which there are six vertices of degree four
defined, and every face is a triangle. Since $G_{oct}$ is the only
fully-triangulated 4-regular planar graph on six vertices, these are
indeed realizations of $G_{oct}$ as system of circles. For the case
where all circles have empty interior, circles $c_1$ to $c_4$ are
faces of $G_{oct}$ and the corresponding vertices of $G^*_{oct}$
have out-degree three (refer to the innermost vertex of
Fig.\ref{fig:dual_octahedron_2} and to vertices at distance two from
it). All other faces of $G_{oct}$ lie in the exterior of all circles
and have therefore in-degree three and out-degree zero (refer to
vertices at distance one from the innermost vertex of
Fig.\ref{fig:dual_octahedron_2}; the outerface vertex also fits to
this case). Therefore, the oriented dual $G^*_{oct}$ that
corresponds to a realization of $G_{oct}$ as four touching circles
with disjoint interiors is isomorphic to the one given in
Fig.\ref{fig:dual_octahedron_2}, where the vertex corresponding to
the outerface is omitted.

Suppose now that one circle, say circle $c_1$, contains in its
interior all three other circles. Then, circles $c_2$, $c_3$ and
$c_4$ are faces of $G_{oct}$ and the corresponding vertices of
$G^*_{oct}$ have out-degree three (refer to vertices at distance one
from the innermost vertex of Fig.\ref{fig:dual_octahedron_1}). There
is exactly one face delimited by $c_2$, $c_3$ and $c_4$ (in the
interior of $c_1$) that corresponds to a vertex of $G^*_{oct}$ with
in-degree three (refer to the innermost vertex of
Fig.\ref{fig:dual_octahedron_1}). Three distinct faces share an edge
with $c_1$ and lie in its interior corresponding to three vertices
of $G^*_{oct}$ with in-degree two and out-degree one (refer to
vertices at distance two from the innermost vertex of
Fig.\ref{fig:dual_octahedron_1}). The outerface corresponds to a
vertex of in-degree three. Similarly to the previous cases, the
oriented dual $G^*_{oct}$ that corresponds to a realization of
$G_{oct}$ as four touching circles where one circle contains in its
interior all three other circles, is isomorphic to the one given in
Fig.\ref{fig:dual_octahedron_1}, where the vertex corresponding to
the outerface is omitted.

It is not hard to see that the three realizations presented are not
equivalent: In the realization of Fig.\ref{fig:dual_octahedron_3}
there is only one vertex with out-degree three, while in the
realization of Fig.\ref{fig:dual_octahedron_2} there are four
vertices with out-degree three, and in the realization of
Fig.\ref{fig:dual_octahedron_1} there are three. Since the degree
sequences of the three dual graphs are different they can't be
isomorphic.

Any realization of the octahedron graph as a system of circles will
use either three mutually crossing circles or four touching circles.
Since we proved that the oriented dual in the former case is
isomorphic to the one of Fig.\ref{fig:dual_octahedron_3}, while in
the latter case isomorphic either to the one of
Fig.\ref{fig:dual_octahedron_2} or to the one of
Fig.\ref{fig:dual_octahedron_1}, it follows that it is always
isomorphic to one of the digraphs of Fig.\ref{fig:dual_octahedron},
giving a total of three non-equivalent realizations of $G_{oct}$ as
a system of circles.
\end{proof}

\begin{figure}[t]
  \centering
  \begin{minipage}[b]{.48\textwidth}
    \centering
    \subfloat[\label{fig:gadget-special}{The gadget-subgraph.}]
    {\includegraphics[width=.86\textwidth]{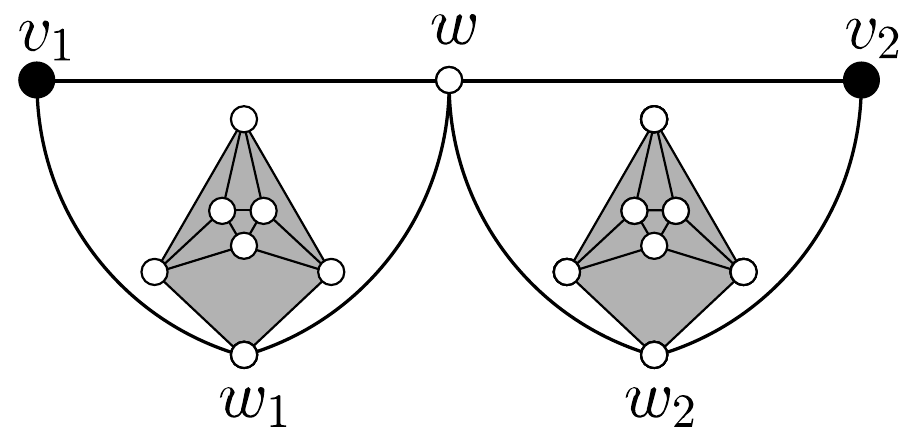}}
  \end{minipage}
  \begin{minipage}[b]{.48\textwidth}
    \centering
    \subfloat[\label{fig:gadget-abstract}{Abstraction of the gadget-subgraph.}]
    {\includegraphics[width=.86\textwidth]{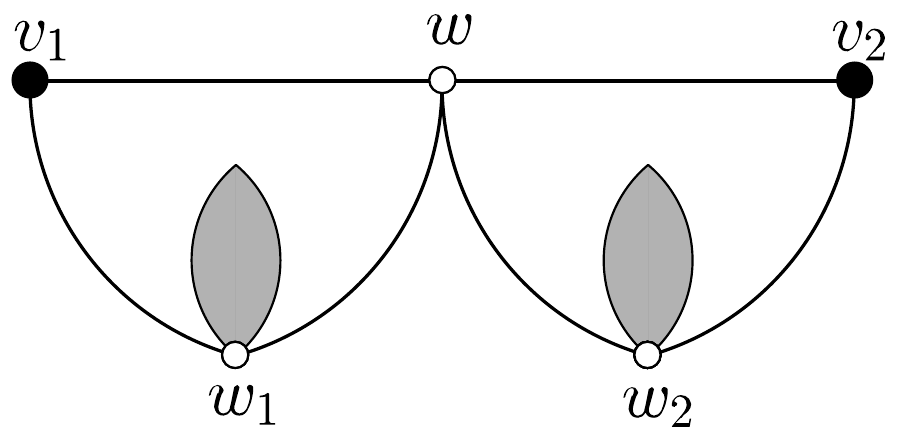}}
  \end{minipage}
  \caption{Illustrations of the gadget-subgraph.}
  \label{fig:gadget}
\end{figure}

Initially, we will exhibit a specific connected 4-regular planar
graph that does not admit a realization as a system of circles. This
graph will be constructed based on $G_{oct}$, augmented by
appropriately ``attaching'' a specific \emph{gadget-subgraph} to its
edges, leading thus to a graph, say $G_{oct}^{aug}$, that contains
cut-vertices and separation pairs (note that any connected 4-regular
planar graph is bridgeless \cite[pp.34]{WE00}). The gadget-subgraph
is illustrated in Fig.\ref{fig:gadget-special}. Observe that it
contains exactly two vertices of degree two, namely $v_1$ and $v_2$,
which are its \emph{endpoints}. Now we replace every edge $e=(u,v)$
of $G_{oct}$ by a path consisting of 8 internal vertices. Clearly,
the graph, say $G^{sub}_{oct}$, that is obtained in this manner is a
subdivision of $G_{oct}$. Let $u \rightarrow z_1 \rightarrow z_2
\rightarrow \dots \rightarrow z_8 \rightarrow v$ be the path
replacing edge $(u,v)$. We associate four copies of the
gadget-subgraph having vertices $z_1,\dots,z_8$ as their endpoints:
the first gadget-subgraph connects vertices $z_1$ and $z_6$, the
second connects $z_2$ and $z_5$, the third connects $z_3$ and $z_8$
and the last connects $z_4$ and $z_7$ (see Fig.\ref{fig:edge_split},
in which the gadget-subgraphs are drawn with dashed curves joining
the end-vertices; Fig.\ref{fig:counter} depicts the resulting graph
$G_{oct}^{aug}$).

\begin{figure}[b]
  \centering
  \begin{minipage}[b]{.54\textwidth}
    \centering
    \subfloat[\label{fig:edge_split}{Attaching gadget-subgraphs on $e=(u,v)$.}]
    {\includegraphics[width=\textwidth]{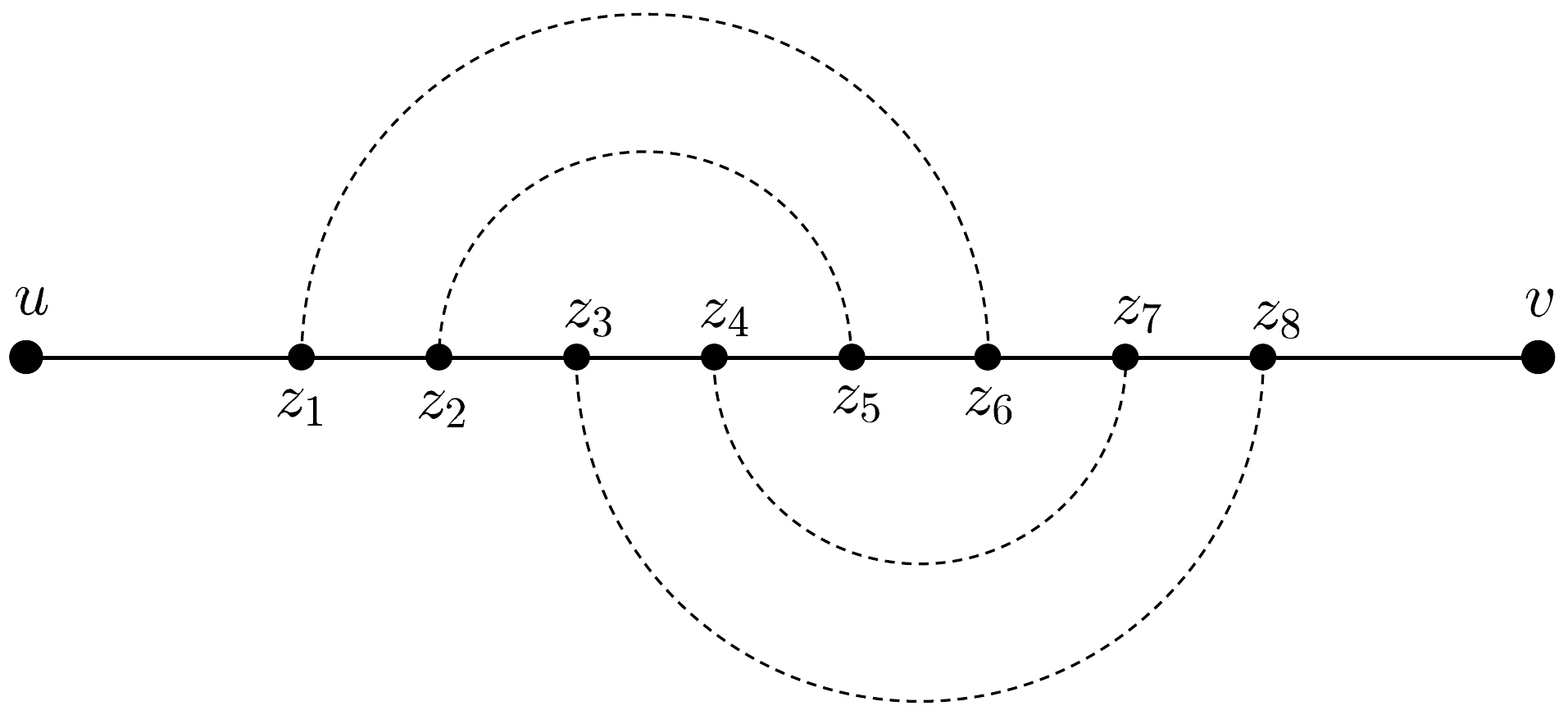}}
  \end{minipage}
  \begin{minipage}[b]{.44\textwidth}
    \centering
    \subfloat[\label{fig:counter}{The resulting graph $G_{oct}^{aug}$.}]
    {\includegraphics[width=.8\textwidth]{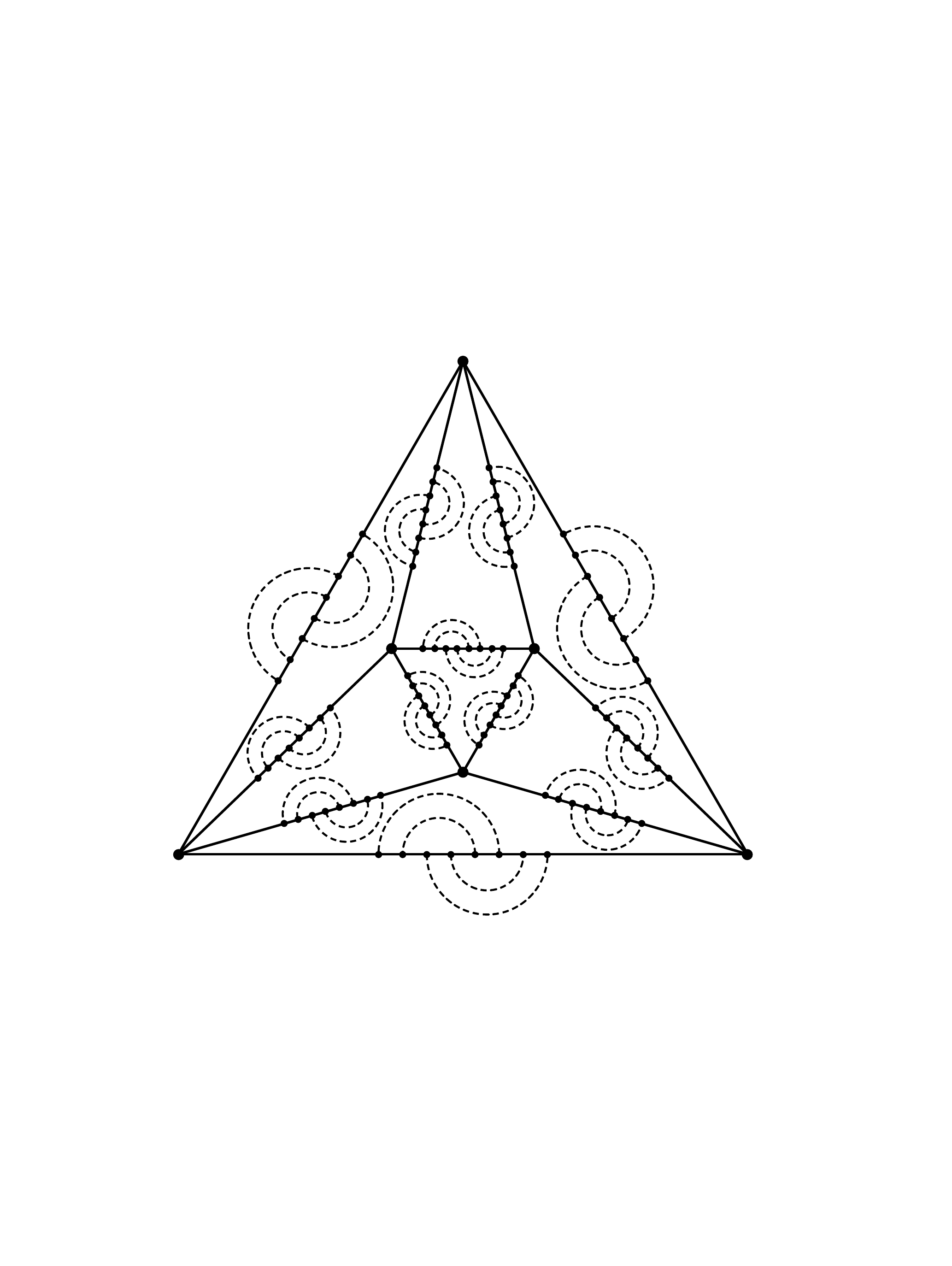}}
  \end{minipage}
  \caption{Each dashed edge corresponds to the gadget-subgraph of Fig.\ref{fig:gadget}.}
  \label{fig:edge_split_counter}
\end{figure}

The \emph{skeleton} of the gadget-subgraph consists of vertices
$v_1$, $v_2$, $w_1$, $w_2$ and $w$ (see Fig.\ref{fig:gadget}) and
edges $(v_i,w)$, $(w_i,w)$ and $(v_i,w_i)$, $i=1,2$. If we remove
the edges of the skeleton, the remaining graph consists of three
isolated vertices (namely $v_1$, $v_2$ and $w$) and two disjoint
graphs that are subdivisions of the octahedron graph (refer to the
gray-shaded graphs of Fig.\ref{fig:gadget-special}). In this section
we will exhibit some properties of the gadget-subgraphs. These
properties are not actually due to the structure of the
gadget-subgraphs. In fact, any graph in which every vertex has
degree four except for exactly one degree-2 vertex on the outerface
can be used instead of the gadget-subgraphs still guaranteeing the
same properties. The general situation is shown in
Fig.\ref{fig:gadget-abstract}, where the subgraphs are drawn as
self-loops at vertices $w_1$ and $w_2$.  For convenience, we will
refer to these subgraphs as \emph{loop-subgraphs}.

\begin{lemma}
Let $G$ be a 4-regular planar graph that contains at least one copy
of the gadget-subgraph. Suppose that there is a realization of $G$
as a system of circles. Then,  the skeleton of each gadget-subgraph
in this realization consists of two circles $C_1$ and $C_2$ tangent
at a point $w$, where circle $C_i$ contains vertices $\{v_i,w_i,w\}$
and the arc-segments realizing edges $(v_i, w_i)$, $(w_i, w)$,
and $(v_i, w)$, for $i=1,2$. \label{lem:gadget-realization}
\end{lemma}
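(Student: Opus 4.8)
The plan is to analyze the local structure of a single gadget-subgraph inside a hypothetical circle-realization $R$ of $G$, and to show that the five skeleton vertices $v_1,v_2,w_1,w_2,w$ together with the two loop-subgraphs force exactly the claimed configuration of two tangent circles. The key observation I would exploit is that each loop-subgraph, being a subdivision of the octahedron attached at a single degree-two vertex (namely $w_1$ or $w_2$), behaves — from the outside — like a ``blob'' that a circle of $R$ must enter and leave: more precisely, after smoothing out the degree-two vertices as in the definition of equivalence, each loop-subgraph contracts to a region bounded by circular arcs, and Lemma~\ref{lem:octahedron} pins down (up to equivalence) how circles pass through it. Since every vertex of $G$ lies on exactly two circles of $R$ and $w_1,w_2$ are the only vertices of degree two, I would first argue that the circle(s) carrying the loop-subgraph at $w_i$ are determined, and that $w_i$ is incident (in the smoothed picture) to a single circle, call it $C_i$, which carries both edges $(v_i,w_i)$ and $(w_i,w)$ as consecutive arcs.

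Next I would turn to the vertex $w$. It has degree four, with incident edges $(v_1,w)$, $(v_2,w)$, $(w_1,w)$, $(w_2,w)$, so $w$ lies on exactly two circles of $R$; these are forced to be $C_1$ and $C_2$, since $(w_1,w)$ is already an arc of $C_1$ and $(w_2,w)$ an arc of $C_2$, and no third circle can pass through $w$. The four arcs at $w$ must therefore be distributed as $\{(v_1,w),(w_1,w)\}\subseteq C_1$ and $\{(v_2,w),(w_2,w)\}\subseteq C_2$ — the alternative distribution would put $(v_1,w)$ and $(w_1,w)$ on different circles, contradicting that $(v_i,w_i)$ and $(w_i,w)$ are consecutive arcs of $C_i$ around $w_i$ as established above (equivalently, it would force $v_1$ and $w_1$ onto different circles, which then collide at $w$ in an inconsistent way). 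Since $C_1\neq C_2$ both pass through $w$ and each carries two of the four arcs at $w$, the circles $C_1$ and $C_2$ are either tangent at $w$ or cross at $w$; crossing is excluded because then $w$ would be a single crossing point shared by $C_1$ and $C_2$ and the cyclic order of the four arcs around $w$ would alternate $C_1,C_2,C_1,C_2$, which is incompatible with the planar rotation at $w$ inherited from the gadget (where $(v_1,w),(w_1,w)$ are consecutive and $(v_2,w),(w_2,w)$ are consecutive). Hence $C_1$ and $C_2$ are tangent at $w$.

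Finally I would close the loop by identifying $v_i$. The vertex $v_i$ has degree four with incident edges $(v_i,w_i)$, $(v_i,w)$, and the two edges connecting it to the rest of $G$ (its two ``half-edges'' as an endpoint of the gadget, or rather the two edges of the attaching path). Both $(v_i,w_i)$ and $(v_i,w)$ are already arcs of $C_i$, so $C_i$ is one of the two circles through $v_i$, and $C_i$ contains precisely the three vertices $v_i,w_i,w$ among the gadget's skeleton plus whatever it picks up outside — but within the gadget it carries exactly the three listed arcs $(v_i,w_i)$, $(w_i,w)$, $(v_i,w)$, forming the triangle-like closed sub-curve of $C_i$ claimed in the statement. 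Assembling the two indices $i=1,2$ yields the conclusion.

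\medskip

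The main obstacle I anticipate is the first step: rigorously extracting from Lemma~\ref{lem:octahedron} the statement that a loop-subgraph, attached at one degree-two vertex, forces that vertex to lie on a single circle with its two incident edges as consecutive arcs. The subtlety is that Lemma~\ref{lem:octahedron} is phrased for the octahedron as a whole graph, whereas here the loop-subgraph sits as a subgraph of the larger $G$ and shares the vertex $w_i$ with the skeleton; one must check that the portion of $R$ restricted to the loop-subgraph still ``looks like'' one of the three realizations of Lemma~\ref{lem:octahedron} (modulo the single puncture at $w_i$ where the subdivided edge is re-attached), and in each of those three realizations verify that the two arcs meeting at the unique degree-two boundary vertex belong to the same circle and are consecutive in its rotation. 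I would handle this by a short case analysis over the three pictures of Figures~\ref{fig:octahedron_1}--\ref{fig:octahedron_3}, noting that in each the degree-two vertex obtained by smoothing a single edge lies in the interior of an arc of one circle — which is exactly the consecutiveness statement needed.
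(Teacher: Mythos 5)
Your proposal reaches the right intermediate conclusions, but the two steps that carry all the weight are not actually justified, and in both cases the missing idea is different from what you sketch. First, your Step 1 (the two loop-edges at $w_i$ lie on one circle, hence $(v_i,w_i)$ and $(w_i,w)$ lie together on the other circle $C_i$) cannot be obtained from Lemma~\ref{lem:octahedron} in the way you propose: that lemma classifies realizations of the octahedron as a \emph{standalone} system of circles, whereas a priori a circle of $R$ could carry arcs of both the loop-subgraph and the rest of $G$, in which case the restriction of $R$ to the loop-subgraph is not a system of circles at all and the three-case analysis does not apply. You flag this as the ``main obstacle,'' but the fix you offer presupposes exactly the independence you are trying to establish. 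The argument that actually works is purely combinatorial and uses only that $w_i$ is a \emph{cut-vertex}: a circle of $R$ is a closed curve whose consecutive vertices are joined by edges of $G$ and which passes through any point at most once, so a circle carrying one loop-edge and one skeleton-edge at $w_i$ would have to travel from the loop-subgraph back to the skeleton without revisiting $w_i$, which is impossible since $w_i$ separates them. (This also explains why the lemma holds for arbitrary loop-subgraphs, not just octahedron subdivisions.)

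Second, and more seriously, your Step 2 --- ruling out the distribution $(v_1,w)\subseteq C_2$, $(v_2,w)\subseteq C_1$ at the vertex $w$ --- is the crux of the lemma (it is exactly the claim $C_i=C_i'$), and the justification you give is not a valid deduction: putting $(v_1,w)$ on $C_2$ does not contradict the fact that $(v_1,w_1)$ and $(w_1,w)$ are consecutive arcs of $C_1$ around $w_1$, and the parenthetical ``it would force $v_1$ and $w_1$ onto different circles'' is false ($v_1$ and $w_1$ both lie on $C_1$ in either distribution, via the arc $(v_1,w_1)$). The argument that closes this case is a counting one: under the alternative distribution, $C_1$ would contain $v_1$ (via $(v_1,w_1)$), $w$, and $v_2$ (via $(v_2,w)$), while $C_2$ would contain $v_2$, $w$, and $v_1$; two distinct circles cannot share the three points $v_1$, $v_2$, $w$, so $C_1=C_2$, contradicting that $w$ lies on two distinct circles. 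Once this is in place, your Step 3 and the tangency claim go through (tangency follows most cleanly from the fact that $C_1$ and $C_2$ then contain exactly the vertex sets $\{v_1,w_1,w\}$ and $\{v_2,w_2,w\}$, which meet only in $w$, and any intersection point of two circles of $R$ must be a vertex). As it stands, though, both load-bearing steps have gaps.
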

\begin{proof}
Suppose that there is a realization of $G$ as a system of circles
and consider a copy of the gadget-subgraph in this realization.
Since every vertex is defined by exactly two circles and $w_i$ is a
cut-vertex, it follows that one of the two circles defining $w_i$
contains vertices that belong only to its loop-subgraph, $i=1,2$.
Hence, the edges $(v_i,w_i)$ and $(w_i,w)$ belong to the same
circle, $i=1,2$. Let $C_i$ be the circle that contains $(v_i,w_i)$
and $(w_i,w)$ and $C'_i$ the circle that contains $(v_i,w)$,
$i=1,2$. We claim that $C_i = C'_i$, $i=1,2$. Observe that this
implies the lemma. For the sake of contradiction, assume that $C_1
\neq C'_1$. Since vertex $w$ is defined by exactly two circles, we
have that $\left\{C_1,C'_1\right\}=\left\{C_2,C'_2\right\}$, which
also implies that $C_2\neq C'_2$. Then, $C_1$ and $C'_1$ have at
least three points in common, namely vertices $v_1$, $v_2$ and $w$,
from which we obtain $C_1 = C'_1$; a contradiction.
\end{proof}

\begin{figure}[t]
  \centering
  \begin{minipage}[b]{.43\textwidth}
    \flushleft
    {\includegraphics[width=.9\textwidth]{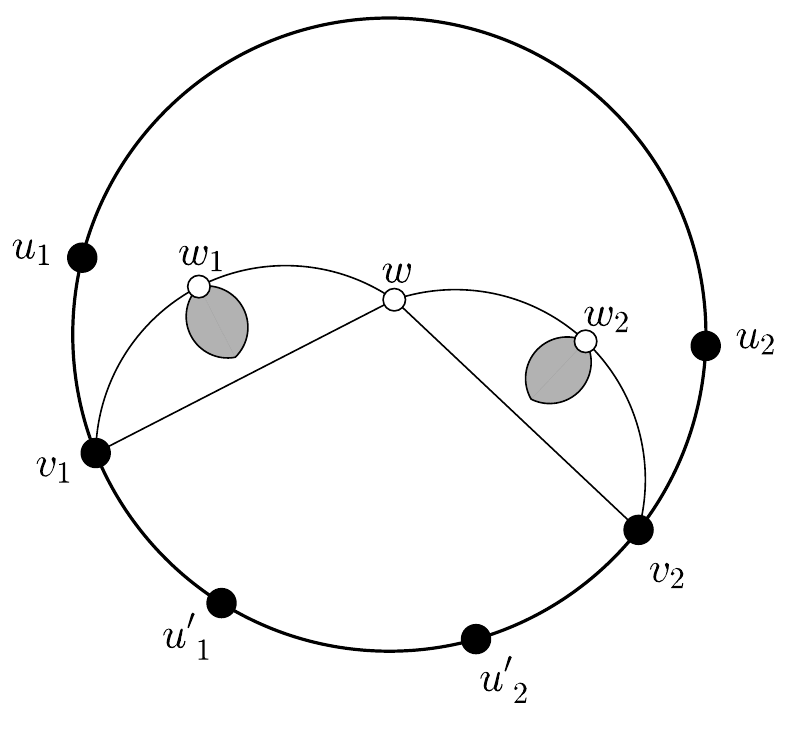}}
  \end{minipage}
    \caption{Configuration considered in proof of Lemma~\ref{lem:independent}.}
    \label{fig:gadget_plus}
\end{figure}

\begin{lemma}
Let $G$ be a 4-regular planar graph and $G^{sub}$ a subdivision of
$G$. Let $ v_1$ and  $v_2$ be two subdivision vertices of $G^{sub}$,
i.e. $v_1$ and $v_2$ are degree-2 vertices. Attach a
gadget-subgraph, so that $v_1$ and $v_2$ are its endpoints, and such
that the resulting graph is planar. Then, in any realization of the
resulting graph as a system of circles, the realization of the
gadget-subgraph and the realization of $G^{sub}$ are
\emph{independent}, i.e., any circle contains edges that belong
exclusively either to the gadget-subgraph or to $G^{sub}$.
\label{lem:independent}
\end{lemma}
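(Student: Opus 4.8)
The plan is to combine Lemma~\ref{lem:gadget-realization}, which pins down the two circles carrying the skeleton of the gadget, with the fact that the gadget-subgraph and $G^{sub}$ overlap only in the two vertices $v_1$ and $v_2$; the cut-vertices $w_1,w_2$ then prevent any single circle from straying from the $G^{sub}$-side into the gadget. Throughout I fix a realization $R$ of the resulting graph $H$, and recall that in $R$ every vertex of degree four lies on exactly two circles and every edge of $H$ is an arc of exactly one circle.

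First I would apply Lemma~\ref{lem:gadget-realization}: in $R$ the skeleton of the gadget is drawn as two circles $C_1,C_2$ tangent at $w$, where $C_i$ carries the three arcs $(v_i,w_i)$, $(w_i,w)$, $(v_i,w)$. As these are three arcs of the circle $C_i$, the vertex set lying on $C_i$ is exactly $\{v_i,w_i,w\}$ and $E(C_i)$ consists only of gadget edges. I also need the intermediate fact from the proof of Lemma~\ref{lem:gadget-realization}: since $w_i$ is a cut-vertex of $H$, the circle through $w_i$ other than $C_i$ --- call it $C_i''$ --- carries the two edges of the loop-subgraph $L_i$ incident to $w_i$, and, walking around $C_i''$, one never leaves $L_i$; hence every vertex on $C_i''$ belongs to $L_i$ (the configuration is shown in Figure~\ref{fig:gadget_plus}).

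Next I would run the \emph{interface} argument. Because the gadget-subgraph and $G^{sub}$ are edge-disjoint and share only $v_1$ and $v_2$, every vertex of $H$ other than $v_1,v_2$ is incident either only to gadget edges or only to $G^{sub}$ edges. Consequently, if some circle $D$ of $R$ carried both a gadget edge and a $G^{sub}$ edge, then, traversing $D$, there would be a vertex $b$ on $D$ at which a gadget edge and a $G^{sub}$ edge appear as consecutive arcs, which forces $b\in\{v_1,v_2\}$. So it is enough to control the two circles through $v_1$, the case of $v_2$ being symmetric: one of them is $C_1$, all of whose edges lie in the gadget, and if $D_1$ denotes the other, then $D_1$ carries the two $G^{sub}$ edges incident to $v_1$. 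I would then show that $D_1$ carries no gadget edge at all, contradicting the existence of $D$.

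The crux --- and the step I expect to be the main obstacle --- is to prevent $D_1$ from dipping into the gadget. Since $D_1$ carries $G^{sub}$ edges while $E(C_1)$ and $E(C_2)$ are exactly the two halves of the skeleton, $D_1\neq C_1,C_2$; and since $v_1\notin L_1\cup L_2$ whereas every vertex on $C_i''$ lies in $L_i$, also $D_1\neq C_1'',C_2''$. Now every gadget edge not carried by $C_1$ or $C_2$ belongs to $L_1$ or to $L_2$. If $D_1$ carried an edge of some $L_i$, then $D_1$ would pass through a vertex of $L_i$ as well as through $v_1\notin L_i$; traversing $D_1$ from the former to the latter, the walk must leave $V(L_i)$, and the only edges of $H$ with one endpoint in $V(L_i)$ and the other outside $V(L_i)$ are the two skeleton edges at the cut-vertex $w_i$ --- so $w_i$ would lie on $D_1$, contradicting $D_1\notin\{C_i,C_i''\}$ (the only circles through $w_i$). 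Hence $D_1$, and by symmetry the second circle through $v_2$, carries no gadget edge, so by the interface argument no circle of $R$ carries edges of both kinds; that is, the realizations of the gadget-subgraph and of $G^{sub}$ are independent.
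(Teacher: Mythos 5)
Your proof is correct and follows essentially the same route as the paper's: both rest on Lemma~\ref{lem:gadget-realization} to pin the skeleton of the gadget onto the two circles $C_1,C_2$ and then argue that the remaining circles cannot mix gadget edges with $G^{sub}$ edges. You are in fact more explicit than the paper about the one delicate step --- why the second circle through $v_1$ cannot dip into a loop-subgraph --- which you settle with the cut-vertex argument at $w_i$, a point the paper's proof leaves implicit when it simply ``removes $C_1$, $C_2$ and the circles representing the loop-subgraphs.''
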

\begin{proof}
Refer to Fig.\ref{fig:gadget_plus}. Since the resulting graph is
planar, $v_1$ and $v_2$ lie on the boundary of a face of $G^{sub}$.
By Lemma \ref{lem:gadget-realization}, it follows that edges
$(v_i,w_i)$, $(v_i,w)$ and $(w_i,w)$ belong to the same circle, say
$C_i$, $i=1,2$. Let $u_j$, $u'_j$ be the two neighbors of vertex
$v_j$ in $G^{sub}$, $j=1,2$ (Note that $u'_1=v_2$ and $u'_2=v_1$ are
possible). Since every vertex belongs to exactly two circles, it
follows that edges $(u_j,v_j)$ and $(v_j,u'_j)$ belong to a circle
different from $C_1$ and $C_2$. Therefore, if we remove $C_1$ and
$C_2$ and the circles representing the loop-subgraphs of the
gadget-subgraph, we have a representation of the remaining graph
(namely of graph $G^{sub}$), as a system of circles.
\end{proof}

From the above, it follows that in any realization of
$G_{oct}^{aug}$ as a system of circles the realization of each
gadget-subgraph and the realization of $G^{sub}_{oct}$ are
independent. This is summarized in the following corollary.

\begin{corollary}
In any realization of $G_{oct}^{aug}$ as a system of circles, the
realization of each gadget-subgraph and the realization of
$G^{sub}_{oct}$ are independent. \label{cor:independent-goct}
\end{corollary}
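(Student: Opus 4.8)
The plan is to derive the corollary by applying Lemma~\ref{lem:independent} once for each of the gadget-subgraphs that occur in $G_{oct}^{aug}$, taking care to choose the ``base'' subdivision in each application so that it still contains all of $G^{sub}_{oct}$. Recall that $G_{oct}^{aug}$ is built from the subdivision $G^{sub}_{oct}$ of the octahedron by attaching, for every edge $e=(u,v)$ of $G_{oct}$ with subdivision path $u\to z_1\to\dots\to z_8\to v$, four copies of the gadget-subgraph whose endpoints are the pairs $\{z_1,z_6\}$, $\{z_2,z_5\}$, $\{z_3,z_8\}$, $\{z_4,z_7\}$. Let $\Gamma_1,\dots,\Gamma_k$ be the resulting collection of gadget-subgraphs, where $k$ is four times the number of edges of $G_{oct}$, and fix a realization $R$ of $G_{oct}^{aug}$ as a system of circles.

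First I would fix one gadget-subgraph $\Gamma=\Gamma_i$ with endpoints $v_1,v_2$, and let $G^{sub}$ denote the graph obtained from $G_{oct}^{aug}$ by deleting every vertex and edge of $\Gamma$ except its two endpoints; equivalently, $G^{sub}$ is $G^{sub}_{oct}$ with the other $k-1$ gadget-subgraphs still attached. I would then verify the hypotheses of Lemma~\ref{lem:independent}: in $G^{sub}$ every vertex has degree four except $v_1,v_2$, which revert to degree two (each had degree two in $G^{sub}_{oct}$ before $\Gamma$ was attached); moreover, by the choice of pairs above, the two endpoints of any gadget are non-adjacent in $G_{oct}^{aug}$, so $v_1$ and $v_2$ are non-adjacent and each has two distinct degree-four neighbours. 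Consequently, smoothing out $v_1$ and $v_2$ produces a simple $4$-regular planar graph $G$ of which $G^{sub}$ is a subdivision and $v_1,v_2$ are subdivision vertices. Finally, re-attaching $\Gamma$ to $G^{sub}$ with $v_1,v_2$ as endpoints recovers the planar graph $G_{oct}^{aug}$. Hence Lemma~\ref{lem:independent} applies and, in the realization $R$, every circle contains edges belonging exclusively either to $\Gamma$ or to $G^{sub}$.

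Since $G^{sub}_{oct}$ is a subgraph of $G^{sub}$, it follows in particular that no circle of $R$ carries both an edge of $\Gamma$ and an edge of $G^{sub}_{oct}$, i.e.\ the realizations of $\Gamma$ and of $G^{sub}_{oct}$ are independent in the sense of Lemma~\ref{lem:independent}. As $\Gamma$ was an arbitrary element of $\{\Gamma_1,\dots,\Gamma_k\}$, this establishes the corollary. The heavy lifting is done entirely by Lemma~\ref{lem:independent} (together with Lemma~\ref{lem:gadget-realization} on which it rests), so the only point that genuinely requires attention is the structural observation that deleting the interior of a single gadget-subgraph from $G_{oct}^{aug}$ leaves a subdivision of a simple $4$-regular planar graph with exactly $v_1,v_2$ restored to degree two; this is immediate from the construction, since no parallel edges or self-loops are created and no other part of the graph is touched.
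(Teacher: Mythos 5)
Your proposal is correct and takes essentially the same route as the paper, which derives the corollary directly from Lemma~\ref{lem:independent} applied to each of the attached gadget-subgraphs in turn; your only addition is the (welcome) explicit verification that, after deleting the interior of one gadget, the remainder of $G_{oct}^{aug}$ is indeed a subdivision of a simple $4$-regular planar graph with the two endpoints as subdivision vertices, so the lemma's hypotheses hold in each application.
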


Corollary~\ref{cor:independent-goct} is the key element of our
proof. It implies that the realization of $G^{sub}_{oct}$ obtained
from a realization of $G^{aug}_{oct}$ by removing all vertices of
the gadget-subgraphs except for their endpoints will be equivalent
to one of the realizations of $G_{oct}$ depicted in
Figures~\ref{fig:octahedron_1}-\ref{fig:octahedron_3} (recall that
the definition of two equivalent realizations ignores vertices of
degree-2).

\begin{figure}[t]
  \centering
  \begin{minipage}[b]{.53\textwidth}
    \centering
    \subfloat[\label{fig:gadget-circles_a}{}]
    {\includegraphics[width=\textwidth]{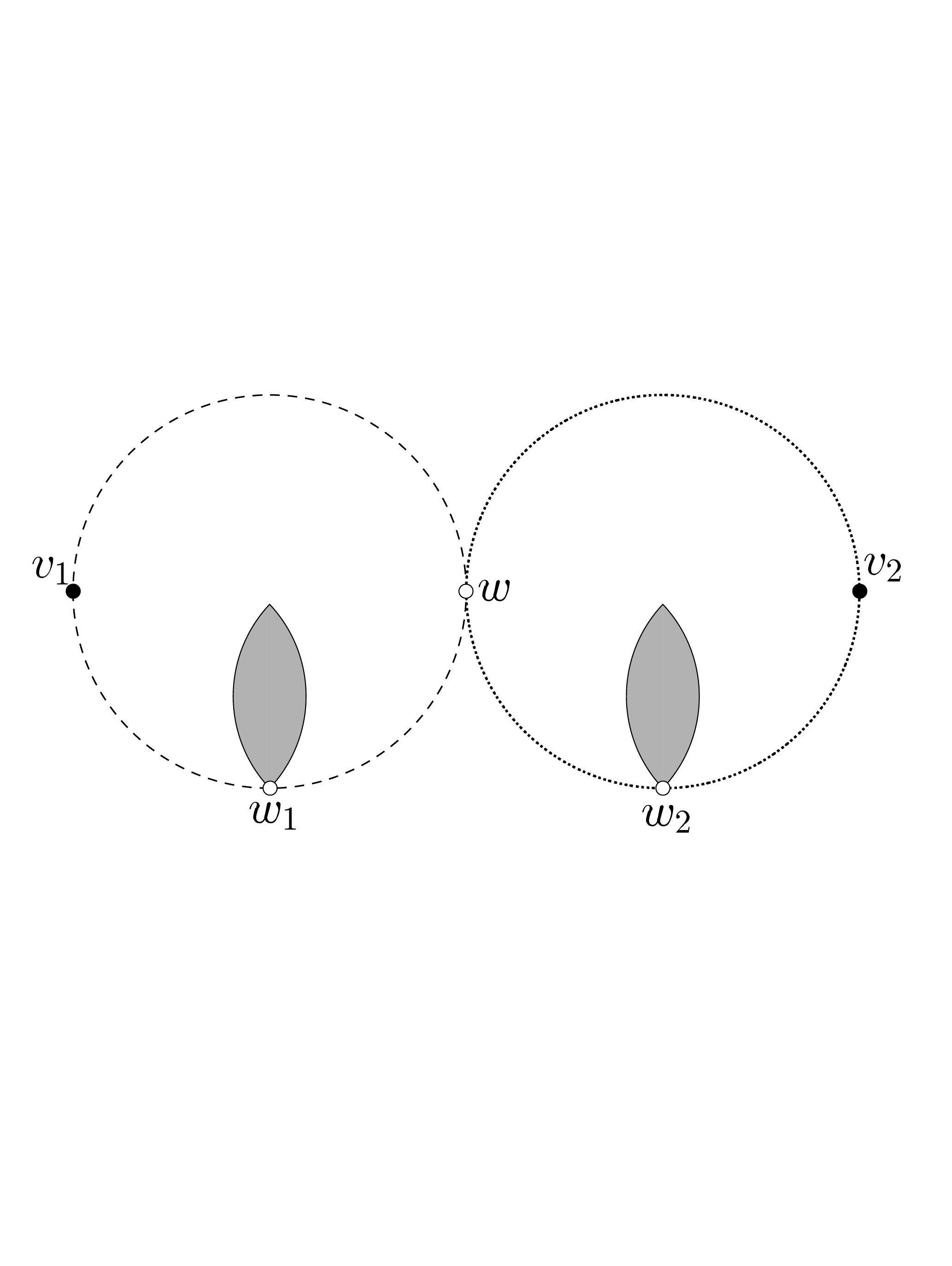}}
  \end{minipage}
  \begin{minipage}[b]{.33\textwidth}
    \centering
    \subfloat[\label{fig:gadget-circles_b}{}]
    {\includegraphics[width=\textwidth]{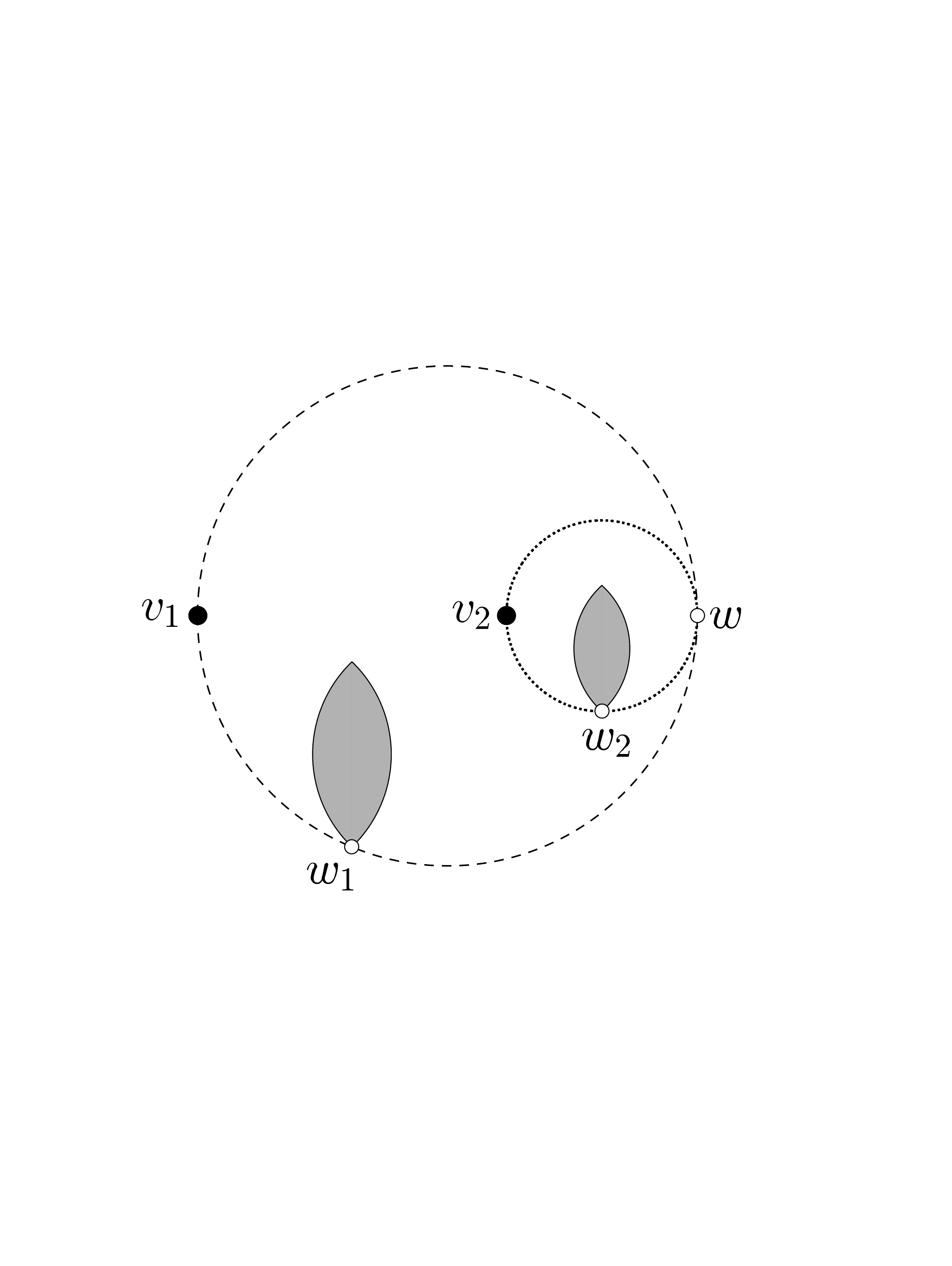}}
  \end{minipage}
  \caption{Non-equivalent realizations of the gadget-subgraph as system of circles: (a)~the two circles representing the skeleton have disjoint interior, and (b)~one circle lies in the interior of the other.}
  \label{fig:gadget-circles}
\end{figure}

In any planar embedding of $G_{oct}$,  there is always a triangular
face that shares no vertex and no edge with the outerface. Hence,
the gadget-subgraphs attached to the edges of this triangular face
have to be realized as in Fig.\ref{fig:gadget-circles_a} (in fact a
realization as in Fig.\ref{fig:gadget-circles_b} is only possible if
the gadget-subgraph is incident to the outerface). So, there is a
total of six gadget-subgraphs attached along the edges of this
triangular face, each with a realization as in
Fig.\ref{fig:gadget-circles_a}.

In the following, we state two useful geometric results regarding
tangent circles. We denote by $C(O,r)$ a circle with center $O$ and
radius $r$.

\begin{lemma}
Let $C_1(O_1,r_1)$ and $C_2(O_2,r_2)$ be two circles, so that $C_1$
is tangent to $C_2$ at point $p_1$ and $C_2$ lies entirely in the
interior of $C_1$. Let $C(O,r)$ be another circle that is tangent to
$C_1$ at point $p_2$ ($p_2\neq p_1$), tangent to $C_2$ and lies in
the interior of $C_1$ (see Fig.\ref{fig:circle-lemma1}). If $\phi$
is the angle $\widehat{p_1O_1p_2}$, then the radius $r$ of $C$ is an
increasing function of $\phi$, $\phi\in(0,\pi]$.
\label{lem:circle-lemma1}
\end{lemma}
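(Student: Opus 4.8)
The plan is to set up coordinates that exploit the tangency constraints and reduce the claim to monotonicity of an explicit one‑variable function. First I would place $O_1$ at the origin and, without loss of generality, take $p_1 = (r_1,0)$, so that the tangent point of $C_2$ with $C_1$ is on the positive $x$‑axis; since $C_2$ is internally tangent to $C_1$ at $p_1$ and lies inside $C_1$, its center is $O_2 = (r_1 - r_2, 0)$ with $0 < r_2 < r_1$. The position of $C$ is governed by two tangency conditions. Internal tangency of $C$ to $C_1$ at $p_2$, where $\widehat{p_1 O_1 p_2} = \phi$, forces $O$ to lie on the segment $O_1 p_2$ at distance $r_1 - r$ from $O_1$; hence $O = (r_1 - r)(\cos\phi, \sin\phi)$. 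Tangency of $C$ to $C_2$ (with $C$ inside $C_1$, so the tangency is external between $C$ and $C_2$ unless $C$ contains $C_2$ — I would note which case occurs and handle it, but the generic case is external tangency) gives $|O O_2| = r + r_2$.

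Next I would expand $|OO_2|^2 = (r+r_2)^2$ using $O = (r_1-r)(\cos\phi,\sin\phi)$ and $O_2 = (r_1 - r_2, 0)$:
\[
(r_1 - r)^2 - 2(r_1 - r)(r_1 - r_2)\cos\phi + (r_1 - r_2)^2 = (r + r_2)^2.
\]
The key point is that $r$ appears only to the first power on both sides after cancelling the $r^2$ terms, so this is a \emph{linear} equation in $r$ once $\phi$ is fixed. Collecting terms, the coefficient of $r$ is $-2r_1 + 2(r_1 - r_2)\cos\phi - 2r_2 = -2\big(r_1 - (r_1 - r_2)\cos\phi + r_2\big)$, which is strictly negative for all $\phi \in (0,\pi]$ (since $r_1 - r_2 > 0$ and $\cos\phi < 1$... at $\phi$ near $0$ one checks the bracket is $2r_2 - (r_1-r_2)(\cos\phi - 1) > 0$, hence nonzero). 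Solving, I obtain an explicit expression
\[
r \;=\; \frac{2 r_1 r_2 \,(1 - \cos\phi)}{\,2\big(r_1 + r_2 - (r_1 - r_2)\cos\phi\big)\,}
\;=\; \frac{r_1 r_2 \,(1 - \cos\phi)}{\,r_1 + r_2 - (r_1 - r_2)\cos\phi\,},
\]
after simplifying the constant term $(r_1 - r_2)^2 + (r_1 - r_2)^2 \cdot(\text{from }r^2\text{ side})$ — I would double‑check this algebra, but the structure is forced: numerator vanishes at $\phi = 0$ (where $p_2 = p_1$ and $C$ degenerates to a point) and is positive for $\phi \in (0,\pi]$.

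Finally I would establish monotonicity. Writing $t = \cos\phi \in [-1,1)$, decreasing in $\phi$ on $(0,\pi]$, it suffices to show $r$ is a decreasing function of $t$. With $r(t) = \dfrac{r_1 r_2 (1-t)}{r_1 + r_2 - (r_1 - r_2)t}$, differentiating (quotient rule) gives numerator of $r'(t)$ equal to $r_1 r_2\big[-(r_1 + r_2 - (r_1-r_2)t) + (1-t)(r_1 - r_2)\big] = r_1 r_2\big[-(r_1+r_2) + (r_1-r_2)t + (r_1-r_2) - (r_1-r_2)t\big] = r_1 r_2(-2r_2) < 0$; the denominator is a positive square. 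Hence $r'(t) < 0$, so $r$ is strictly decreasing in $t$ and therefore strictly increasing in $\phi$ on $(0,\pi]$, as claimed. The main obstacle I anticipate is not the calculus but the \emph{case analysis at the tangency of $C$ with $C_2$}: one must argue from the hypothesis ``$C$ lies in the interior of $C_1$, is tangent to $C_1$ at a point $p_2 \neq p_1$, and is tangent to $C_2$'' that the configuration is geometrically realizable and that the relevant tangency is the external one, $|OO_2| = r + r_2$, rather than $|OO_2| = |r - r_2|$; a short convexity/containment argument (a disk inside $C_1$ and tangent to $C_1$ away from $p_1$ cannot contain $C_2$, which touches $C_1$ at $p_1$) should settle this, and then the computation above finishes the proof.
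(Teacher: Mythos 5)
Your approach is the same as the paper's: place $O_1$ at the origin with $p_1=(r_1,0)$, so $O_2=(r_1-r_2,0)$ and $O=(r_1-r)(\cos\phi,\sin\phi)$, impose $|OO_2|=r+r_2$, observe that the $r^2$ terms cancel so the equation is linear in $r$, solve, and read off monotonicity in $\cos\phi$. Your extra care about which tangency relation ($|OO_2|=r+r_2$ versus $|r-r_2|$) applies is a legitimate point that the paper glosses over, and your containment argument for ruling out the internal case is fine.

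However, the explicit formula you obtain is wrong, and this is exactly the algebra you flagged as needing a double-check. Solving the linear equation correctly gives
\[
r \;=\; r_1-\frac{2r_1r_2}{r_1+r_2-(r_1-r_2)\cos\phi}\;=\;\frac{r_1(r_1-r_2)(1-\cos\phi)}{r_1+r_2-(r_1-r_2)\cos\phi},
\]
with numerator $r_1(r_1-r_2)(1-\cos\phi)$, not $r_1r_2(1-\cos\phi)$. A sanity check at $\phi=\pi$ exposes the discrepancy: there $O=(-(r_1-r),0)$ and $|OO_2|=2r_1-r_2-r=r+r_2$ forces $r=r_1-r_2$, whereas your formula returns $r=r_2$. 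Fortunately the slip does not affect the conclusion: in your quotient-rule computation the bracket still collapses to $-2r_2<0$, and the prefactor changes from $r_1r_2$ to $r_1(r_1-r_2)$, which is still positive since $C_2$ lies strictly inside $C_1$; so $r$ remains strictly decreasing in $t=\cos\phi$ and hence increasing in $\phi$. Note also that the first form above makes the differentiation unnecessary: as $\phi$ increases on $(0,\pi]$, $\cos\phi$ decreases, the denominator $r_1+r_2-(r_1-r_2)\cos\phi$ increases (using $r_1>r_2$), the subtracted fraction decreases, and $r$ increases --- which is precisely how the paper concludes.
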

\begin{proof}
W.l.o.g., we assume that $O_1$ coincides with the origin of the
Cartesian coordinate system and point $p_1$ lies on the x-axis,
i.e., at point $(r_1,0)$. Then, the center of circle $C_2$ is at
point $(r_1-r_2, 0)$, while the center of circle $C$ is at point
$((r_1-r)\cos\phi,(r_1-r)\sin\phi)$, as shown in
Fig.\ref{fig:circle-lemma1}. Since $C_2$ and $C$ are tangent the
distance between their centers equals to the sum of their radii,
i.e.:

\begin{tabular}{ll}
  ~ & $[(r_1-r)\cos\phi-(r_1-r_2)]^2+[(r_1-r)\sin\phi]^2=(r_2+r)^2$ \\
  $\Rightarrow$ & $(r_1-r_2)^2+(r_1-r)^2-2(r_1-r_2)(r_1-r)\cos\phi=(r_2+r)^2$ \\
  $\Rightarrow$ & $(r_1+r_2)(r_1-r)-2r_1r_2-(r_1-r_2)(r_1-r)\cos\phi=0$ \\
  $\Rightarrow$ & $r=r_1-\frac{2r_1r_2}{r_1+r_2-(r_1-r_2)\cos\phi}$ \\
\end{tabular}

By the above equation, when $\phi$ is increasing in the interval
$(0,\pi]$, $\cos\phi$ is decreasing and $r$ is increasing. Hence,
circle $C$ has maximum radius for angle $\phi=\pi$.
\end{proof}

\begin{figure}[t]
  \centering
  \begin{minipage}[b]{.48\textwidth}
    \centering
    \subfloat[\label{fig:circle-lemma1}{}]
    {\includegraphics[width=.8\textwidth]{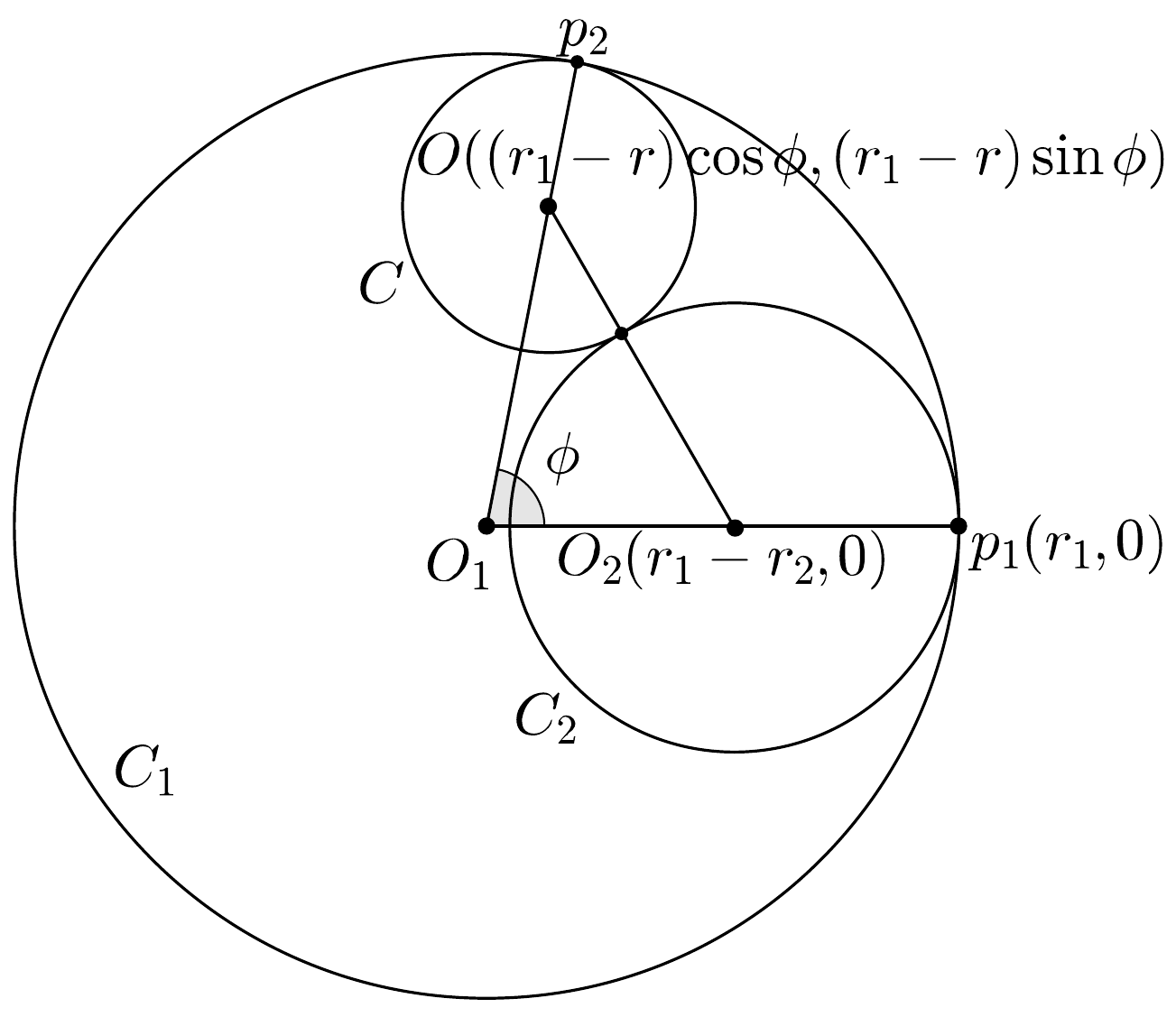}}
  \end{minipage}
  \begin{minipage}[b]{.48\textwidth}
    \centering
    \subfloat[\label{fig:circle-lemma2}{}]
    {\includegraphics[width=.8\textwidth]{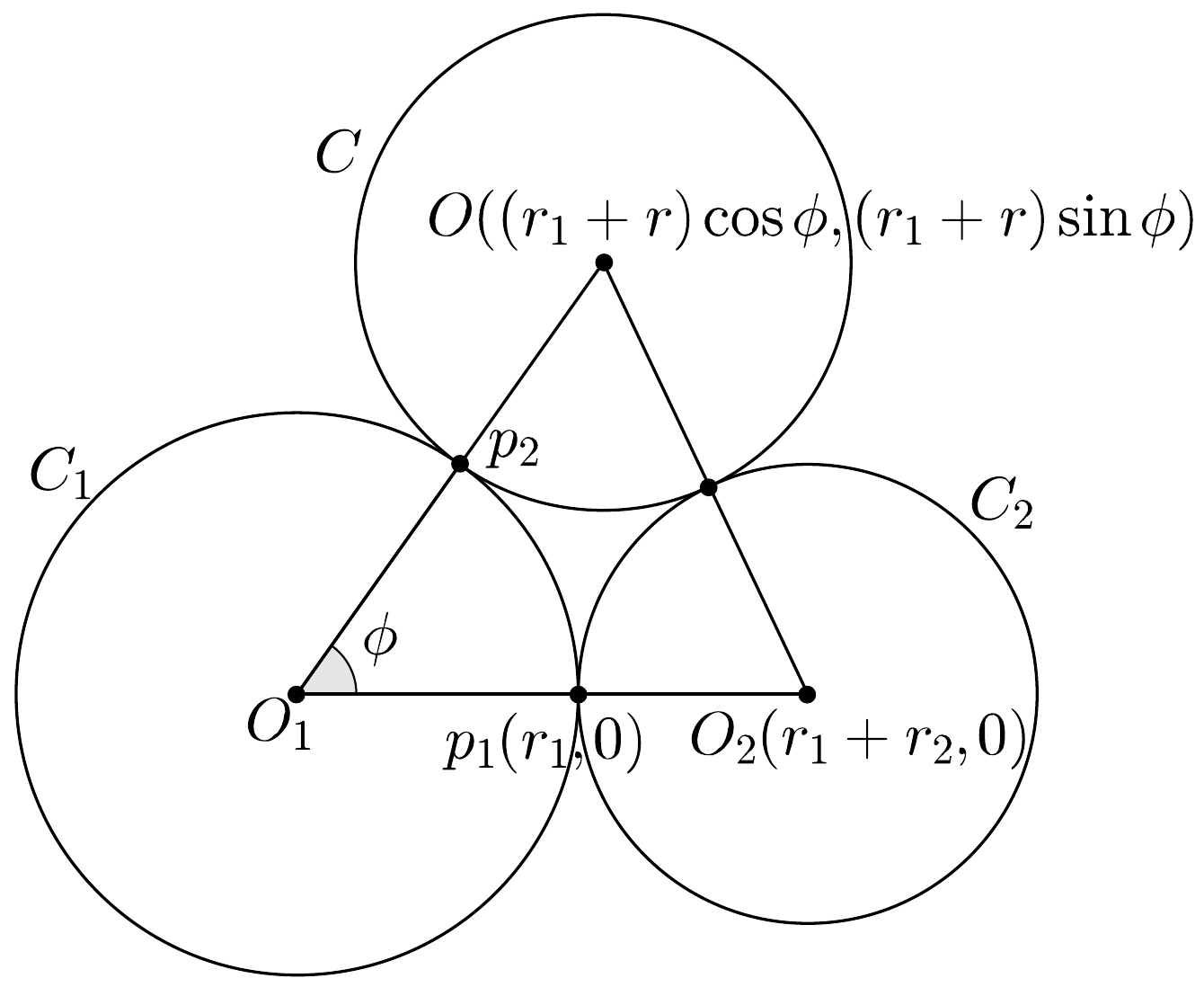}}
  \end{minipage}
  \caption{(a)~Configuration considered in Lemma~\ref{lem:circle-lemma1}. (b)~Configuration considered in Lemma~\ref{lem:circle-lemma2}.}
  \label{fig:circle-lemma12}
\end{figure}

A similar result holds if circles $C$ and $C_2$ lie outside circle
$C_1$.

\begin{lemma}
Let $C_1(O_1,r_1)$ and $C_2(O_2,r_2)$ be two circles, so that $C_1$
is tangent to $C_2$ at point $p_1$ and $C_1$ lies entirely in the
exterior of $C_2$. Let $C(O,r)$ be another circle that is tangent to
$C_1$ at point $p_2$ ($p_2\neq p_1$), tangent to $C_2$ and lies in
the exterior of $C_1$ (see Fig.\ref{fig:circle-lemma2}). If $\phi$
is the angle $\widehat{p_1O_1p_2}$, then the radius $r$ of $C$ is an
increasing function of $\phi$,
$\phi\in(0,\arccos(\frac{r_1-r_2}{r_1+r_2})]$.
\label{lem:circle-lemma2}
\end{lemma}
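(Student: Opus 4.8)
The plan is to mirror the proof of Lemma~\ref{lem:circle-lemma1} almost verbatim, changing only the signs that encode internal versus external tangency. First I would fix coordinates by placing $O_1$ at the origin with $p_1=(r_1,0)$. Since $C_1$ lies in the exterior of $C_2$, the two circles are externally tangent at $p_1$, so $O_2=(r_1+r_2,0)$. Likewise, since $C$ is tangent to $C_1$ at $p_2$ and lies in the exterior of $C_1$, it is externally tangent to $C_1$, hence its center lies at distance $r_1+r$ from $O_1$ in the direction of angle $\phi$; that is, $O=\bigl((r_1+r)\cos\phi,(r_1+r)\sin\phi\bigr)$. Reading off from Fig.~\ref{fig:circle-lemma2} that $C$ and $C_2$ are externally tangent as well, the constraint $|OO_2|=r+r_2$ becomes
\[
\bigl[(r_1+r)\cos\phi-(r_1+r_2)\bigr]^2+\bigl[(r_1+r)\sin\phi\bigr]^2=(r+r_2)^2 .
\]

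Next I would expand this and solve for $r$. Using $\cos^2\phi+\sin^2\phi=1$ together with the factorizations $(r_1+r)^2-(r+r_2)^2=(r_1-r_2)(r_1+r_2+2r)$ and $(r_1+r_2)^2+r_1^2-r_2^2=2r_1(r_1+r_2)$, the equation collapses to
\[
r_1(r_1+r_2)(1-\cos\phi)+r\bigl[(r_1-r_2)-(r_1+r_2)\cos\phi\bigr]=0 ,
\]
so that
\[
r=\frac{r_1(r_1+r_2)(1-\cos\phi)}{(r_1+r_2)\cos\phi-(r_1-r_2)}
 =r_1\!\left(\frac{2r_2}{(r_1+r_2)\cos\phi-(r_1-r_2)}-1\right).
\]
A valid finite positive radius requires the denominator $(r_1+r_2)\cos\phi-(r_1-r_2)$ to be positive, which is exactly the condition $\phi<\arccos\!\bigl(\tfrac{r_1-r_2}{r_1+r_2}\bigr)$; this pins down the stated interval, with $r\to\infty$ as $\phi$ approaches its right endpoint.

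Finally, monotonicity is immediate from the second expression for $r$: on $(0,\pi]$ the quantity $\cos\phi$ is strictly decreasing in $\phi$, hence so is $(r_1+r_2)\cos\phi-(r_1-r_2)$; since this denominator stays strictly positive on the interval and $r_1,r_2>0$, the fraction $2r_2/\bigl((r_1+r_2)\cos\phi-(r_1-r_2)\bigr)$ is strictly increasing, and therefore so is $r$. (Equivalently, $dr/d(\cos\phi)=-2r_1r_2(r_1+r_2)/\bigl((r_1+r_2)\cos\phi-(r_1-r_2)\bigr)^2<0$, and $\cos\phi$ decreases with $\phi$.)

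I expect no genuine obstacle here; the only point requiring care is the bookkeeping of tangency types — making sure that ``$C$ lies in the exterior of $C_1$'' and the configuration of Fig.~\ref{fig:circle-lemma2} are correctly translated into \emph{external} tangencies ($O_2$ at distance $r_1+r_2$ from $O_1$, $O$ at distance $r_1+r$ from $O_1$, and $|OO_2|=r+r_2$) rather than internal ones, since a wrong sign produces a degenerate equation. Once the signs are fixed, the derivation is the same short computation as in Lemma~\ref{lem:circle-lemma1}.
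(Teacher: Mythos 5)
Your derivation is correct and is exactly the computation the paper has in mind: the paper proves this lemma by saying it is analogous to Lemma~\ref{lem:circle-lemma1} and only records the resulting formula, which agrees with yours since $r_1\bigl(\tfrac{2r_2}{(r_1+r_2)\cos\phi-(r_1-r_2)}-1\bigr)=\tfrac{2r_1r_2}{r_2-r_1+(r_1+r_2)\cos\phi}-r_1$. Your identification of the admissible interval via positivity of the denominator and the monotonicity argument match the paper's reasoning as well.
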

\begin{proof}
The proof of Lemma \ref{lem:circle-lemma2} is similar to the one of
Lemma \ref{lem:circle-lemma1}. So, we omit the details. We simply
mention the corresponding equation for $r$, which is the following:
$$r=\frac{2r_1r_2}{r_2-r_1+(r_2+r_1)\cos\phi}-r_1$$
Note that circle $C$ does not always exist. For an example, refer to
Fig.\ref{fig:circle-lemma2}, when $\phi=\pi/2$ and $r_1 > r_2$. In
particular, for given radii $r_1$ and $r_2$, angle $\phi$ is bounded
from above by value $\arccos(\frac{r_1-r_2}{r_1+r_2})$, which
corresponds to the angle in the extreme case where circle $C$ is of
infinite radius and is therefore reduced to the common tangent of
circles $C_1$ and $C_2$.
\end{proof}

\begin{lemma}
Consider a circle $C(O,r)$ and an arc $\arc{AB}$ of $C$ with
$\widehat{AOB}=\phi<\pi$. Let $C_1(O_1,r_1)$ and $C_2(O_2,r_2)$ be
two tangent circles, that are both tangent to $C$ at points $A$ and
$B$ respectively (see Fig.\ref{fig:circle-lemma}). Let
$C'_1(O'_1,r'_1)$ and $C'_2(O'_2,r'_2)$ be another such pair of
tangent circles that are both tangent to $C$ at points $A'$ and $B'$
respectively (with $A'$ and $B'$ on the arc between $A$ and $B$), so
that the two pairs of circles have no crossing and no touching
points. Assuming that $C_i$ and $C_i'$, $i=1,2$ are all in the
interior of $C$ or in the exterior of $C$ and $A$, $A'$, $B'$, and
$B$ occur in this order on the arc between $A$ and $B$, then:
$$|\arc{A'B'}|<|\arc{AA'}|~and~|\arc{A'B'}|<|\arc{B'B}|$$
\label{lem:circle-lemma}
\end{lemma}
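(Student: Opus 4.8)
The plan is to apply a single inversion centered at $A$, which straightens $C$ to a line and turns the whole picture into two parallel lines together with an inscribed circle; in that picture both inequalities collapse to a one-line estimate. Write $a,b,c$ for the central angles of $\arc{AA'},\arc{A'B'},\arc{B'B}$, so $a+b+c=\phi$ with $a,b,c>0$, and note that the two asserted inequalities are exactly $b<a$ and $b<c$. I would place $A$ at the origin with $O=(0,r)$ and invert with power $2r$. A direct computation shows that $C$ maps to the line $\ell=\{y=1\}$ and that the point of $C$ at central angle $\psi$ from $A$ maps to $\big(\cot(\psi/2),1\big)$; hence $B\mapsto(s,1)$ with $s=\cot(\phi/2)>0$ (here $\phi<\pi$ is used), $A'\mapsto(p,1)$ with $p=\cot(a/2)$, $B'\mapsto(q,1)$ with $q=\cot\big(\tfrac{a+b}{2}\big)$, and $s<q<p$. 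In the interior case the circle $C_1$ maps to the parallel line $\{y=h\}$ with $h=r/r_1>1$, the circle $C_2$ maps to the circle inscribed in the strip $\{1\le y\le h\}$ and tangent to $\ell$ at $(s,1)$, and the curvilinear-triangle region $R$ bounded by $C,C_1,C_2$ maps to the component of the strip lying on one side of that inscribed circle. The exterior case gives the same picture, with the strip on the other side of $\ell$ and of width $1+r/r_1$, and nothing below depends on which case we are in.

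Next I would record three facts in the inverted picture. Since $A',B'$ lie strictly inside the arc and $C'_1,C'_2$ are disjoint from $C_1\cup C_2$, the circles $C'_1,C'_2$ lie in $R$; so their images are circles tangent to $\ell$ from inside the strip, with radii $r'_1,r'_2$ and centers $(p,1+r'_1)$ and $(q,1+r'_2)$. (i) Fitting inside the strip forces $r'_1\le(h-1)/2$. (ii) Lying strictly outside the image of $C_2$ (the inscribed disk), which $C'_2$ does not meet, forces $(q-s)^2\ge 2(h-1)r'_2$. (iii) The tangency of $C'_1$ and $C'_2$, necessarily external since both touch $\ell$ on the same side, gives $p-q=2\sqrt{r'_1r'_2}$. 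Combining (iii), (i), (ii) in that order yields $p-q=2\sqrt{r'_1r'_2}\le\sqrt{2(h-1)r'_2}\le q-s$, hence $2q\ge p+s$.

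Finally, since $s>0$ and $p=\cot(a/2)>0$ we get $2q\ge p+s>p>p-\tfrac1p=2\cot a$, so $q>\cot a$; as $q=\cot\big(\tfrac{a+b}{2}\big)$ and $\cot$ is strictly decreasing on $(0,\pi)$, this says $\tfrac{a+b}{2}<a$, i.e.\ $b<a$, i.e.\ $|\arc{A'B'}|<|\arc{AA'}|$. The remaining inequality $|\arc{A'B'}|<|\arc{B'B}|$ follows by symmetry: reflecting the configuration in the perpendicular bisector of $AB$ swaps $A\leftrightarrow B$, $C_1\leftrightarrow C_2$, $A'\leftrightarrow B'$, $C'_1\leftrightarrow C'_2$ and preserves every hypothesis, so it turns $b<a$ into $b<c$ (equivalently, one reruns the argument with the inversion centered at $B$).

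I expect the only real work to be the setup via inversion: keeping exact track of the images of $C_1$, $C_2$ and --- crucially --- of the region $R$, so as to be certain that the image of $C'_2$ is forced outside the inscribed disk, and verifying the half-angle coordinate formula $\psi\mapsto\cot(\psi/2)$ on $\ell$. Once those are in place the estimate is two lines and the conclusion is immediate; morally, it is the passage to the coordinate $\cot(\psi/2)$ that converts the claimed arc inequality into the transparent statement $2q\ge p+s$.
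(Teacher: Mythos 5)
Your proof is correct, but it takes a genuinely different route from the paper's. The paper derives the lemma from its two preparatory monotonicity results (Lemmas~\ref{lem:circle-lemma1} and~\ref{lem:circle-lemma2}): it introduces auxiliary circles $C''_i$ tangent to $C_1$ (resp.\ $C_2$) and to $C$ at $A'$, $B'$, argues $r'_i\le r''_i<r_1,r_2$, and then applies the monotonicity of the radius in the angle once more to conclude $\widehat{A'OB'}<\widehat{AOA'}$ and $\widehat{A'OB'}<\widehat{B'OB}$. You instead invert at $A$, which sends $C$ and $C_1$ to parallel lines, $C_2$ to the circle inscribed in the resulting strip, and $C'_1,C'_2$ to a tangent pair of circles resting on $\ell$; the whole lemma then reduces to the Ford-circle identity $p-q=2\sqrt{r'_1r'_2}$ combined with the strip bound on $r'_1$ and the disjointness bound on $r'_2$, giving $2q\ge p+s$ in the coordinate $\psi\mapsto\cot(\psi/2)$, from which $b<a$ follows by the double-angle formula $2\cot a=p-1/p$ and the monotonicity of $\cot$ on $(0,\pi)$; the second inequality follows by the symmetric argument from $B$. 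I checked the details: the image formulas, the external (rather than internal) character of the relevant tangencies and disjointnesses (forced because all the image circles touch $\ell$ at distinct points from the same side), the uniform treatment of the interior and exterior cases, and the final cotangent manipulation are all sound; your parenthetical claim that $C'_1,C'_2$ lie in the curvilinear triangle $R$ is not actually needed, since facts (i)--(iii) follow directly from tangency and disjointness. What your approach buys is a self-contained proof that bypasses Lemmas~\ref{lem:circle-lemma1} and~\ref{lem:circle-lemma2} entirely, merges the interior and exterior cases into one computation, and yields the slightly sharper quantitative conclusion $2\cot\bigl(\tfrac{a+b}{2}\bigr)\ge\cot\bigl(\tfrac{a}{2}\bigr)+\cot\bigl(\tfrac{\phi}{2}\bigr)$; what the paper's route buys is that the two monotonicity lemmas are stated and proved anyway and are reused informally elsewhere in the discussion of the gadget realizations.
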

\begin{proof}
Consider the circle $C_1$. There exist two circles $C''_i$ for
$i=1,2$ with radius $r''_i$, that are both tangent to $C_1$ and also
tangent to $C$ at points $A'$ and $B'$ respectively. Note that for
$i=1,2$ circle $C''_i$ contains circle $C'_i$, and so $r''_i\geq
r'_i$. We have that $\widehat{AOA'}<\widehat{AOB'}<\widehat{AOB}$
and therefore, by Lemmas~\ref{lem:circle-lemma1} and
\ref{lem:circle-lemma2}, $r'_i\leq r''_i<r_2$ for $i=1,2$.
Similarly, starting from circle $C_2$ we have that $r'_i\leq
r''_i<r_1$ for $i=1,2$. Now, for circle $C'_1$ we have that
$r'_2<r_1$ and $\widehat{AOA'},\widehat{A'OB'}<\pi$. Then by
Lemmas~\ref{lem:circle-lemma1} and \ref{lem:circle-lemma2} it
follows that $\widehat{A'OB'}<\widehat{AOA'}$. Similarly, we
conclude that $\widehat{A'OB'}<\widehat{B'OB}$. So, we have that:
$$|\arc{A'B'}|<|\arc{AA'}|~and~|\arc{A'B'}|<|\arc{B'B}|.$$
\end{proof}

\begin{figure}[t!]
  \centering
  \begin{minipage}[b]{.48\textwidth}
    \centering
    \subfloat[\label{fig:fig_circle_lemma_a}{}]
    {\includegraphics[width=.8\textwidth]{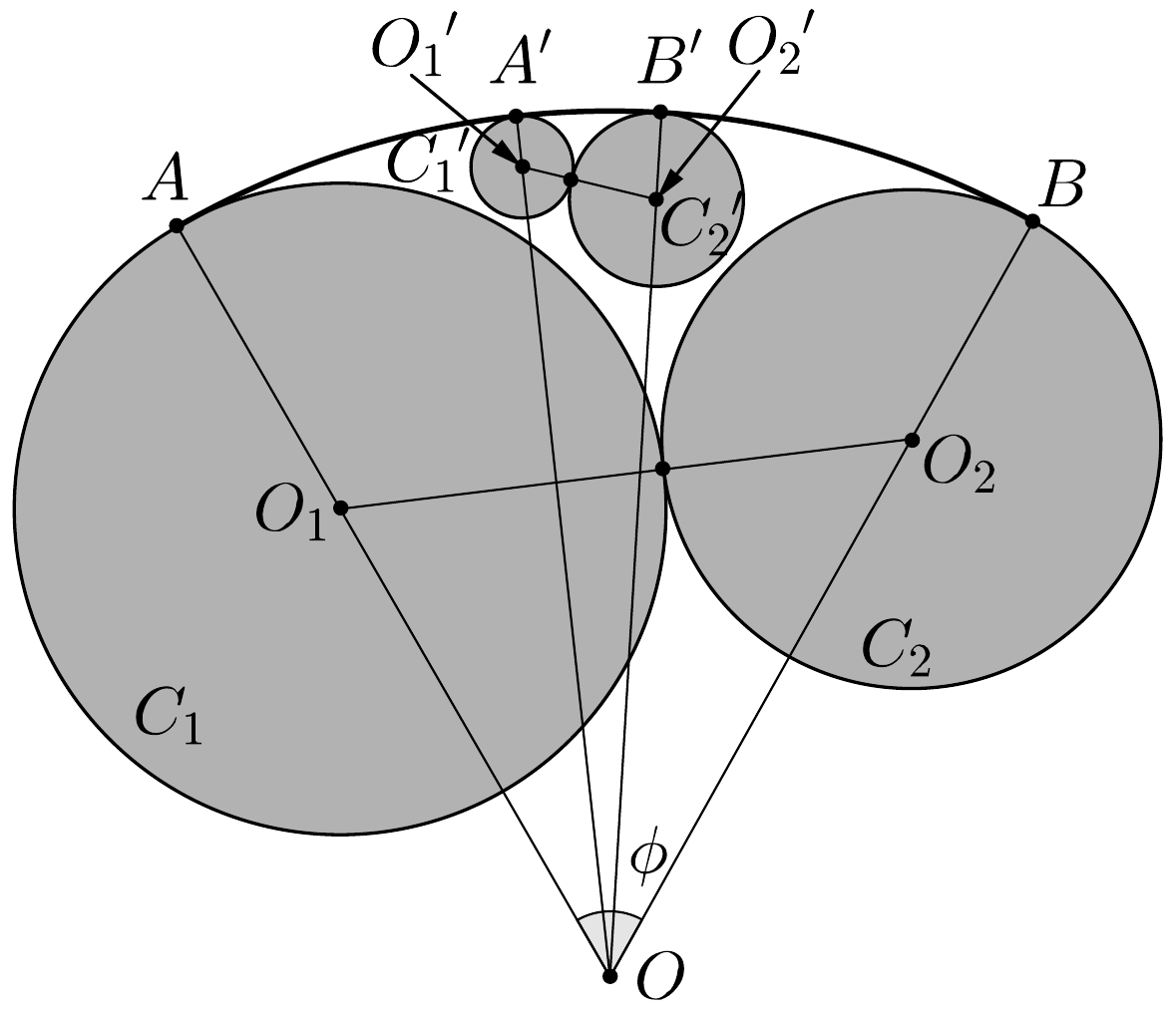}}
  \end{minipage}
  \begin{minipage}[b]{.48\textwidth}
    \centering
    \subfloat[\label{fig:fig_circle_lemma_b}{}]
    {\includegraphics[width=.7\textwidth]{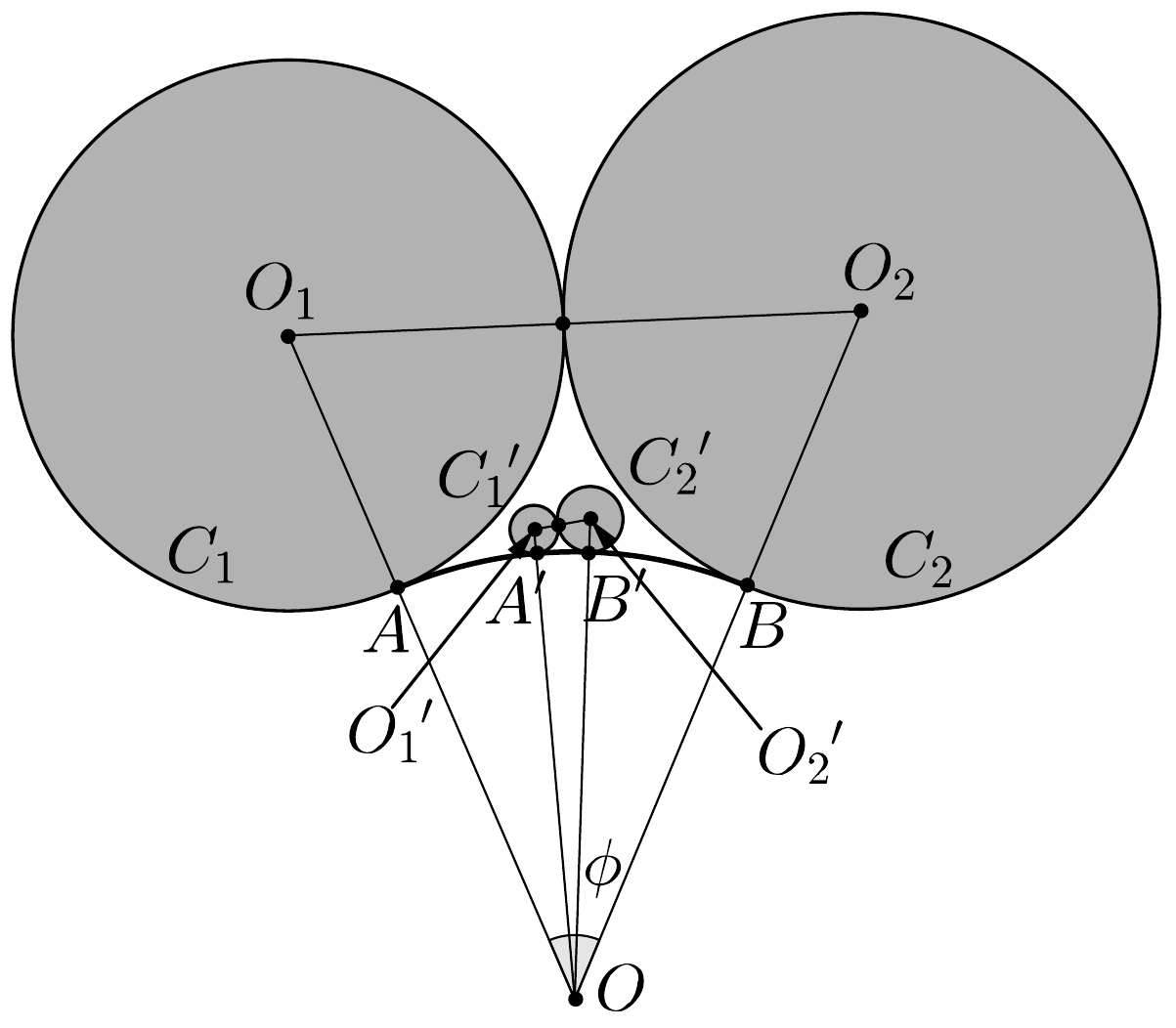}}
  \end{minipage}
  \caption{Configurations used in Lemma~\ref{lem:circle-lemma}.}
  \label{fig:circle-lemma}
\end{figure}

Note that Lemma~\ref{lem:circle-lemma} is still true when the four
circles lie either in the interior or on the exterior of circle $C$.
Let $e=(u,v)$ be an edge of the innermost interior face of the
octahedron graph $G_{oct}$, as shown in Fig.\ref{fig:counter}.
Assume that in a realization of the octahedron as a system of
circles, $e$ is drawn as an arc of a circle $C(O,r)$, with
$\widehat{uOv}=\phi<\pi$. The next lemma proves that this assumption
leads to a contradiction to the existence of a realization of graph
$G_{oct}^{aug}$ as a system of circles.

\begin{lemma}
Consider a circle $C(O,r)$ and assume that edge $e=(u,v)\in
E[G_{oct}]$ is drawn as an arc segment $\arc{uv}$ of $C$ such that
$\widehat{uOv}=\phi<\pi$. If we attach two pairs of gadget-subgraphs
along $e$, as shown in Fig.\ref{fig:edge_split}, then the resulting
subgraph of $G_{oct}^{aug}$ does not admit a realization as a system
of circles.\label{lem:circle-basic}
\end{lemma}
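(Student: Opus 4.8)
The plan is to argue by contradiction. Assume the subgraph formed by $e$ together with the four attached gadget-subgraphs admits a realization as a system of circles in which $e$ --- after subdivision into the path $u\to z_1\to z_2\to\cdots\to z_8\to v$ --- is drawn as the arc $\arc{uv}$ of $C$ with $\widehat{uOv}=\phi<\pi$. Since the whole path is this arc, the points $z_1,\dots,z_8$ appear in this order along $\arc{uv}$, and consequently every sub-arc of $\arc{uv}$ subtends at $O$ an angle strictly less than $\pi$. Recall from Fig.~\ref{fig:edge_split} that the four gadget-subgraphs join $z_1$--$z_6$, $z_2$--$z_5$, $z_3$--$z_8$ and $z_4$--$z_7$. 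Because the endpoint pairs $\{z_1,z_6\}$ and $\{z_3,z_8\}$ interleave along $\arc{uv}$ (and likewise $\{z_2,z_5\}$ interleaves $\{z_3,z_8\}$, and $\{z_1,z_6\}$ interleaves $\{z_4,z_7\}$), planarity forces $z_1$--$z_6$ and $z_2$--$z_5$ to be attached into one of the two faces incident to the path, and $z_3$--$z_8$ and $z_4$--$z_7$ into the other.

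Next I would pin down the shape of each gadget's realization. By Lemma~\ref{lem:gadget-realization} (applied to each gadget; its proof concerns only gadget vertices, which all have degree four), the skeleton of the gadget joining two subdivision vertices $a$ and $b$ consists of two circles tangent to each other at the gadget vertex $w$: one, call it $C_a$, passing through $a$ and carrying the two gadget-edges incident to $a$, the other, $C_b$, passing through $b$ and carrying the two gadget-edges incident to $b$. I claim $C$ and $C_a$ are tangent at $a$. Indeed, $a$ has degree four and lies on exactly two circles; its two path-edges are arcs of $C$ while its two gadget-edges are arcs of $C_a$; and since in the planar embedding the gadget occupies a single angular sector at $a$, the two gadget-edges (equivalently, the two path-edges) are consecutive in the rotation at $a$. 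A transversal crossing of two circles at a common vertex produces the alternating rotation pattern, which is excluded here, so $C$ and $C_a$ must be tangent at $a$; the same argument applies at $b$. Hence each gadget contributes a pair of mutually tangent circles, each tangent to $C$ at one of the gadget's two endpoint vertices. Such a circle meets $C$ only in its point of tangency and does not cross $C$, so it lies entirely on one side of $C$, namely the side of the face into which the gadget was attached; in particular the four circles coming from the two gadgets attached into the same face all lie on the same side of $C$.

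I would then apply Lemma~\ref{lem:circle-lemma} twice. For the side carrying the gadgets $z_1$--$z_6$ and $z_2$--$z_5$, take the circle $C$ of the lemma to be our $C$, with $A=z_1$, $B=z_6$, $A'=z_2$, $B'=z_5$: the two circles of the $z_1$--$z_6$ gadget play the roles of $C_1,C_2$ and the two circles of the $z_2$--$z_5$ gadget those of $C_1',C_2'$. The four circles lie on the same side of $C$, the angle $\widehat{z_1Oz_6}$ is less than $\pi$, and the tangency points occur in the order $z_1,z_2,z_5,z_6$ along $\arc{z_1z_6}\subset\arc{uv}$. Finally, the two pairs have no crossing and no touching point: a common point of two distinct circles of a realization is a vertex of the graph, but the vertices lying on the circles realizing the skeleton of one gadget all belong to that gadget and the two gadgets are vertex-disjoint; and no two of these circles coincide, because a circle tangent to $C$ at two distinct points must equal $C$, whereas each gadget circle differs from $C$ (it carries two gadget-edge arcs at its endpoint, which cannot be arcs of $C$ since $C$ already carries the two path-edges there). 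Lemma~\ref{lem:circle-lemma} therefore yields $|\arc{z_2z_5}|<|\arc{z_5z_6}|$. The symmetric application to the other side, with $A=z_3$, $B=z_8$, $A'=z_4$, $B'=z_7$, yields $|\arc{z_4z_7}|<|\arc{z_3z_4}|$.

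It remains to close the loop. Since $z_3$ and $z_4$ lie strictly between $z_2$ and $z_5$ on $\arc{uv}$, we have $|\arc{z_3z_4}|<|\arc{z_2z_5}|$; and since $z_5$ and $z_6$ lie strictly between $z_4$ and $z_7$, we have $|\arc{z_5z_6}|<|\arc{z_4z_7}|$. Chaining the four inequalities gives
$$|\arc{z_3z_4}|<|\arc{z_2z_5}|<|\arc{z_5z_6}|<|\arc{z_4z_7}|<|\arc{z_3z_4}|,$$
a contradiction; hence no such realization exists. I expect the delicate points to be (i) the tangency of each gadget-skeleton circle with $C$ at the corresponding subdivision vertex, which rests on the rotation pattern of a degree-four vertex of a circle system, and (ii) the verification that the two gadgets attached into a common face yield a configuration satisfying \emph{all} hypotheses of Lemma~\ref{lem:circle-lemma} --- in particular that the four circles neither coincide with each other or with $C$ nor share any point --- which I would derive from vertex-disjointness of the gadgets together with the fact that a circle tangent to $C$ at two distinct points equals $C$.
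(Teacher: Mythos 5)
Your proposal is correct and follows essentially the same route as the paper: invoke Lemma~\ref{lem:gadget-realization} to realize each gadget as a pair of tangent circles tangent to $C$ at its two endpoints, apply Lemma~\ref{lem:circle-lemma} to the two interleaved gadget pairs, and chain the resulting arc inequalities with the nesting inequalities $\arc{z_iz_j}\leq\arc{z_{i'}z_{j'}}$ to reach $\arc{z_4z_7}<\arc{z_4z_7}$. You supply more detail than the paper does on two points it leaves implicit --- the tangency of the gadget circles to $C$ at the subdivision vertices (via the rotation at a degree-four vertex) and the verification of the hypotheses of Lemma~\ref{lem:circle-lemma} --- but the argument is the same.
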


\begin{proof}
By Lemma~\ref{lem:gadget-realization} we have that each
gadget-subgraph is drawn as a pair of tangent circles that are also
tangent to the arc $\arc{uv}$ at points $z_i, z_j$. Furthermore, the
two tangent circles have disjoint interiors since they are not on
the outerface of $G_{oct}^{aug}$, as in
Fig.\ref{fig:gadget-circles_a}. By applying
Lemma~\ref{lem:circle-lemma} to the pair of gadget-subgraphs with
endpoints $z_1$, $z_6$ and $z_2$, $z_5$, we have:
$$\arc{z_2z_5}<\arc{z_1z_2}~and~\arc{z_2z_5}<\arc{z_5z_6}$$
Similarly, for the pair of gadget-subgraphs with endpoints $z_3$,
$z_8$ and $z_4$, $z_7$, we have:
$$\arc{z_4z_7}<\arc{z_3z_4}~and~\arc{z_4z_7}<\arc{z_7z_8}$$
Combining those inequalities and the fact that
$\arc{z_iz_j}\leq\arc{z_{i'}z_{j'}}$ for $i'\leq i\leq j\leq j'$, we
have:
$$\arc{z_4z_7}<\arc{z_3z_4}\leq
\arc{z_2z_5}<\arc{z_5z_6}\leq\arc{z_4z_7}$$
that is $\arc{z_4z_7}<\arc{z_4z_7}$, which is a contradiction.
\end{proof}

In order to complete the proof that graph $G_{oct}^{aug}$ does not
admit a realization as a system of circles, it suffices to show that
in any realization of $G_{oct}$ (and therefore of $G_{oct}^{sub}$),
at least one edge of the innermost interior face meets the
requirements of Lemma~\ref{lem:circle-basic}.

\begin{lemma}
In any realization of the octahedron as a system of circles, at
least one edge, say $e=(u,v)$, of the innermost interior face is
drawn as an arc of a circle $C(O,r)$ so that
$\widehat{uOv}=\phi<\pi$. \label{lem:face}
\end{lemma}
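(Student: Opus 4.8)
The plan is to run the case analysis furnished by Lemma~\ref{lem:octahedron}. Write $F$ for the innermost interior face, i.e.\ the unique triangular face of $G_{oct}$ sharing no vertex with the outerface; equivalently, in the face-adjacency graph of $G_{oct}$, $F$ is the face antipodal to the outerface. Let $\phi_1,\phi_2,\phi_3\in(0,2\pi)$ be the central angles of the three circular arcs bounding $F$ in a given realization; the assertion is precisely $\min_i\phi_i<\pi$, and for this it suffices to prove $\phi_1+\phi_2+\phi_3\le 2\pi$ in each of the three realization types of Lemma~\ref{lem:octahedron}. Recall from the proof of that lemma that the orientation of the dual records, for each arc lying on a circle $c$, whether the incident face lies in the interior of $c$ (that face is the head of the arc) or in its exterior (that face is the tail); reading this off for $F$ yields: in the three-mutually-crossing-circles realization, $F$ is interior to all three circles; in the four-mutually-touching-circles realization with pairwise disjoint interiors, $F$ is the interior of a single circle; and in the four-mutually-touching-circles realization with one enclosing circle, $F$ is exterior to each of the three enclosed circles (and interior to the enclosing one), so its three vertices are the three pairwise tangency points of the enclosed circles.

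The uniform tool is the theorem on turning of tangents: since $G_{oct}$ is $2$-connected, $\partial F$ is a simple closed curve, and traversing it once with $F$ on the left, the total turning equals $2\pi$ and decomposes as the sum of the three exterior angles at the corners plus the sum of the turnings along the three arcs. An arc of central angle $\phi$ on a circle $c$ contributes $+\phi$ to the turning if $F$ lies inside $c$ and $-\phi$ if $F$ lies outside $c$; each exterior angle lies in $(-\pi,\pi]$, and equals exactly $\pi$ at a cusp. Recall also from the proof of Lemma~\ref{lem:octahedron} that in the two four-circle realizations every circle carries exactly three vertices.

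In the three-crossing-circles case, $F$ is the intersection of the three disks, hence convex, so all three exterior angles are $\ge 0$; each of the three bounding arcs contributes $+\phi_i$, and the turning identity gives $\phi_1+\phi_2+\phi_3\le 2\pi$. In the four-touching-disjoint case, $F$ is the interior of a single circle on which the three vertices of $F$ cut out exactly the three edges of $F$, so $\phi_1+\phi_2+\phi_3=2\pi$ directly. In the four-touching-enclosing case, $F$ is exterior to each of its three bounding circles, so each arc contributes $-\phi_i$; each corner of $F$ is a tangency point of two of those circles, and since $F$ is exterior to both of the corresponding disks (which are externally tangent there), their arcs meet $F$ along a cusp, whence each exterior angle equals $\pi$; the turning identity yields $-(\phi_1+\phi_2+\phi_3)+3\pi=2\pi$, i.e.\ $\phi_1+\phi_2+\phi_3=\pi$. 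In all three cases $\phi_1+\phi_2+\phi_3\le 2\pi$, so some $\phi_i<\pi$; taking $e=(u,v)$ to be the corresponding edge of $F$ and $C(O,r)$ the circle carrying it gives $\widehat{uOv}=\phi_i<\pi$, as required.

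I expect the delicate points to be two. First, correctly reading the inside/outside position of $F$ relative to each of its three circles from the oriented dual of Lemma~\ref{lem:octahedron}, which includes verifying in each realization that the face antipodal to the outerface (i.e.\ $F$) is exactly the face singled out there (the out-degree-three ``interior to all three circles'' face in the crossing case, a circle-interior face in the disjoint-touching case, and the in-degree-three ``innermost'' face bounded by arcs of the three enclosed circles in the enclosing case). Second, in the enclosing case, checking that all three corners of $F$ are genuine cusps, so that the exterior angles are exactly $\pi$. The turning-of-tangents bookkeeping itself is then routine.
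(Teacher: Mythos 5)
Your proof is correct, and it makes the only structural choice that the statement forces, namely reducing via Lemma~\ref{lem:octahedron} to the three non-equivalent realizations and checking each; but the way you discharge the cases is genuinely different from the paper. The paper uses three separate elementary observations: in the one-enclosing-circle realization the corners of the innermost face are the pairwise tangency points of the three enclosed circles, which lie on the sides of the triangle of their centers, so each central angle equals an angle of that triangle and is less than $\pi$; in the disjoint-interiors realization the face is a circle cut by three points into arcs summing to $2\pi$; and in the three-crossing-circles realization it argues via the lens of two intersecting circles that the two arcs confining their common points cannot both subtend angles at least $\pi$. You instead apply one uniform tool, the turning-of-tangents identity for the simple closed boundary of $F$, and obtain $\phi_1+\phi_2+\phi_3\le 2\pi$ in all three cases (with the exact values $2\pi$ and $\pi$ in the two touching cases), whence $\min_i\phi_i\le 2\pi/3<\pi$. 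Your bookkeeping checks out: the sign of each arc's contribution is governed by whether $F$ is interior or exterior to the carrying circle, which is precisely the information recorded by the oriented dual in Lemma~\ref{lem:octahedron}; convexity of the triple intersection of disks handles the corner terms in the crossing case; and the corners in the enclosing case are outward cusps of exterior angle exactly $\pi$ (consistent with the sanity check that three equal mutually tangent circles give $3\cdot\pi/3=\pi$). Your identification of $F$ in each realization also agrees with the paper's. What your route buys is uniformity and a sharper quantitative conclusion on the sum of the three central angles; what the paper's route buys is that each case needs only a one-line classical fact and no appeal to the Umlaufsatz or to sign conventions at cusps.
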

\begin{proof}
By Lemma~\ref{lem:octahedron}, it suffices to show that the lemma
holds for the three non-equivalent representations shown in
Fig.\ref{fig:octahedron}. For the first two representations of
Fig.\ref{fig:octahedron}, the result is almost straightforward. More
precisely, let $C_1(O_1,r_1)$, $C_2(O_2,r_2)$ and $C_3(O_3,r_3)$ be
the circles (white-colored in Fig.\ref{fig:octahedron_1}) that
define the innermost interior face (refer to the innermost
gray-shaded face of Fig.\ref{fig:octahedron_1}) of the first
representation. The three points of this face lie on the edges of
the triangle defined by points $O_1$, $O_2$ and $O_3$, since circles
$C_1$, $C_2$ and $C_3$ are mutually tangent. Then, at least one of
the angles of the triangle is less than $\pi$, as desired. In the
second representation, the innermost interior face is a circle
(refer to the innermost gray-shaded circle of
Fig.\ref{fig:octahedron_2}) with three distinct points on its
boundary. Trivially, at least one of the arcs defined by those
points corresponds to an angle that is smaller than $\pi$.

\begin{figure}[t]
  \centering
  \begin{minipage}[b]{.48\textwidth}
    \centering
    \subfloat[\label{fig:interior_face1}{}]
    {\includegraphics[width=.6\textwidth]{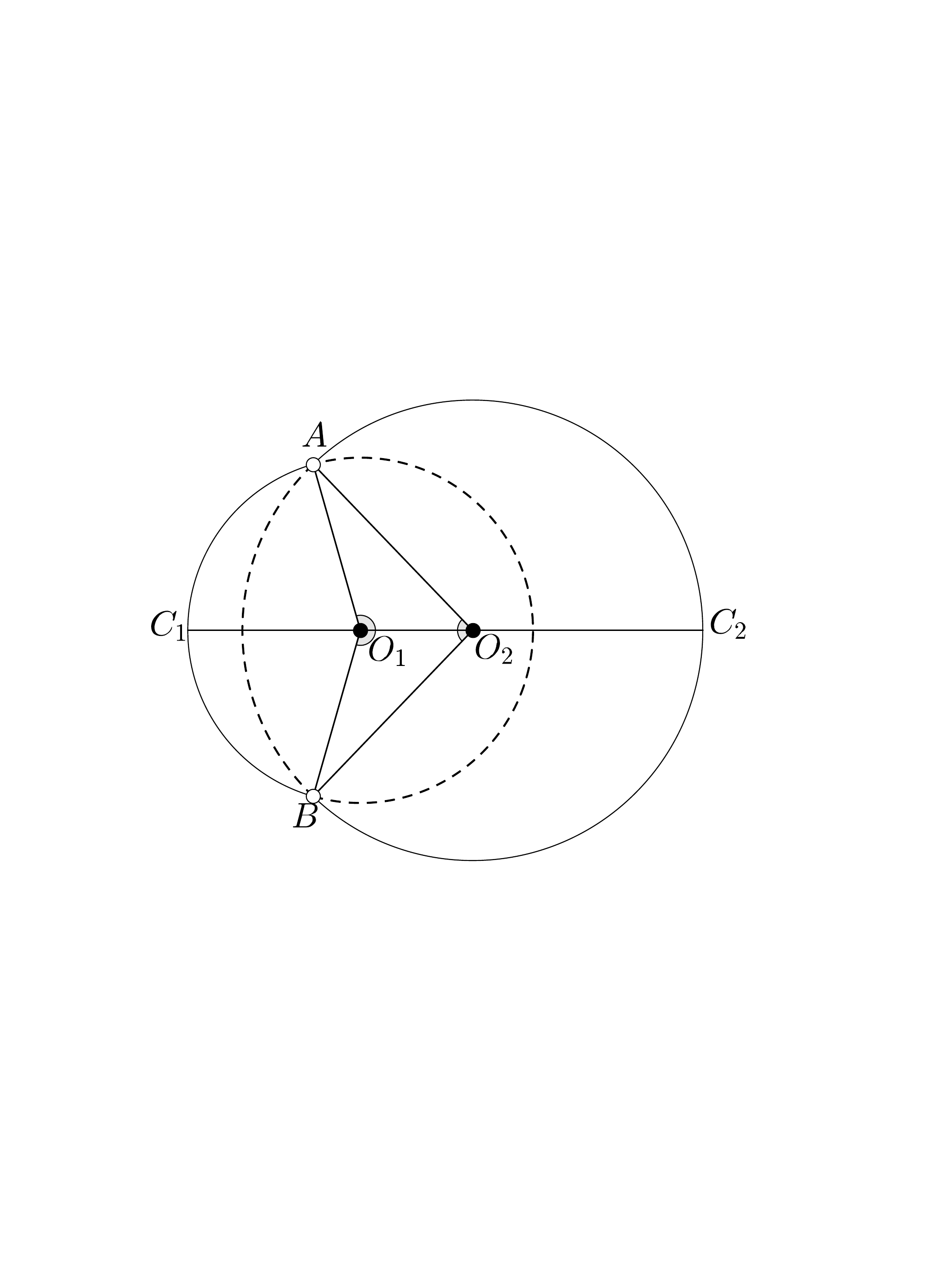}}
  \end{minipage}
  \begin{minipage}[b]{.48\textwidth}
    \centering
    \subfloat[\label{fig:interior_face2}{}]
    {\includegraphics[width=.7\textwidth]{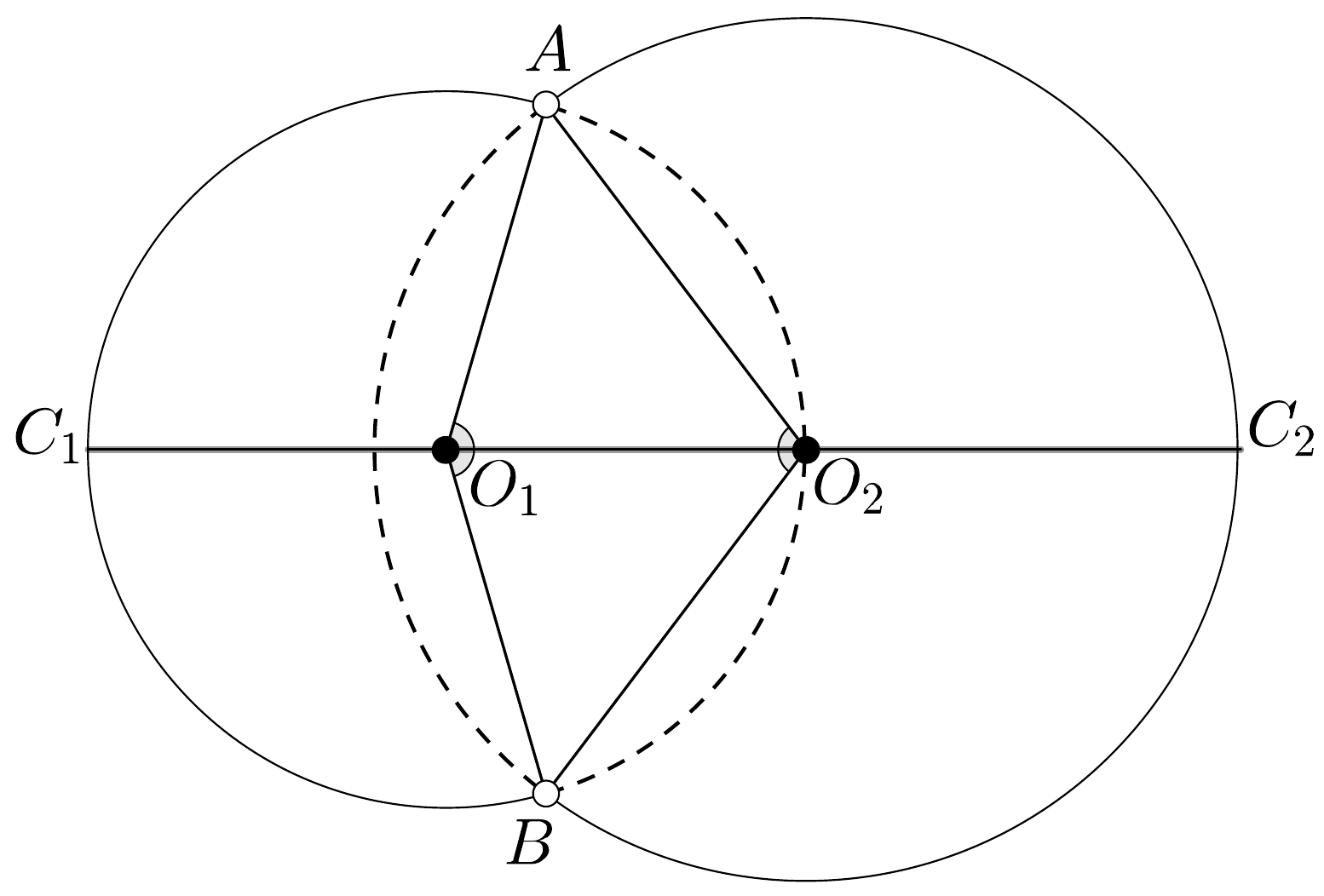}}
  \end{minipage}
  \caption{Configurations considered in Lemma~\ref{lem:face}.}
  \label{fig:interiorface}
\end{figure}

We now turn our attention to the more interesting case where the
realization of the octahedron graph as a system of circles is
implied by three mutually crossing circles (refer to
Fig.\ref{fig:octahedron_3}). First, consider two circles
$C_1(O_1,r_1)$ and $C_2(O_2,r_2)$  intersecting at points $A$ and
$B$ and assume w.l.o.g. that their centers lie along the x-axis,
such that $O_1$ is to the left of $O_2$ (see
Fig.\ref{fig:interiorface}). We are interested in the angles that
correspond to the two arcs of $C_1$ and $C_2$ that ``confine'' the
common points of the two circles (refer to the dashed drawn arcs
incident to $A$ and $B$ in Fig.\ref{fig:interiorface}). It is not
difficult to see that $\widehat{AO_1B}$ and $\widehat{AO_2B}$ cannot
be both greater than $\pi$. Consider now two of the crossing circles
of the realization of the octahedron of Fig.\ref{fig:octahedron_3}.
From the above, it follows that at least one of the two arcs that
``confine'' their common points, corresponds to an angle that is
less than $\pi$. Since the innermost interior face is also confined
by the two arcs, it follows that at least one edge of the innermost
interior face has the desired property.
\end{proof}

\begin{theorem}
There exists a connected 4-regular planar graph that does not admit
a realization as a system of circles. \label{thm:counter}
\end{theorem}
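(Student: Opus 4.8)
The plan is to take as the desired counterexample the graph $G_{oct}^{aug}$ constructed above: the subdivision $G^{sub}_{oct}$ of the octahedron in which every edge is replaced by a path with eight internal vertices $z_1,\dots,z_8$, together with the four copies of the gadget-subgraph attached along each such path exactly as specified (endpoints $z_1,z_6$; $z_2,z_5$; $z_3,z_8$; $z_4,z_7$). First I would verify the elementary properties. The graph is connected and planar by construction. It is $4$-regular because every subdivision vertex $z_i$ has degree $2$ in $G^{sub}_{oct}$ and is an endpoint of exactly one gadget-subgraph, which contributes two more incidences, while every non-endpoint vertex of a gadget-subgraph already has degree $4$; thus $G_{oct}^{aug}$ is indeed a connected $4$-regular planar graph.

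Next I would argue by contradiction. Assume $G_{oct}^{aug}$ admits a realization $R$ as a system of circles. By Corollary~\ref{cor:independent-goct}, in $R$ the realization of every gadget-subgraph is independent of the realization of $G^{sub}_{oct}$; hence every circle of $R$ carries edges exclusively of one gadget-subgraph or exclusively of $G^{sub}_{oct}$. Deleting from $R$ all circles that carry gadget edges therefore leaves a genuine realization $R'$ of $G^{sub}_{oct}$. Since the equivalence relation on realizations ignores degree-$2$ vertices, $R'$ induces a realization of $G_{oct}$ itself, which by Lemma~\ref{lem:octahedron} must be equivalent to one of the three realizations in Figures~\ref{fig:octahedron_1}--\ref{fig:octahedron_3}. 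In particular, $R'$ has a well-defined innermost interior face, and that face is a genuine triangular face of $G_{oct}$ all of whose edges carry gadget-subgraphs in $G_{oct}^{aug}$; moreover this face shares no vertex and no edge with the outerface.

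Now I would invoke Lemma~\ref{lem:face}: in whichever of the three realizations $R'$ is equivalent to, at least one edge $e=(u,v)$ of the innermost interior face is drawn as an arc of a circle $C(O,r)$ with $\widehat{uOv}=\phi<\pi$. Because the gadget-subgraphs attached along $e$ are not incident to the outerface of $G_{oct}^{aug}$, their skeletons are realized (by Lemma~\ref{lem:gadget-realization}) as pairs of tangent circles with disjoint interiors, tangent to the arc $\arc{uv}$, appearing in the cyclic order $z_1,\dots,z_8$ along that arc. This is precisely the situation of Lemma~\ref{lem:circle-basic}, which then shows that the subgraph of $G_{oct}^{aug}$ consisting of $e$ together with the two pairs of gadget-subgraphs attached along it cannot be realized as a system of circles — contradicting the existence of $R$. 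Hence no such realization exists, and $G_{oct}^{aug}$ is the required graph; since the same construction works after subdividing more edges or attaching further gadgets, one in fact obtains infinitely many such graphs. The only real obstacle I anticipate is the bookkeeping that ties the abstract ``innermost interior face'' supplied by Lemmas~\ref{lem:octahedron} and~\ref{lem:face} to a concrete gadget-bearing face of $G_{oct}^{aug}$ and checks that the local configuration around its critical edge matches the hypotheses of Lemma~\ref{lem:circle-basic}; all the underlying geometry has already been isolated in Lemmas~\ref{lem:circle-lemma1}--\ref{lem:circle-basic}, so beyond this gluing step the argument is a straightforward assembly of previously established facts.
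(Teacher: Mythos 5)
Your proposal is correct and follows essentially the same route as the paper: it assembles Corollary~\ref{cor:independent-goct}, Lemma~\ref{lem:octahedron}, Lemma~\ref{lem:face}, and Lemma~\ref{lem:circle-basic} to derive the contradiction for $G_{oct}^{aug}$, which is exactly the paper's (more tersely stated) argument. The extra bookkeeping you include --- checking $4$-regularity and tying the innermost interior face to a gadget-bearing face realized as in Fig.~\ref{fig:gadget-circles_a} --- is consistent with the surrounding text of the paper.
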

\begin{proof}
Lemma~\ref{lem:face} states that in any realization of the
octahedron graph as a system of circles, at least one edge of the
innermost interior face is drawn as an arc segment of angle less
than $\pi$. Hence, by Lemma~\ref{lem:circle-basic} it follows that
$G_{oct}^{aug}$ does not admit a realization as a system of circles.
\end{proof}

\begin{theorem}
There exists an infinite class of connected 4-regular planar graphs
that do not admit a realization as a system of circles.
\label{thm:class-counter}
\end{theorem}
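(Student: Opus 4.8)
The plan is to bootstrap from Theorem~\ref{thm:counter}, which already furnishes one connected 4-regular planar graph, namely $G_{oct}^{aug}$, that admits no realization as a system of circles. To obtain infinitely many such graphs, I would produce an infinite family $\{G_k\}_{k\geq 1}$ each of which \emph{contains $G_{oct}^{aug}$ as an induced substructure in a way that forces the same obstruction}. The cleanest route: observe that the entire argument of Lemmas~\ref{lem:gadget-realization}--\ref{lem:circle-basic} and Lemma~\ref{lem:face} is local to a single triangular face of $G_{oct}$ together with the six gadget-subgraphs attached along its edges, and uses only that (i)~each gadget-subgraph is realized by two tangent circles by Lemma~\ref{lem:independent}, and (ii)~the innermost face of the octahedron has an edge subtending an angle $<\pi$. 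Neither fact is disturbed by enlarging the ambient graph, provided the enlargement keeps the relevant face triangular and keeps the gadget-subgraphs in their non-outerface position.

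Concretely, I would take $G_k$ to be the disjoint union of $k$ copies of $G_{oct}^{aug}$ — but since we want \emph{connected} graphs, instead I would build $G_k$ by a ``chaining'' construction: start from a 3-connected 4-regular planar graph $H_k$ that contains $k$ vertex-disjoint triangular faces none of which is the outerface and none of which shares a vertex with the outerface (for instance take $k$ disjoint octahedra and merge them into one 4-regular planar graph using Manca-type operations, or simply take a large enough 4-regular triangulation of a disk), then subdivide each edge of each of these $k$ triangular faces into a path of 8 internal vertices and attach four gadget-subgraphs to each such path exactly as in the definition of $G_{oct}^{aug}$. Call the result $G_k$. Each $G_k$ is connected, 4-regular and planar by construction, and for distinct $k$ the graphs $G_k$ have different numbers of vertices, so the family is genuinely infinite.

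The heart of the argument is then a verbatim re-run of the proof of Theorem~\ref{thm:counter} applied to any one of the $k$ distinguished triangular faces of $G_k$. Suppose $G_k$ had a realization $R$ as a system of circles. By Lemma~\ref{lem:independent} (applied to each of the $4k$ gadget-subgraphs in turn) the realization of each gadget-subgraph is independent of the realization of the underlying subdivision $G_k^{sub}$, and deleting all gadget-interior vertices yields a realization of $G_k^{sub}$, hence of $H_k$, as a system of circles. Fix one of the $k$ distinguished triangular faces $T$ of $H_k$; since $T$ is not the outerface and shares neither a vertex nor an edge with the outerface, every gadget-subgraph attached along an edge of $T$ is realized as two tangent circles with disjoint interiors (Fig.~\ref{fig:gadget-circles_a}), exactly as in the octahedron argument. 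The analogue of Lemma~\ref{lem:face} — that at least one edge of $T$ is drawn as an arc subtending an angle $<\pi$ at its circle's center — holds by the same elementary geometry: $T$ is bounded by three arcs of (at most) three circles whose pairwise ``confining'' arcs cannot all subtend angles $\geq\pi$, so the three-arc face $T$ has an edge of angle $<\pi$. Applying Lemma~\ref{lem:circle-basic} to that edge of $T$ together with its two pairs of gadget-subgraphs yields the chain of strict inequalities $\arc{z_4z_7}<\arc{z_4z_7}$, a contradiction. Hence no $G_k$ admits a realization, and $\{G_k\}_{k\geq1}$ is the desired infinite class.

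The main obstacle is not the geometric contradiction — that is inherited wholesale from the earlier lemmas — but the purely combinatorial bookkeeping of \emph{constructing} the infinite family so that each member is simultaneously (a)~simple, (b)~connected, (c)~4-regular, (d)~planar, and (e)~contains a triangular face that is ``internal'' in the strong sense required (disjoint from the outerface in both vertices and edges) so that Lemma~\ref{lem:face}'s case analysis applies unchanged. One has to check that subdividing edges and attaching gadget-subgraphs preserves 4-regularity (the gadget endpoints are degree-2 and the attachment raises them to degree 4, while the gadget's internal structure is 4-regular away from its two endpoints), and that the attachments can be done respecting a fixed planar embedding; these are routine but must be stated carefully. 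Once the construction is pinned down, the non-realizability of every $G_k$ follows by citing Lemmas~\ref{lem:independent}, \ref{lem:circle-basic} and the (face-localized) argument of Lemma~\ref{lem:face} exactly as in Theorem~\ref{thm:counter}.
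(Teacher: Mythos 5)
Your overall strategy --- manufacture infinitely many graphs each containing an octahedron-plus-gadgets obstruction --- is in the right spirit, but the specific route you chose has a genuine gap, and it is considerably harder than what is needed. The gap is the step where you claim that ``the analogue of Lemma~\ref{lem:face} holds by the same elementary geometry'' for an arbitrary internal triangular face $T$ of an arbitrary 3-connected 4-regular host $H_k$. The paper's proof of Lemma~\ref{lem:face} is \emph{not} a local statement about triangular faces: it is a case analysis over the three non-equivalent realizations of the octahedron, which is available only because Lemma~\ref{lem:octahedron} completely classifies them (three mutually crossing circles, or four tangent circles in one of two nestings). For a general host $H_k$ no such classification exists, and the assertion that every bounded face enclosed by three arcs must have an arc subtending an angle $<\pi$ at the centre of its circle is precisely the nontrivial geometric fact you would have to prove from scratch; the ``confining arcs'' argument you invoke applies to the lens region between two crossing circles, not to an arbitrary three-arc face whose sides may lie inside or outside their respective circles in any combination. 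There are also secondary obligations you flag but do not discharge: the realization chooses its own outerface, so you must guarantee that for \emph{every} admissible outerface of $H_k$ some distinguished triangle avoids it, and the host graphs themselves need a more careful description (note that a 4-regular planar triangulation is necessarily the octahedron, since $2n=3n-6$ forces $n=6$, so ``a large enough 4-regular triangulation'' does not exist as stated).

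All of this machinery is unnecessary, and the paper's actual argument avoids it entirely by never changing the host graph: keep $G_{oct}$ and attach \emph{more than} two pairs of gadget-subgraphs along each subdivided edge. The proof of Lemma~\ref{lem:circle-basic} only ever uses two nested pairs, and additional pairs do not interfere, so varying their number already yields infinitely many pairwise non-isomorphic connected 4-regular planar counterexamples with Lemmas~\ref{lem:face} and \ref{lem:circle-basic} applying verbatim. Alternatively, one may vary the loop-subgraphs inside the gadgets, since Lemmas~\ref{lem:gadget-realization} and \ref{lem:independent} depend only on the degree condition of the loop-subgraph and not on its internal structure. If you wish to rescue your construction you would first have to prove the general triangular-face lemma; the paper's route lets you sidestep that question altogether.
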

\begin{proof}
Recall that in order to obtain $G_{oct}^{aug}$, each edge of the
octahedron graph was augmented by two pairs of gadget-subgraphs.
However, Theorem~\ref{thm:counter} trivially holds if more than two
pairs of gadget-subgraphs are attached to each edge of $G_{oct}$,
defining thus an infinite class of connected 4-regular planar graphs
that do not admit a realization as a system of circles. An
alternative (and more interesting) class of such graphs can be
derived by replacing the octahedron graph of the loop-subgraph of
each gadget-subgraph by any 4-regular planar graph, in which one of
the edges on its outerface is replaced by a path of length two and
the additional vertex implied by this procedure is identified by
vertices $w_1$ and/or $w_2$ of the gadget-subgraph (refer to
Fig.\ref{fig:gadget}).
\end{proof}

\section{The case of biconnected 4-regular planar graphs}
\label{sec:biconnected}
In this section, we consider the case of biconnected 4-regular
planar graphs. More precisely, we will prove that there exist
infinitely many biconnected 4-regular planar graphs that do not
admit realizations as system of circles. To do so, we follow a
similar approach as the one presented in
Section~\ref{sec:connected}. Recall that graph $G_{oct}^{aug}$ that
we constructed in Section~\ref{sec:connected} was not biconnected,
since each gadget-subgraph defines two cut-vertices. In fact, all
cut-vertices of $G_{oct}^{aug}$ belong to the gadget-subgraphs. So,
for the case of biconnected 4-regular planar graphs, we will
construct a new gadget-subgraph, referred to as
\emph{bigadget-subgraph}, that does not contain cut-vertices and
simultaneously has the same properties as the corresponding ones of
Section~\ref{sec:connected} (in particular the properties implied by
Lemma~\ref{lem:gadget-realization}).

The bigadget-subgraph is illustrated in
Fig.\ref{fig:gadget-special-2}. Again, it contains exactly two
vertices of degree two, namely $v_1$ and $v_2$, which are its
endpoints. However, its skeleton now consists of seven vertices
(i.e., $v_i$, $w_i$, $w'_i$ and $w$, $i=1,2$). If we remove the
edges of the skeleton except for the edges $(w_i,w'_i)$, $i=1,2$,
the remaining graph again consists of three isolated vertices and
two disjoint biconnected graphs, which we call
\emph{biloop-subgraphs} (refer to the grey-shaded graphs of
Fig.\ref{fig:gadget-special-2}). The properties of the
bigadget-subgraph are again independent of the biloop-subgraphs,
i.e., any simple biconnected planar graph satisfying the following
degree condition can be used instead: Every vertex is of degree 4
except for exactly two vertices on the outerface that are of degree
3. The general situation is shown
in~Fig.\ref{fig:gadget-abstract-2}, where the subgraphs are drawn as
``bi-loops'' at vertices $w_i$ and $w'_i$, $i=1,2$.

\begin{figure}[t]
  \centering
  \begin{minipage}[b]{.48\textwidth}
    \centering
    \subfloat[\label{fig:gadget-special-2}{The bigadget-subgraph.}]
    {\includegraphics[width=\textwidth]{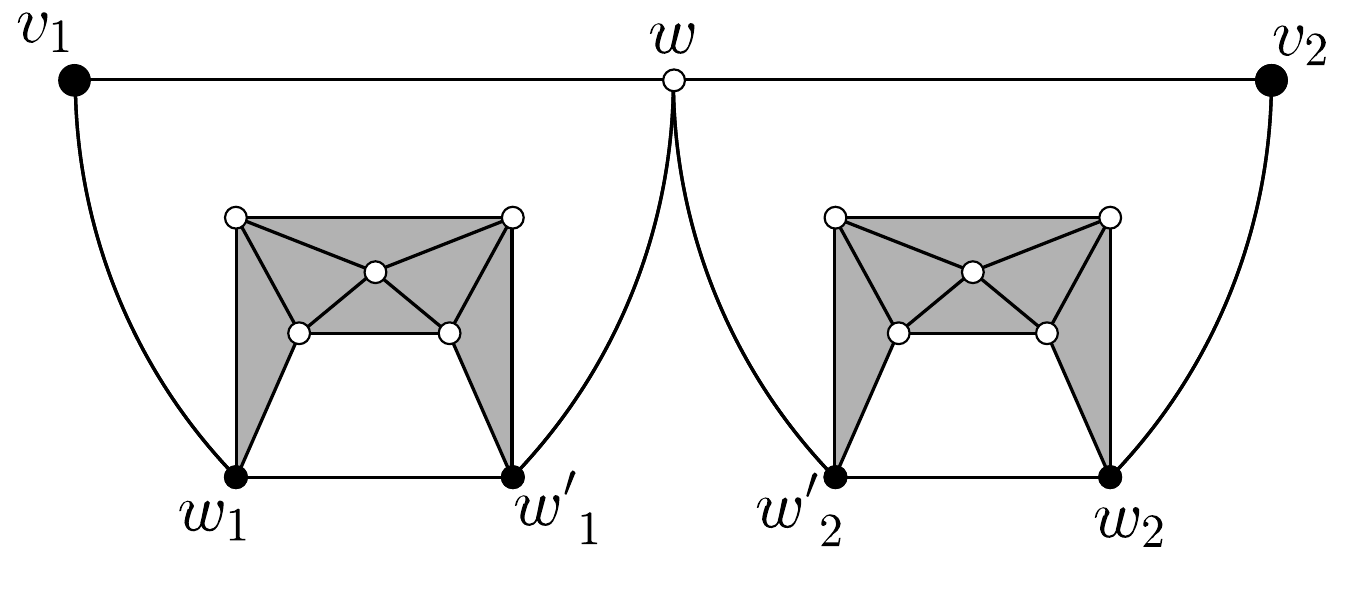}}
  \end{minipage}
  \begin{minipage}[b]{.48\textwidth}
    \centering
    \subfloat[\label{fig:gadget-abstract-2}{Abstraction of the bigadget-subgraph.}]
    {\includegraphics[width=\textwidth]{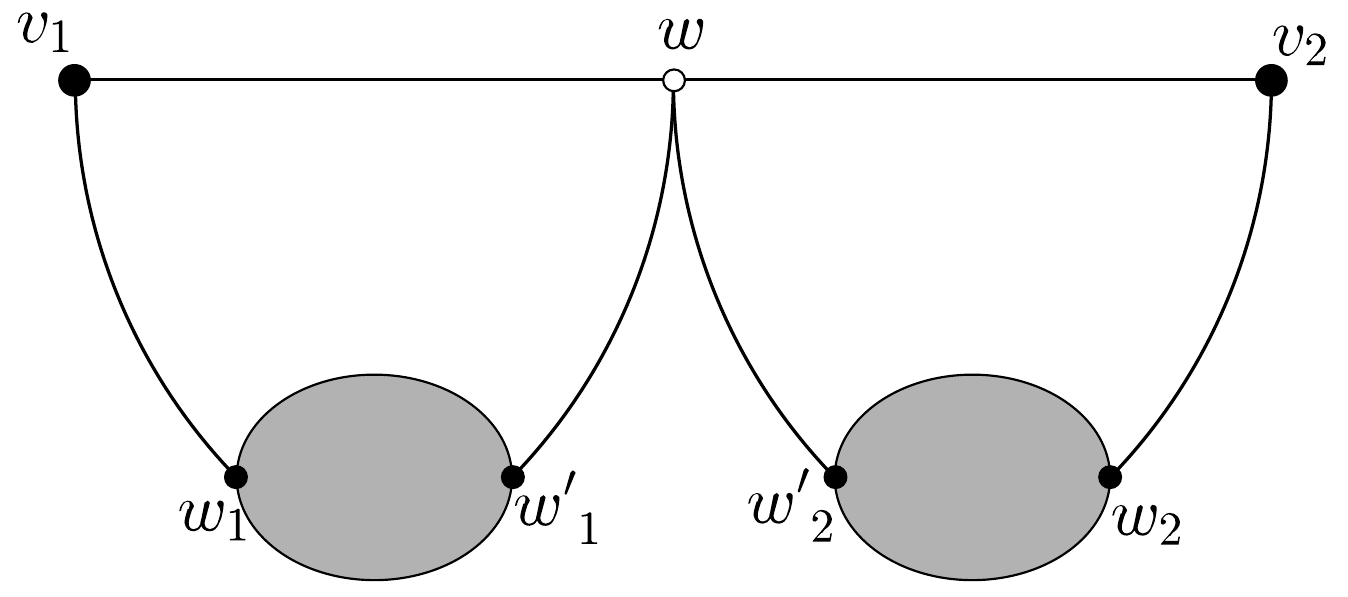}}
  \end{minipage}
  \caption{Illustrations of the new gadget-subgraph.}
  \label{fig:gadget-2}
\end{figure}

Having specified the bigadget-subgraph, we can augment the
octahedron graph similarly to Section~\ref{sec:connected}. This will
result in a biconnected graph, which we denote by $BG_{oct}^{aug}$.
Now, we are in position to prove the analogue of
Lemma~\ref{lem:gadget-realization}.

\begin{lemma}
Let $G$ be a 4-regular planar graph that contains at least one copy
of the bigadget- subgraph. Suppose that there is a realization of
$G$ as a system of circles. Then, the skeleton of each
bigadget-subgraph in this realization consists of two circles $C_1$
and $C_2$ tangent at a point $w$, where circle $C_i$ contains
vertices $\{v_i,w_i,w'_i,w\}$ and the arcs-segments realizing
edges $(v_i,w)$, $(v_i,w_i)$ and $(w'_i,w)$, for $i=1,2$.
\label{lem:gadget-realization-2}
\end{lemma}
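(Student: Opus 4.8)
The proof should parallel that of Lemma~\ref{lem:gadget-realization}, with the extra structure of the bigadget-subgraph handled by a careful analysis of which circles can pass through the vertices $w_i$, $w'_i$ and $w$. First I would fix a realization of $G$ as a system of circles and a copy of the bigadget-subgraph inside it. The key structural fact to exploit is that, although $w_i$ and $w'_i$ are no longer cut-vertices, the biloop-subgraph hanging off the pair $\{w_i,w'_i\}$ is attached to the rest of $G$ only through these two vertices; so $\{w_i,w'_i\}$ is a separation pair and the biloop-subgraph lies in a region of the plane bounded by arcs through $w_i$ and $w'_i$. Since every vertex lies on exactly two circles, I would argue that the arcs realizing the two ``skeleton-side'' edges incident to $w_i$ (namely $(v_i,w_i)$ and $(w_i,w'_i)$) lie on the same circle, because the other circle through $w_i$ must be the one carrying the two biloop-side edges at $w_i$; and symmetrically the arcs realizing $(w_i,w'_i)$ and $(w'_i,w)$ lie on a common circle. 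Combined, this forces the arcs of $(v_i,w_i)$, $(w_i,w'_i)$ and $(w'_i,w)$ all onto a single circle $C_i$, for $i=1,2$.

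Next I would introduce, as in the proof of Lemma~\ref{lem:gadget-realization}, the circle $C'_i$ carrying the arc realizing $(v_i,w)$, and claim $C_i=C'_i$. Suppose not, say $C_1\neq C'_1$. Since $w$ has degree $4$ and lies on exactly two circles, the two circles through $w$ are exactly $\{C_1,C'_1\}$, and these must coincide with $\{C_2,C'_2\}$ (the circles through $w$ from the other side); in particular $C_2\neq C'_2$ as well. But then $C_1$ and $C'_1$ are two distinct circles sharing the three points $v_1$, $v_2$ and $w$ — since $v_1\in C_1\cap C'_1$ would follow from $v_1$ lying on both the arc of $(v_1,w_1)\subseteq C_1$ and the arc of $(v_1,w)\subseteq C'_1$, and symmetrically $v_2\in C_2\cap C'_2=C_1\cap C'_1$, and $w\in C_1\cap C'_1$ — contradicting that two distinct circles meet in at most two points. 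Hence $C_i=C'_i$ for $i=1,2$, so each $C_i$ carries the arcs of $(v_i,w)$, $(v_i,w_i)$ and $(w'_i,w)$ and contains the vertices $\{v_i,w_i,w'_i,w\}$; and since $C_1$ and $C_2$ share only the point $w$, they are tangent there, which is exactly the statement.

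The main obstacle I expect is the first step: justifying that the three skeleton edges $(v_i,w_i)$, $(w_i,w'_i)$, $(w'_i,w)$ all land on one circle $C_i$. In Lemma~\ref{lem:gadget-realization} this was immediate from $w_i$ being a cut-vertex; here one must instead use that $\{w_i,w'_i\}$ is a separation pair isolating a biloop-subgraph, which topologically means the biloop-side edges at $w_i$ occur consecutively in the rotation at $w_i$ and hence cannot be split across the two circles through $w_i$ (a single circle through a vertex accounts for two cyclically-consecutive incident arcs, lying on its two sides; the only way to separate the biloop region from the skeleton region by a circle is to have that circle carry the two biloop edges). One must also rule out the degenerate possibility that $w_i$ or $w'_i$ lies on a circle that is entirely internal to the biloop-subgraph in a way that leaves no circle for the skeleton edges — but this is impossible since the two skeleton-side edges at $w_i$ must lie on some circle, and it cannot be the biloop-internal one. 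Once this local analysis at $w_i$ and $w'_i$ is in place, the remainder is the same two-circles-meet-in-at-most-two-points argument as before, so I would phrase it essentially verbatim from Lemma~\ref{lem:gadget-realization}.
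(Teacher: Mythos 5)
There is a genuine gap in your first step. You derive ``$(v_i,w_i)$, $(w_i,w'_i)$ and $(w'_i,w)$ all lie on one circle'' by chaining two local claims through the edge $(w_i,w'_i)$: that at $w_i$ the two skeleton-side edges $(v_i,w_i)$ and $(w_i,w'_i)$ share a circle, and symmetrically at $w'_i$. Neither claim is justified, and neither is true in general. Your justification --- that the two biloop-side edges at $w_i$ are consecutive in the rotation and that a circle always carries two cyclically consecutive arcs at each of its vertices --- fails at crossing points, where the four arcs alternate between the two circles, so consecutive arcs lie on \emph{different} circles. More importantly, the intermediate conclusion itself is too strong: the circle carrying $(v_i,w_i)$ may continue at $w_i$ into the interior of the biloop-subgraph (via one of the other two biloop edges at $w_i$) and reach $w'_i$ only after a longer path through it, in which case $(w_i,w'_i)$ lies on a different circle altogether. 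This is precisely the point of the remark following Lemma~\ref{lem:gadget-realization-2}: an earlier version of the statement asserted that $(w_i,w'_i)$ is an arc of $C_i$, and the corrected statement deliberately retracts that claim.

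The paper's argument bypasses $(w_i,w'_i)$ entirely. Let $S$ be the edge set of the biloop-subgraph attached at $\{w_i,w'_i\}$ (which includes $(w_i,w'_i)$); the only vertices incident both to an edge of $S$ and to an edge outside $S$ are $w_i$ and $w'_i$. The circle carrying $(v_i,w_i)$ has its second arc at $w_i$ inside $S$; since a circle is a closed curve visiting each of its vertices exactly once, it cannot leave $S$ by returning to $w_i$, so it must leave at $w'_i$, necessarily via $(w'_i,w)$. Hence $(v_i,w_i)$ and $(w'_i,w)$ lie on a common circle $C_i$ directly, with no commitment about where $(w_i,w'_i)$ is realized. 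From that point on, your argument --- introducing $C'_i$ as the circle carrying $(v_i,w)$ and showing $C_i=C'_i$ because otherwise two distinct circles would share the three points $v_1$, $v_2$ and $w$ --- coincides with the paper's proof and is correct.
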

\begin{proof}
Suppose that there is a realization of $G$ as a system of circles
and consider a copy of the bigadget-subgraph in this realization.
Note that the edges $(v_i,w_i)$ and $(w'_i,w)$ belong to the same
circle, $i=1,2$ (otherwise, one of the circles would contain vertex
$w_i$ twice and the other circle would contain vertex $w'_i$ twice,
$i=1,2$, which is not possible since every vertex belongs to exactly
two different circles). Let $C_i$ be the circle that contains
$(v_i,w_i)$ and $(w'_i,w)$ and $C'_i$ the circle that contains
$(v_i,w)$, $i=1,2$. We claim that $C_i = C'_i$, $i=1,2$. For the
sake of contradiction, assume that $C_1 \neq C'_1$. Since vertex $w$
is defined by exactly two circles, we have that
$\left\{C_1,C'_1\right\}=\left\{C_2,C'_2\right\}$, which also
implies that $C_2\neq C'_2$. Then, $C_1$ and $C'_1$ have at least
three points in common, namely vertices $v_1$, $v_2$ and $w$, from
which we obtain $C_1=C'_1$; a contradiction.
\end{proof}

\begin{remark}
In the statement of Lemma~\ref{lem:gadget-realization-2} in~\cite{DBLP:journals/jocg/BekosR15}, we erroneously wrote that circle $C_i$ the arc-segments $(v_i,w_i)$, $(w_i,w_i')$ and $(w'_i,w)$,  for $i=1,2$. This was clearly a typo, which we fix in this version.
\end{remark}

In Fig.\ref{fig:gadget-circles-2} two non-equivalent realizations
are shown. Note that these realizations actually depend on the
relative position of the two touching circles $C_1$ and $C_2$ in the
planar embedding of $G$. In particular, in
Fig.\ref{fig:gadget-circles_a-2} the two circles $C_1$ and $C_2$
contain only the biloops-subgraphs in their interior, while in
Fig.\ref{fig:gadget-circles_b-2} $C_1$ is the outerface and
therefore contains the entire graph. This implies that in any
realization of $BG_{oct}^{aug}$ as a system of circles, all
bigadget-subgraphs are drawn as in Fig.\ref{fig:gadget-circles_a-2}
except for at most one bigadget-subgraph if one of its two circles
is the outerface of $BG_{oct}^{aug}$. Hence, we can similarly prove
the analogue of Corollary~\ref{cor:independent-goct}.

\begin{figure}[t]
  \centering
  \begin{minipage}[b]{.53\textwidth}
    \centering
    \subfloat[\label{fig:gadget-circles_a-2}{}]
    {\includegraphics[width=\textwidth]{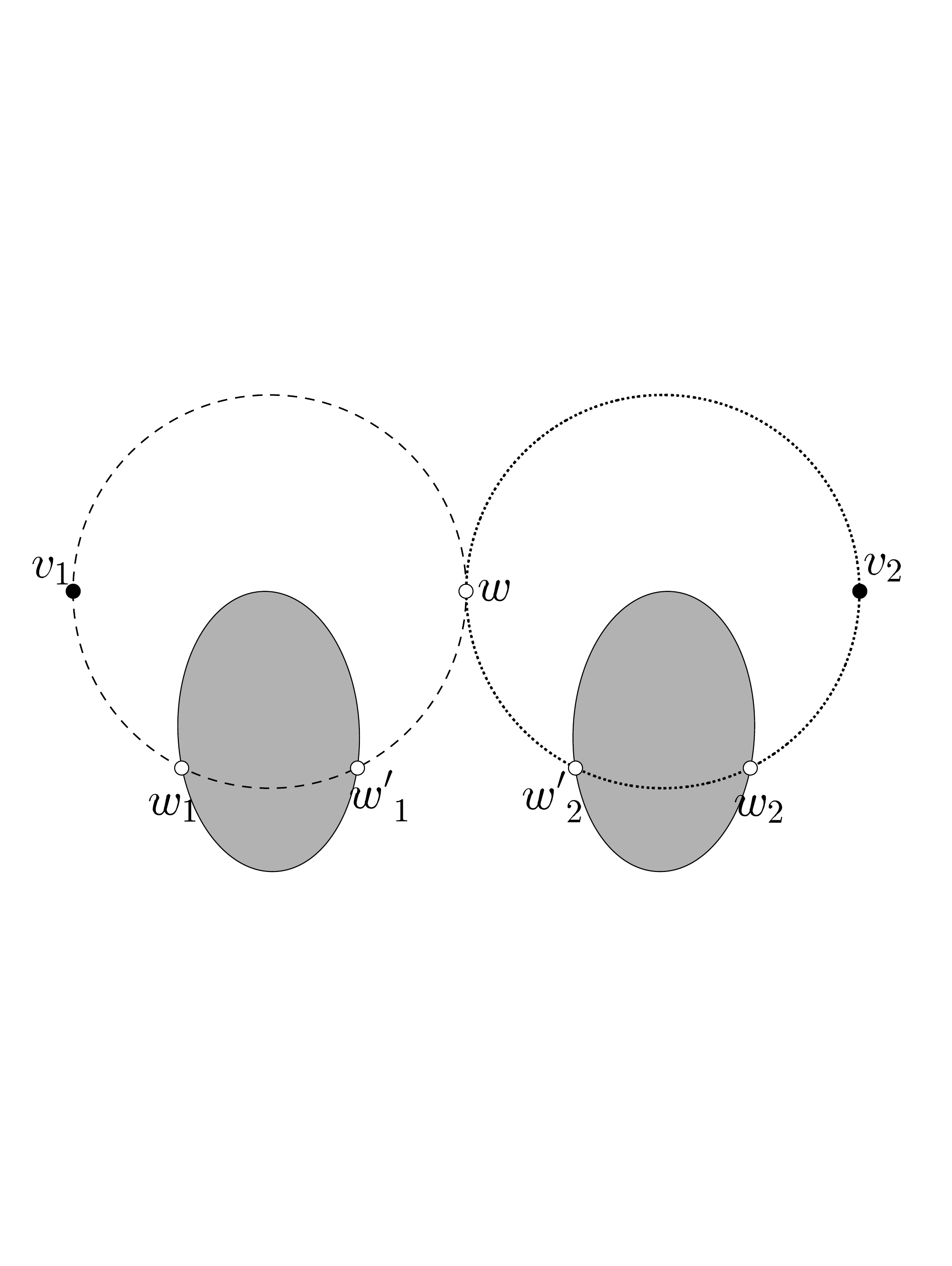}}
  \end{minipage}
  \begin{minipage}[b]{.33\textwidth}
    \centering
    \subfloat[\label{fig:gadget-circles_b-2}{}]
    {\includegraphics[width=\textwidth]{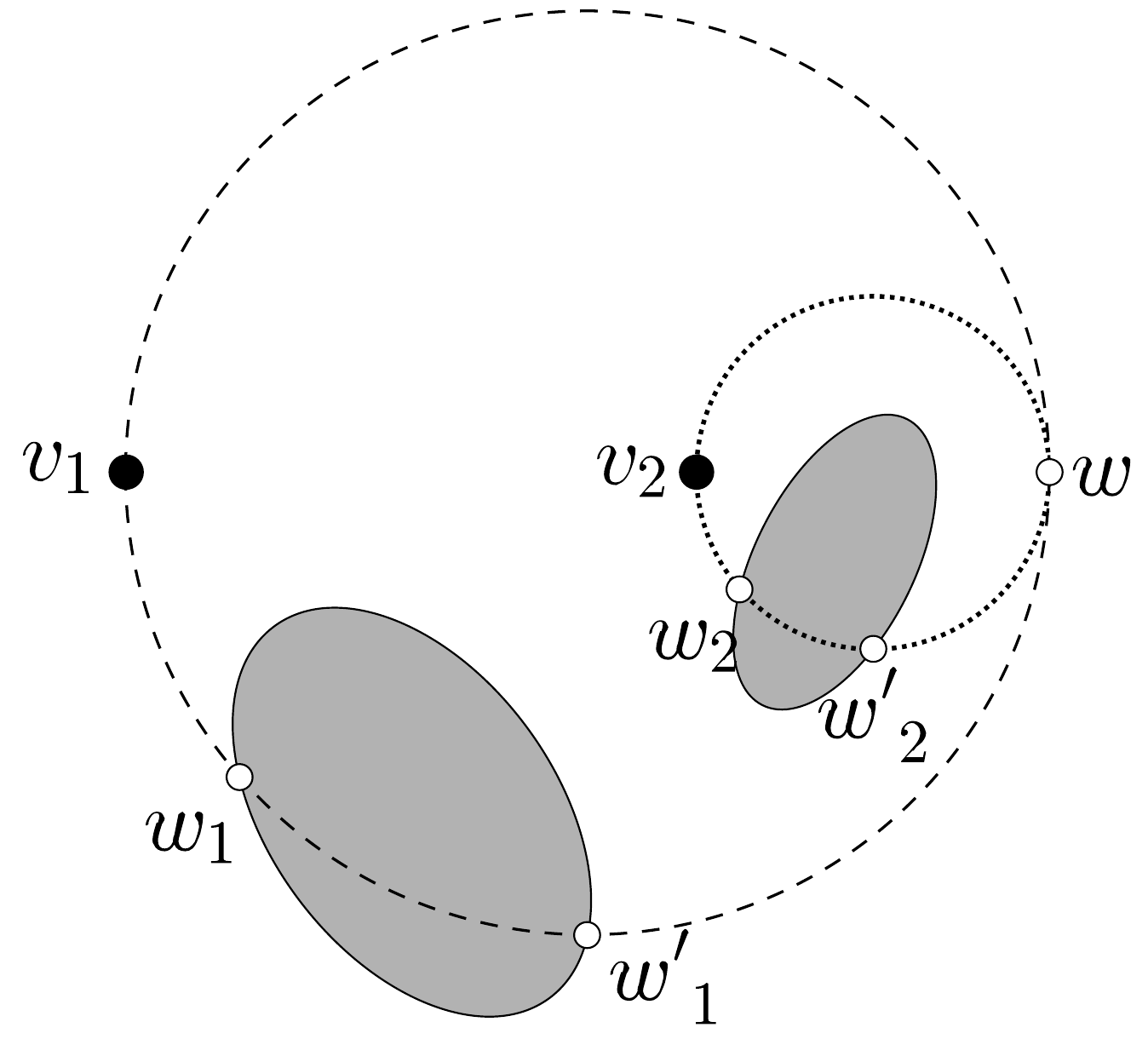}}
  \end{minipage}
  \caption{Non-equivalent realizations of the bigadget-subgraph as system of circles: (a)~the two circles representing the skeleton have disjoint interior, and (b)~one circle lies in the interior of the other.}
  \label{fig:gadget-circles-2}
\end{figure}

\begin{corollary}
In any realization of $BG_{oct}^{aug}$ as a system of circles, the
realizations of each bigadget-subgraph and the realization of
$G^{sub}_{oct}$ are independent. \label{cor:independent-goct-2}
\end{corollary}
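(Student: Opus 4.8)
The plan is to mirror the proof of Lemma~\ref{lem:independent} (together with the passage from it to Corollary~\ref{cor:independent-goct}), with Lemma~\ref{lem:gadget-realization-2} now playing the role of Lemma~\ref{lem:gadget-realization}. First I would fix an arbitrary realization $R$ of $BG_{oct}^{aug}$ as a system of circles and single out one copy of a bigadget-subgraph; its endpoints $v_1$ and $v_2$ are, by construction, subdivision (degree-$2$) vertices of $G^{sub}_{oct}$, and since $BG_{oct}^{aug}$ is planar they lie on the boundary of a common face of $G^{sub}_{oct}$. We are therefore exactly in the situation of Lemma~\ref{lem:gadget-realization-2}, which tells us that in $R$ the skeleton of this bigadget-subgraph is realized by two circles $C_1,C_2$ tangent at $w$, where $C_i$ passes through $v_i,w_i,w'_i,w$ and carries the arcs realizing $(v_i,w)$, $(v_i,w_i)$ and $(w'_i,w)$.

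The heart of the argument is the same degree count as in Lemma~\ref{lem:independent}. In $BG_{oct}^{aug}$ the vertex $v_i$ has degree $4$: its two incident edges inside the bigadget-subgraph, $(v_i,w)$ and $(v_i,w_i)$, both lie on $C_i$ by Lemma~\ref{lem:gadget-realization-2}, while its other two incident edges $(u_i,v_i)$ and $(v_i,u'_i)$, where $u_i,u'_i$ are the neighbours of $v_i$ in $G^{sub}_{oct}$, belong to $G^{sub}_{oct}$. Since every vertex lies on exactly two circles and one of them is $C_i$, these last two edges lie on a single second circle $D_i$ through $v_i$; and because $C_1$ and $C_2$ are tangent, hence meet only at $w\neq v_i$, we have $v_i\notin C_j$ for $j\neq i$, so $D_i\notin\{C_1,C_2\}$. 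Now the only vertices common to the bigadget-subgraph and to $G^{sub}_{oct}$ are $v_1$ and $v_2$, and the two share no edge; hence any circle carrying both a bigadget-edge and a $G^{sub}_{oct}$-edge would, tracing around it, have to pass through $v_1$ or $v_2$ and thus coincide with some $C_i$ or $D_i$ — impossible, since $C_i$ carries only the three bigadget-arcs named above and the two edges of $D_i$ at $v_i$ are both in $G^{sub}_{oct}$. Consequently, deleting $C_1,C_2$ together with the circles realizing the two biloop-subgraphs of every bigadget-subgraph leaves a system of circles realizing $G^{sub}_{oct}$, which is exactly the asserted independence.

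The step I expect to need the most care is the single exceptional bigadget-subgraph that may be incident to the outer face of $BG_{oct}^{aug}$, where one of $C_1,C_2$ contains the whole graph rather than just a biloop-subgraph (Fig.~\ref{fig:gadget-circles_b-2} instead of Fig.~\ref{fig:gadget-circles_a-2}); here I would check that Lemma~\ref{lem:gadget-realization-2} is stated with no assumption on nesting, so its conclusion, and with it the degree count at $v_i$, is untouched and only the shapes of the faces change. A second, minor subtlety is that, unlike the gadget-subgraphs of Section~\ref{sec:connected}, the vertices $w_i,w'_i$ of the bigadget are not cut-vertices, so the biloop circles cannot be isolated by a cut-vertex argument; instead one relies purely on Lemma~\ref{lem:gadget-realization-2} forcing the skeleton to be $C_1\cup C_2$ (hence every biloop-edge onto a circle other than $C_1,C_2$), together with the observation that each internal biloop vertex has all four of its edges inside its own biloop-subgraph, so every circle through it carries only biloop-edges. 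Finally, since the whole argument is local to one bigadget-subgraph and the bigadget-subgraphs of $BG_{oct}^{aug}$ are attached along internally disjoint paths of $G^{sub}_{oct}$, applying it to each of them simultaneously yields Corollary~\ref{cor:independent-goct-2}.
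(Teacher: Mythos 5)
Your proposal is correct and follows essentially the same route as the paper, which leaves this corollary as the ``similarly proved'' analogue of Lemma~\ref{lem:independent} and Corollary~\ref{cor:independent-goct}, with Lemma~\ref{lem:gadget-realization-2} substituted for Lemma~\ref{lem:gadget-realization}; your extra care about the outer-face configuration and the biloop-internal vertices only makes explicit what the paper glosses over. (One phrasing nit: $C_i$ carries not just the three named skeleton arcs but also a chain of biloop arcs between $w_i$ and $w'_i$ --- still all bigadget edges, as your own closing observation shows.)
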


Lemma~\ref{lem:gadget-realization-2} and
Corollary~\ref{cor:independent-goct-2} allow us to prove an
analogous of Lemma~\ref{lem:circle-basic} (where bigadget-subgraphs
are used instead of gadget-subgraphs). That, together with
Lemma~\ref{lem:face}, allows us to give an analogue of
Theorem~\ref{thm:counter}.

\begin{theorem}
There exists a biconnected 4-regular planar graph that does not
admit a realization as a system of circles. \label{thm:counter-2}
\end{theorem}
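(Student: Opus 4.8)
The plan is to mimic Section~\ref{sec:connected} verbatim, replacing the gadget-subgraph by the bigadget-subgraph throughout. First I would build the graph $BG_{oct}^{aug}$: start from the subdivision $G^{sub}_{oct}$ of $G_{oct}$ in which every edge $(u,v)$ is replaced by a path $u\to z_1\to\cdots\to z_8\to v$, and attach four bigadget-subgraphs along each such path exactly as in Fig.\ref{fig:edge_split} (endpoints $\{z_1,z_6\}$, $\{z_2,z_5\}$, $\{z_3,z_8\}$, $\{z_4,z_7\}$). Since $G^{sub}_{oct}$ is biconnected (a subdivision of the $3$-connected $G_{oct}$) and each bigadget-subgraph is itself biconnected and is glued to it at two distinct vertices lying on a common face, $BG_{oct}^{aug}$ is a simple biconnected $4$-regular planar graph. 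It therefore suffices to show that $BG_{oct}^{aug}$ admits no realization as a system of circles.

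Suppose for contradiction that such a realization exists. By Corollary~\ref{cor:independent-goct-2}, no circle carries arcs of both a bigadget-subgraph and of $G^{sub}_{oct}$; deleting the circles carrying arcs of the bigadget-subgraphs thus leaves a realization of $G^{sub}_{oct}$, and — ignoring the degree-$2$ subdivision vertices, which the notion of equivalence disregards — this is a realization of $G_{oct}$, hence by Lemma~\ref{lem:octahedron} equivalent to one of the three shown in Figs.\ref{fig:octahedron_1}--\ref{fig:octahedron_3}. As in Section~\ref{sec:connected}, in the induced planar embedding of $G_{oct}$ there is a triangular ``innermost interior'' face sharing neither a vertex nor an edge with the outer face, and by Lemma~\ref{lem:face} one of its edges $e=(u,v)$ is drawn as an arc $\arc{uv}$ of a circle $C(O,r)$ with $\widehat{uOv}=\phi<\pi$.

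The core of the proof is the analogue of Lemma~\ref{lem:circle-basic} for bigadget-subgraphs. By Lemma~\ref{lem:gadget-realization-2}, each of the four bigadget-subgraphs attached along the subdivision of $e$ is realized by a pair of circles $C_1,C_2$ tangent at $w$; because $e$ lies on the innermost interior face, neither circle can be the outer face of $BG_{oct}^{aug}$, so (as discussed around Fig.\ref{fig:gadget-circles-2}) the pair has disjoint interiors, i.e.\ the configuration of Fig.\ref{fig:gadget-circles_a-2}. Moreover each circle of such a pair is tangent to $C$ at the corresponding endpoint $z_i$: a crossing at $z_i$ would push part of the bigadget-subgraph across $\arc{uv}$ into a different face, contradicting planarity of the embedding. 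Hence the four pairs sit relative to $\arc{uv}$ exactly in the configuration of Lemma~\ref{lem:circle-lemma}. Applying Lemma~\ref{lem:circle-lemma} to the pair of bigadget-subgraphs with endpoints $z_1,z_6$ and $z_2,z_5$ gives $\arc{z_2z_5}<\arc{z_1z_2}$ and $\arc{z_2z_5}<\arc{z_5z_6}$, and to the pair with endpoints $z_3,z_8$ and $z_4,z_7$ gives $\arc{z_4z_7}<\arc{z_3z_4}$ and $\arc{z_4z_7}<\arc{z_7z_8}$; combining these with $\arc{z_iz_j}\le\arc{z_{i'}z_{j'}}$ for $i'\le i\le j\le j'$ yields $\arc{z_4z_7}<\arc{z_3z_4}\le\arc{z_2z_5}<\arc{z_5z_6}\le\arc{z_4z_7}$, that is $\arc{z_4z_7}<\arc{z_4z_7}$, a contradiction. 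Thus $BG_{oct}^{aug}$ has no realization as a system of circles, proving the theorem (and, exactly as in Theorem~\ref{thm:class-counter}, attaching more pairs of bigadget-subgraphs per edge or varying the biloop-subgraphs would promote this to an infinite family).

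The step I expect to be the main obstacle is the geometric bookkeeping in the third paragraph: one must check that the larger skeleton of the bigadget-subgraph (the extra vertices $w'_1,w'_2$ and the edges $(w_i,w'_i)$) still forces, via Lemma~\ref{lem:gadget-realization-2}, exactly the two hypotheses Lemma~\ref{lem:circle-lemma} needs — two internally disjoint circles per bigadget, each \emph{tangent} (not crossing) to $C$ at its endpoint — and that no bigadget on the innermost face can be forced into the nested configuration of Fig.\ref{fig:gadget-circles_b-2}. Once that is verified, the arc-length counting is identical to the proof of Lemma~\ref{lem:circle-basic}.
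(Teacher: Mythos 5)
Your proof takes exactly the paper's route: the paper itself leaves Theorem~\ref{thm:counter-2} as a sketch (Lemma~\ref{lem:gadget-realization-2} and Corollary~\ref{cor:independent-goct-2} yield the bigadget analogue of Lemma~\ref{lem:circle-basic}, which combines with Lemma~\ref{lem:face} exactly as in Theorem~\ref{thm:counter}), and you have correctly written out that sketch, including the tangency-at-$z_i$ point that the paper likewise asserts without elaboration in the proof of Lemma~\ref{lem:circle-basic}. One harmless slip in your first paragraph: the bigadget-subgraph is not itself biconnected ($w$ is a cut-vertex of it taken in isolation); $BG_{oct}^{aug}$ is nevertheless biconnected because the two pieces created by deleting $w$ are rejoined through $G^{sub}_{oct}$ via $v_1$ and $v_2$.
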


In order to prove that there exist infinitely many biconnected
4-regular planar graphs that do not admit realizations as system of
circles, one can attach more than two pairs of bigadget-subgraphs to
every edge of $G_{oct}$, or replace the biloops-subgraphs by any
simple biconnected planar graph (in which every vertex is of degree
4 except for exactly two vertices on the outerface that are of
degree 3), which leads to the following conclusion.

\begin{theorem}
There exists an infinite class of biconnected 4-regular planar
graphs that do not admit a realization as a system of circles.
\label{thm:class-counter-2}
\end{theorem}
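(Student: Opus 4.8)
The plan is to run exactly the lifting argument that took us from Theorem~\ref{thm:counter} to Theorem~\ref{thm:class-counter}, now with bigadget-subgraphs in place of gadget-subgraphs. Theorem~\ref{thm:counter-2} already gives one simple biconnected $4$-regular planar graph, $BG_{oct}^{aug}$, admitting no realization as a system of circles; it remains to place this single obstruction inside a countably infinite family of pairwise non-isomorphic graphs of the same type.

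I would describe two perturbation schemes, either of which suffices. \emph{Scheme~A}: for each integer $\ell\ge 2$, form $BG_{oct,\ell}^{aug}$ from $G_{oct}$ by replacing every edge with a sufficiently long subdivision path and then attaching $\ell$ nested pairs of bigadget-subgraphs along it, in the same nesting pattern as for $BG_{oct}^{aug}$ (which is the case $\ell=2$); distinct values of $\ell$ give graphs of distinct order, hence non-isomorphic graphs. \emph{Scheme~B}: keep two pairs per edge, but replace each biloop-subgraph by an arbitrary simple biconnected planar graph meeting the stated degree condition (all vertices of degree $4$ except two degree-$3$ vertices on the outer face), identified with the abstract bi-loops at $w_i$ and $w'_i$ of Fig.~\ref{fig:gadget-abstract-2}. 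For either scheme I would verify the four routine invariants: planarity is immediate from the construction; $4$-regularity holds because each subdivision vertex identified with a bigadget endpoint $v_j$ goes from degree $2$ to degree $4$, while the degree condition together with the skeleton edges makes each $w_i$ and $w'_i$ have degree $4$; simplicity holds once the subdivision paths are taken long enough, exactly as in Section~\ref{sec:connected}; and, most importantly, the graph stays biconnected because the bigadget-subgraph has no cut-vertex and is attached to the rest of the graph only along its two endpoints, so no new cut-vertex can appear.

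Finally I would argue that non-realizability transfers. The crucial point --- already highlighted in the text preceding Lemma~\ref{lem:gadget-realization-2} --- is that Lemma~\ref{lem:gadget-realization-2} and Corollary~\ref{cor:independent-goct-2} use only the degree condition on the biloop-subgraphs and nothing about their internal structure, so both hold verbatim in Schemes~A and~B. Hence in any hypothetical realization each bigadget skeleton is a pair of tangent circles, all but at most one of these pairs have disjoint interiors (Fig.~\ref{fig:gadget-circles_a-2}), and the realization of $G^{sub}_{oct}$ obtained by deleting the bigadget interiors is equivalent to one of the three realizations of $G_{oct}$ of Lemma~\ref{lem:octahedron}. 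Lemma~\ref{lem:face} then yields an edge $e=(u,v)$ of the innermost interior face lying on a circle $C(O,r)$ with $\widehat{uOv}<\pi$, and the analogue of Lemma~\ref{lem:circle-basic}, applied to the (two or more) nested pairs of bigadget-subgraphs attached to $e$, produces the same strictly decreasing chain of arc lengths and hence the contradiction $\arc{z_4z_7}<\arc{z_4z_7}$. As each scheme gives infinitely many such graphs, Theorem~\ref{thm:class-counter-2} follows. The main obstacle I anticipate is not the final contradiction (identical to the connected case) but the bookkeeping for Scheme~B --- exhibiting infinitely many admissible biloop-subgraphs and checking that biconnectivity survives the substitution; Scheme~A avoids this, which is why I would rely on it for the infinitude.
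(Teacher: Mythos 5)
Your proposal matches the paper's argument: the paper likewise obtains the infinite family either by attaching more than two pairs of bigadget-subgraphs to every edge of $G_{oct}$ or by replacing the biloop-subgraphs with arbitrary simple biconnected planar graphs satisfying the stated degree condition, and then invokes the analogues of Lemma~\ref{lem:gadget-realization-2}, Corollary~\ref{cor:independent-goct-2}, Lemma~\ref{lem:circle-basic} and Lemma~\ref{lem:face} exactly as you do. Your additional verification of planarity, $4$-regularity, simplicity and biconnectivity only makes explicit what the paper leaves implicit.
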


\section{Conclusion - Open Problems}
\label{sec:conclusions}

In this paper, we proved that every 3-connected 4-regular planar
graph admits a realization as a system of touching circles. We also
demonstrated that there exist 4-regular planar graphs which are not
3-connected (i.e., either connected or biconnected) and do not admit
realizations as system of circles. However, our work raises several
open problems. In the following, we name only few of them:

\begin{itemize}
  \item What is the computational complexity of the corresponding
  decision problem, i.e., does a given connected 4-regular planar
  graph admit a realization as a system of circles?
  \item Which is the smallest connected 4-regular planar graph not
  admitting a realization as a system of circles? The ones we manage
  to construct consist of more than 100 vertices.
  \item The octahedron graph admits non-equivalent realizations as system
  of circles, in which the number of circles participating in the
  corresponding realizations also differs. In general, an $n$-vertex
  4-regular planar graph needs at least $(1+\sqrt{1+4n})/2$ and at most
  $2n/3$ circles in order to be realized as a system of circles, as shown
  in Section~\ref{sec:Bounds}. So, what is the range of the number of
  circles needed in order to realize a given (3-connected) 4-regular
  planar graph as a system of circles?
  \item In the context of graph realizations as system of circles,
  it would be interesting to study the class of Eulerian planar graphs.
  Obviously, certain vertices would be defined as the intersection of
  more than two circles.
\end{itemize}

\section*{Acknowledgments}
The authors would like to thank the anonymous reviewers for useful
suggestions and comments that helped in improving the readability of
the paper.

\bibliographystyle{plain}
\bibliography{references}

\end{document}